\theoremstyle{definition}
\newtheorem{theorem}{Theorem}
\newtheorem{definition}{Definition}
\newtheorem{corollary}{Corollary}
\DeclareMathOperator{\expect}{E}
\DeclareMathOperator*{\diag}{diag}
\DeclareMathOperator{\argmin}{argmin}
\DeclareMathOperator{\argmax}{argmax}
\title{ApproxED: Approximate exploitability descent via learned best responses}
\author[1]{Carlos Martin}
\author[1,2,3,4]{Tuomas Sandholm}
\affil[ ]{\{cgmartin, sandholm\}@cs.cmu.edu}
\affil[1]{Carnegie Mellon University}
\affil[2]{Strategy Robot, Inc.}
\affil[3]{Optimized Markets, Inc.}
\affil[4]{Strategic Machine, Inc.}
\date{}
\begin{document}
\maketitle
\begin{abstract}
There has been substantial progress on finding game-theoretic equilibria.
Most of that work has focused on games with finite, discrete action spaces.
However, many games involving space, time, money, and other fine-grained quantities have continuous action spaces (or are best modeled as having such).
We study the problem of finding an approximate Nash equilibrium of games with continuous action sets.
The standard measure of closeness to Nash equilibrium is exploitability, which measures how much players can benefit from unilaterally changing their strategy.
We propose two new methods that minimize an approximation of exploitability with respect to the strategy profile.
The first method uses a learned best-response function, which takes the current strategy profile as input and outputs candidate best responses for each player.
The strategy profile and best-response functions are trained simultaneously, with the former trying to minimize exploitability while the latter tries to maximize it.
The second method maintains an ensemble of candidate best responses for each player.
In each iteration, the best-performing elements of each ensemble are used to update the current strategy profile.
The strategy profile and ensembles are simultaneously trained to minimize and maximize the approximate exploitability, respectively.
We evaluate our methods on various continuous games and GAN training, showing that they outperform prior methods.
\end{abstract}

\section{Introduction}

Most work concerning equilibrium computation has focused on games with finite, discrete action spaces.
However, many games involving space, time, money, \emph{etc.} have continuous action spaces.
Examples include continuous resource allocation games~\citep{Ganzfried_2021}, security games in continuous spaces~\citep{Kamra_2017, Kamra_2018, Kamra_2019}, network games~\citep{Ghosh_2019}, military simulations and wargames~\citep{Marchesi20:Learning}, and video games~\citep{Berner19:Dota,Vinyals19:Grandmaster}.
Moreover, even if the action space is discrete, it can be fine-grained enough to treat as continuous for the purpose of computational efficiency~\citep{Borel38:Traite,Chen06:Mathematics,Ganzfried10:Computing}.

As the goal, we use the standard solution concept of a \emph{Nash equilibrium (NE)}, that is, a strategy profile for which each strategy is a best response to the other players' strategies.
The main measure of closeness to NE is \emph{exploitability}, which measures how much players can benefit from unilaterally changing their strategy.
Typically, we seek an exact NE, that is, a strategy profile for which exploitability is zero.
As some prior work in the literature, we can try to search for NE by performing gradient descent on exploitability, since it is non-negative and its zero set is precisely the set of NE.
However, evaluating exploitability requires computing best responses to the current strategy profile, which is itself a nontrivial problem in many games.

We present two new methods that minimize an approximation of the exploitability with respect to the strategy profile.
For the first method, we compute this approximation using \emph{learned best-response functions}, which take the strategy profile as input and return predicted best responses.
We train the strategy profile and best-response functions simultaneously, with the former trying to minimize exploitability while the latter try to maximize it.
The second method maintains and updates a set of \emph{best-response ensembles} for each player.
Each ensemble maintains multiple candidate best responses to the current strategy profile for that player.
In each iteration, the best-performing element of each ensemble is used to update the current strategy profile.
The strategy profile and best-response ensembles are simultaneously trained to minimize and maximize the approximate exploitability, respectively.
Our experiments on various continuous games show that our techniques outperform prior approaches.

The rest of the paper is structured as follows.
In \S\ref{sec:formulation}, we introduce terminology and formulate the problem.
In \S\ref{sec:related_work}, we present related work.
In \S\ref{sec:method}, we present our methods.
In \S\ref{sec:experiments}, we present our experiments.
Finally, in \S\ref{sec:conclusion}, we present conclusions and discuss directions for future research.
In the appendix, we present additional figures, our theoretical analysis, additional related work, and code.

\section{Problem formulation}
\label{sec:formulation}

We use the following notation.
Hereafter, \(\triangle \mathcal{X}\) is the set of probability distributions on a space \(\mathcal{X}\), \([n] = \{0, \ldots, n - 1\}\) is the set of natural numbers less than a natural number \(n \in \mathbb{N}\), \(|\mathcal{X}|\) is the cardinality of a set \(\mathcal{X}\), \(\mathbb{S}_n\) is the unit \(n\)-sphere, and \( \llbracket \phi \rrbracket\) is an Iverson bracket (1 if \(\phi\) and 0 otherwise).

A \emph{game} is a tuple \((\mathcal{I}, \mathcal{X}, u)\) where \(\mathcal{I}\) is a set of players, \(\mathcal{X}_i\) is a strategy set for player \(i\), and \(u_i : \prod_i \mathcal{X}_i \to \mathbb{R}\) is a utility function for player \(i\).
A strategy profile \(x \in \prod_i \mathcal{X}_i\) is an assignment of a strategy to each player.
A game is zero-sum if \(\sum_{i \in \mathcal{I}} u_i = 0\).
Given a strategy profile \(x\), Player \(i\)'s regret is \(R_i(x) = \sup_{y_i \in \mathcal{X}_i} u_i(y_i, x_{-i}) - u_i(x)\), where \(x_{-i}\) denotes the other players' strategies.
It is the highest utility Player \(i\) could gain from unilaterally changing its strategy.
A strategy profile \(x\) is an \(\varepsilon\)-equilibrium if \(\sup_{i \in \mathcal{I}} R_i(x) \leq \varepsilon\).
A 0-equilibrium is called a \emph{Nash equilibrium (NE)}.
In an NE, each player's strategy is a best response to the other players' strategies, that is, \(u_i(x) \geq u_i(y_i, x_{-i})\) for all \(i \in \mathcal{I}\) and \(y_i \in \mathcal{X}_i\).
Conditions for existence and uniqueness of NE can be found in the appendix.

The standard measure of closeness to NE is exploitability, also known as NashConv~\citep{Lanctot17:Unified, Lockhart_2019, Walton_2021, Timbers_2022}.
It is defined as \(\Phi = \sum_{i \in \mathcal{I}} R_i\).
(In a two-player zero-sum game, \(\Phi\) reduces to the so-called \textit{duality gap} \citep{Grnarova_2021}.)
It is non-negative everywhere and zero precisely at NE.
Thus finding an NE is equivalent to minimizing exploitability~\citep{Lockhart_2019}.

Let \(\phi(x, y) = \sum_{i \in \mathcal{I}} \mleft( u_i(y_i, x_{-i}) - u_i(x) \mright)\) be the \emph{Nikaido-Isoda (NI)} function \citep{Nikaido_1955,Flam_1996,Flam_2008,Hou_2018}.
Since \(\Phi(x) = \sup_y \phi(x, y)\), finding an NE is equivalent to solving the min-max problem \(\inf_x \sup_y \phi(x, y)\).
Some prior work has used this function to search for NE \citep{Berridge_1970,Uryasev_1994,Krawczyk_2000,Krawczyk_2005,Flam_2008,Gurkan_2009,vonHeusinger_2009a,vonHeusinger_2009b,Qu_2013,Hou_2018,Raghunathan_2019,Tsaknakis_2021}.

\section{Related work}
\label{sec:related_work}

In this section we review the prior methods for solving continuous games, which we use as baselines in our experiments.
They can be characterized by the \textit{ordinary differential equations (ODEs)} shown in Table \ref{tab:odes}.
Here, \(v = \diag \nabla u\) is the \emph{simultaneous gradient}.
Each component \(v_i = \nabla_i u_i\) is the gradient of a player's utility with respect to their strategy.
It is therefore a vector field on the space of strategy profiles.
The fact that this vector field need not be conservative (that is, the gradient of some potential), like it is in ordinary gradient descent, is the main source of difficulties for applying standard gradient-based optimization methods, since trajectories can cycle around fixed points rather than converging to them.
Additionally, \(J = \nabla v\) is the Jacobian of the vector field \(v\), \(J^\top\) is its transpose, \(J_a = \tfrac{1}{2}(J - J^\top)\) is its antisymmetric part, and \(J_o\) is its off-diagonal part (replacing its diagonal with zeroes).
Dots indicate derivatives with respect to time, \(\gamma > 0\) is a hyperparameter, and \(v|_y\) denotes \(v\) evaluated at \(y\) rather than \(x\).
The actual optimization is done by discretizing each ODE in time.
For example, SG is discretized as \(x_{i+1} = x_i + \eta v_i\) and OP is discretized as \(x_{i+1} = x_i + \eta v_i + \gamma (v_i - v_{i-1})\), where \(\eta > 0\) is a stepsize and \(v_j = v(x_j)\).

\begin{table}
    \centering
    \begin{tabular}{ll}
        \hline
        SG & \(v\) \\
        EG & \(v|_{x + \gamma v}\) \\
        OP & \((I + \gamma \tfrac{\mathrm{d}}{\mathrm{d}t}) v = v + \gamma \dot{v}\) \\
        CO & \((I - \gamma J^\top) v = v - \gamma \nabla \tfrac{1}{2} \|v\|^2\) \\
        SGA & \((I - \gamma J_a^\top) v\) \\
        SLA & \((I + \gamma J) v\) \\
        LA & \((I + \gamma J_o) v\) \\
        LOLA & \((I + \gamma J_o) v - \gamma \diag J_o^\top \nabla u\) \\
        LSS & \((I + J^\top J^{-1}) v\) \\
        PCGD & \((I - \gamma J_o)^{-1} v\) \\
        ED & \(-\nabla_x \sup_y \phi(x, y) = -\nabla_x \Phi(x)\) \\
        GNI & \(-\nabla_x \phi(x, x + \gamma v)\) \\
        \hline
    \end{tabular}
    \caption{Value of \(\dot{x}\) in each method's ODE.}
    \label{tab:odes}
\end{table}

\emph{Simultaneous gradients (SG)} maximizes each player's utility independently, as if the other players are fixed.
\emph{Extragradient (EG)} \citep{Korpelevich_1976} takes a step in the direction of the simultaneous gradient and uses the simultaneous gradient at that new point to take a step from the original point.
\citet{golowich2020last} proved a tight last-iterate convergence guarantee for EG.
\emph{Optimistic gradient (OP)} \citep{Popov1980,Daskalakis_2017,Hsieh_2019} uses past gradients to predict future gradients and update according to the latter.
\emph{Consensus optimization (CO)} \citep{Mescheder_2017} penalizes the magnitude of the simultaneous gradient, encouraging ``consensus'' between players that attracts them to fixed points.
\emph{Symplectic gradient adjustment (SGA)} \citep{Balduzzi_2018} (also known as Crossing-the-Curl \citep{gemp2018global}) reduces the rotational component of game dynamics by using the antisymmetric part of the Jacobian.
\emph{Lookahead (LA)} \citep{Zhang_2010} excludes the diagonal components of the Jacobian.
Each player predicts the behaviour of other players after a step of naive learning, but assumes this step will occur independently of the current optimisation.
In \emph{symmetric lookahead (SLA)} \citep{Letcher_2018b}, instead of best-responding to opponents' learning, each player responds to \emph{all} players learning, including themselves.
It is a linearized version of EG \citep[Lemma 1.35]{Domingo_2019}.
In \emph{learning with opponent-learning awareness (LOLA)} \citep{Foerster_2018}, a learner optimises its utility under one step look-ahead of opponent learning.
Instead of optimizing utility under the current parameters, it optimises utility after the opponent updates its policy with one naive learning step.
\citet{Mazumdar_2019} proposed \emph{local symplectic surgery (LSS)} to find local NE in two-player zero-sum games.
It requires solving a linear system on each timestep, which is prohibitive for high-dimensional parameter spaces.
Hence, its authors propose a two-timescale approximation that updates the strategy profile while simultaneously improving an approximate solution to the linear system.
\emph{Competitive gradient descent (CGD)} \citep{Schaefer_2019} naturally generalizes gradient descent to the two-player setting.
On each iteration, it jumps to the NE of a quadratically-regularized bilinear local approximation of the game.
Its convergence and stability properties are robust to strong interactions between the players without adapting the stepsize.
\emph{Polymatrix competitive gradient descent (PCGD)} \citep{Ma_2021} generalizes CGD to more than two players.
It jumps to the NE of a quadratically-regularized local polymatrix approximation of the game.
The series expansion of PCGD to zeroth and first order in \(\gamma\) yields SG and LA \citep[Proposition 4.4]{Willi_2022}, respectively,  since \((I - \gamma M)^{-1} = I + \gamma M + \gamma^2 M^2 + \ldots\) for sufficiently small \(\gamma\).
CGD and PGCD require solving a linear system of equations on each iteration, which is prohibitive for high-dimensional parameter spaces \citep[p. 10]{Ma_2021}.
\emph{Exploitability descent (ED)} \citep{Lockhart_2019} directly minimizes exploitability, and converges to approximate equilibria in two-player zero-sum extensive-form games.
However, it requires computing best responses \(y\) on each iteration, which is inefficient and/or intractable in general games.
\emph{Gradient-based Nikaido-Isoda (GNI)} \citep{Raghunathan_2019} minimizes a local approximation of exploitability that uses local best responses \(y = x + \gamma v\).
\citet{goktas2022exploitability} recast the exploitability-minimization problem as a min-max optimization problem and obtain polynomial-time first-order methods for computing variational equilibria in convex-concave cumulative regret pseudo-games with jointly convex constraints.
They present two algorithms called \emph{extragradient descent ascent (EDA)} and \emph{augmented descent ascent (ADA)}.
We benchmark against EDA but not ADA because, unlike the other baselines, it requires \emph{multiple} substeps of gradient ascent \emph{per timestep} to approximate a best response.
(As the authors note, ``ADA's accuracy depends on the accuracy of the best-response found''.)
The methods presented in this section have been analyzed in prior work \citep{Balduzzi_2018, Letcher_2018, Letcher_2019, Mertikopoulos_2019, Grnarova_2019, Mazumdar_2019, Hsieh_2021, Willi_2022}.
Due to space constraints, we describe additional related research in the appendix.

\section{Proposed methods}
\label{sec:method}

We are given a utility function \(u\) and our goal is to find an NE.
Since the exploitability function \(\Phi(x) = \sup_y \phi(x, y)\) is non-negative everywhere, and zero precisely at NE, we reformulate the problem of finding an NE as \emph{finding a global minimum of the exploitability function}.
That is, we wish to solve the min-max optimization problem \(\inf_x \sup_y \phi(x, y)\).
This is equivalent to finding a minimally-exploitable strategy for a two-player zero-sum meta-game with utility function \(\phi\).

To find a minimum, we could try performing gradient decent on \(\Phi\), like ED.\footnote{For ease of exposition, we assume that utility functions are differentiable. If they are not differentiable, we can replace any gradient with a \emph{pseudogradient}, which is the gradient of a smoothened version of the function (\emph{e.g.}, the function convolved with a narrow Gaussian).
An unbiased estimator for this pseudogradient can be obtained by evaluating the function at randomly-sampled perturbed points and using their values to approximate directional derivatives along those directions \citep{Duchi_2015, Nesterov_2017, Shamir_2017, Salimans_2017, Berahas_2022, metz2021gradients}.}
However, the latter requires best-response oracles.
To solve this problem, we can try to perform gradient descent on \(x\) and \(y\) simultaneously: \(\dot{x} = \nabla_x \phi(x, y) \), \(\dot{y} = -\nabla_y \phi(x, y)\).
Unfortunately, this approach can fail even in simple games.
For example, consider the simple bilinear game with \(u(x, y) = x y\).
The unique Nash equilibrium is at the origin.
However, simultaneous gradient descent fails to converge to it, and instead cycles around it indefinitely.
The essence of this cycling problem is that Player 2 has to ``relearn'' a good response to Player 1 every time the Player 1's strategy switches sign.
This is a general problem for games.
``Small'' changes in other players' strategies can cause ``large'' (discontinuous) changes in a player's best response.
When such changes occur, players have to ``relearn'' how to respond to the other players' strategies.
We propose two methods to tackle this problem.
These are described in the next two subsections, respectively.

\subsection{Best-response functions}
We reformulate the problem as minimizing \(\phi(x, b^*(x))\) with respect to \(x\), where \(b^* : \mathcal{X} \to \mathcal{Y}\) is a \emph{function} that satisfies \(b^*(x) \in \argmax_{y \in \mathcal{Y}} \phi(x, y)\).
Since \(b^*\) is a \emph{function}, it can map different strategies for Player 1 to different strategies for Player 2.
Thus it can \emph{immediately} adapt to Player 1's strategy and avoid the cycling problem.

More precisely, suppose \(\mathcal{Y}\) is compact and \(\phi : \mathcal{X} \times \mathcal{Y} \to \mathbb{R}\) is continuous in its second argument.
Let \(x \in \mathcal{X}\).
By the extreme value theorem, a continuous real-valued function on a non-empty compact set attains its extrema.
Therefore, there exists \(y \in \mathcal{Y}\) such that \(\phi(x, y) = \sup_{y \in \mathcal{Y}} \phi(x, y)\).
Since this is true for every \(x \in \mathcal{X}\), there exists a function \(b^* : \mathcal{X} \to \mathcal{Y}\) such that, for every \(x \in \mathcal{X}\), \(\phi(x, b^*(x)) = \sup_{y \in \mathcal{Y}} \phi(x, y)\).
That is, \(b^*\) is a best-response function.

Even when $\mathcal{Y}$ is not compact and \(\phi\) does not attain its extrema, one can define a best-response \emph{value} for any \(x \in \mathcal{X}\) as \(\sup_{y \in \mathcal{X}} \phi(x, y)\), provided the latter exists.
In that case, we have the following.
Let \(\varepsilon > 0\) and \(x \in \mathcal{X}\).
Any function gets arbitrarily close to its supremum (continuity is not required).
Therefore, there exists a \(y \in \mathcal{Y}\) such that \(\phi(x, y) + \varepsilon \geq \sup_{y \in \mathcal{Y}} \phi(x, y)\).
Therefore, there exists a function $b_\varepsilon : \mathcal{X} \to \mathcal{Y}$ such that, for every $x \in \mathcal{X}$, \(\phi(x, b_\varepsilon(x)) + \varepsilon \geq \sup_{y \in \mathcal{Y}} \phi(x, y)\).
That is \(b_\varepsilon\) is an \(\varepsilon\)-\emph{approximate} best-response function.

To find \(x\) and \(b = b^*\) simultaneously, we can perform simultaneous gradient ascent: \(\dot{x} = \nabla_x \phi(x, b(x))\), \(\dot{b} = -\nabla_b \phi(x, b(x))\), where \(\nabla_x \phi(x, b(x))\) is a total (not partial) derivative.
That is, the best response function tries to \emph{increase} the exploitability while the strategy profile tries to \emph{decrease} it.
Since \(b\) is a function, Player 1's changing behavior poses no fundamental hindrance to it learning good responses and ``saving'' them for later use if Player 1's behavior changes.
It could even learn a good approximation to the true best-response function, leaving Player 1 to face a simple standard optimization problem.

If \(\mathcal{X}\) is infinite and \(\mathcal{Y}\) is nontrivial, \(\mathcal{X} \to \mathcal{Y}\) has infinite dimension.
To represent and optimize \(b\) in practice, we need a finite-dimensional parameterization of (a subset of) this function space.
More precisely, if \(b\) is parameterized by \(w\) and is (approximately) surjective onto \(\mathcal{Y}\), then \(\inf_{x \in \mathcal{X}} \Phi(x) = \inf_{x \in \mathcal{X}} \sup_{y \in \mathcal{Y}} \phi(x, y) \approx \inf_{x \in \mathcal{X}} \sup_{w \in \mathcal{W}} \phi(x, b(w, x))\).
Inspired by this idea, we propose jointly optimizing \(x\) and \(w\) according to the following ODE system: \(\dot{x} = -\nabla_x \phi(x, b(w, x))\), \(\dot{w} = +\nabla_w \phi(x, b(w, x))\).
We call our method \emph{approximate exploitability descent with learned best-response functions (ApproxED-BRF)}.

The best-response function can take on many possible forms.
One possibility is to use a neural network.
Neural networks are a universal class of function approximators and have a powerful ability to generalize well across inputs \citep{cybenko1989approximation,hornik1989multilayer,hornik1991approximation,leshno1993multilayer,pinkus1999approximation}.
Neural networks are also, by far, the most popular function approximators used in game solving.
Therefore, we use this approach in our experiments.
On one hand, in games where each player's strategy is simply a vector of probabilities, our network takes as input a strategy profile and outputs another strategy profile where each player's strategy is the player's approximate best response.
(We actually represent each player's strategy by a vector of unconstrained real numbers and then use a softmax---one softmax per information set in the game---to convert the reals to probability distributions over actions.)

On the other hand, we also experiment with settings where each player's strategy is \emph{itself} is represented by a neural network.
In this case, the best-response functions take those networks' parameters as input.
In other words, we adapt the concept of \emph{hypernetworks}~\citep{schmidhuber1992learning,ha2016hypernetworks,lorraine2018stochastic,mackay2019self,bae2020delta} to the game-theoretic context.
We note that, to obtain good performance, the true best response function, which may be discontinuous, need not be represented exactly, but only approximated.
Our experimental results indicate that the approximation yielded by the neural network performs well across a wide class of games.

\subsection{Best-response ensembles}
For our second approach, we reformulate the problem as \(\inf_{x \in \mathcal{X}} \max_{j \in \mathcal{J}} \phi(x, y_j)\) where \(\mathcal{J}\) is a finite set of indices, and \(x \in \mathcal{X}\) and \(y : \mathcal{J} \to \mathcal{Y}\) are trainable parameters.
That is, we use an \emph{ensemble} of \(|\mathcal{J}|\) responses to \(x\), where the \emph{best} response is selected automatically by evaluating \(x\) against each \(y_j\) and taking the one that attains the best value.
Each individual \(y_j\) is a \emph{strategy} for the original game.
Since there are multiple responses in the ensemble, each one can ``focus on'' tackling a particular ``type'' of behavior from \(x\) without having to change drastically when the latter changes.
We can then train \(x\) and \(y\) simultaneously: \(\dot{x} = -\nabla_x \max_{j \in \mathcal{J}} \phi(x, y_j)\), \(\dot{y} = \nabla_{y} \max_{j \in \mathcal{J}} \phi(x, y_j)\).
That is, \(x\) improves against the best \(y_j\), while the best \(y_j\) improves against \(x\).
Ties are broken in indexical lower (lower indices first).
This allows for symmetry breaking if the ensemble elements are initially equal and the game is deterministic.

There is an issue with the aforementioned scheme, however.
If one of the ensemble elements \(y_j\) strictly dominates the others for all encountered \(x\), then the other elements will never be selected under the maximum operator.
Thus they will never have a chance to change, improve performance, and thus contribute.
In that case, the scheme degenerates to ordinary simultaneous gradient ascent.
We observed this degeneracy in some games.

To solve this issue, we introduced the following approach.
To give \emph{all} \(y_j\) some chance to improve, while incentivizing them to ``focus'' on particular types of \(x\) rather than cover all cases, we use a \emph{rank-based weighting} approach.
Specifically, we let \(\dot{y} = \nabla_{y} \operatorname{mix}_{j \in \mathcal{J}} \phi(x, y_j)\) where
\(\operatorname{mix}_{j \in \mathcal{J}} a_j = \frac{1}{|\mathcal{J}|} \sum_{j \in \mathcal{J}} r_j a_j\) and \(r_j \in \{1, \ldots, |\mathcal{J}|\}\) is the ordinal rank of element \(j\).
This makes better elements receive a higher weight.
Thus the best \(y_j\) has the most incentive to adapt against the current \(x\), while others have less incentive, but still some nonetheless.
Since the weight of each ensemble element depends only on the rank or order of values, it is invariant under monotone transformations of the utility function.

Our method is defined by the ODE system
\(\dot{x} = -\nabla_x \sum_{i \in \mathcal{I}} \mleft(\max_{j \in \mathcal{J}} u_i(y_{ij}, x_{-i}) - u_i(x)\mright)\)
and
\(\dot{y} = +\nabla_y \sum_{i \in \mathcal{I}} \mleft(\operatorname*{mix}_{j \in \mathcal{J}} u_i(y_{ij}, x_{-i}) - u_i(x)\mright)\).
Here, \(y = \{y_{ij}\}_{i \in \mathcal{I}, j \in \mathcal{J}}\) is an ensemble of \(|\mathcal{J}|\) responses for each individual player \(i \in \mathcal{I}\).
One can compute the \(u_i(y_{ij}, x_{-i})\) and their gradients in parallel for \(i \in \mathcal{I}, j \in \mathcal{J}\).
Hence, with access to \(O(|\mathcal{I}| |\mathcal{J}|)\) cores, the method can run approximately as fast as standard simultaneous gradient ascent.
Due to this parallelism, the ensemble size can be as big as the amount of memory and number of cores or workers available allows.
We call our method \emph{approximate exploitability descent with learned best-response ensembles (ApproxED-BRE)}.

\section{Experiments}
\label{sec:experiments}

In this section, we present our experiments.
We use a learning rate of \(\eta = 10^{-3}\) and \(\gamma = 10^{-1}\).
For BRF's best-response function, we use a fully-connected network with a hidden layer of size 32, the tanh activation function, and He weight initialization \citep{He_2015}.
We do not try to find the best neural architecture, because this problem comprises an entire field, may be task-specific, and is not the focus of our paper.
Thus our experiments are conservative, in the sense that our technique could perform even better compared to the baselines if engineering effort were spent tuning the neural network.
For BRE, we use ensembles of size 10 for each player.
For each experiment, we ran 64 trials.
In our plots, solid lines show the mean across trials, and bands show its standard error.
For games with stochastic utility functions, we used a batch size of 64.
Each trial was run on one NVIDIA A100 SXM4 40GB GPU on a computer cluster.

Some of the benchmarks are based on normal-form or extensive-form games with finite action sets, and thus finite-dimensional continuous mixed strategies.
While there are algorithms for such games that might have better performance (such as counterfactual regret minimization~\citep{Zinkevich07:Regret} and its fastest new variants~\cite{Farina21:Faster,Brown19:Solving}), these do not readily generalize to general continuous-action games.
Thus we are interested in comparing only to those algorithms which, like ours, \emph{do} generalize to continuous-action games, namely those described in \S\ref{sec:related_work}.

We parameterize mixed strategies on finite action sets (\emph{e.g.}, for normal-form games, or at a particular information set inside an extensive-form game) using logits.
The action probabilities are obtained by applying the softmax function, which ensures they are non-negative and sum to 1.
The utilities are the expected utilities under the resulting mixed strategies.
Therefore, our games games are continuous in the \emph{nonlinear} and \emph{high-dimensional} space of mixed behavioral strategies parameterized by logits, which is the strategy space we optimize over.

If a player must randomize over a \emph{continuous} action set, we use an implicit density model, also called a deep generative model \citep{Ruthotto_2021}.
It samples noise from some base distribution, such as a multivariate standard normal distribution, and feeds it to a neural network, inducing a transformed probability distribution on the output space.
Unlike an explicit parametric distribution on the output space, it can flexibly model a wide class of distributions.
A \textit{generative adversarial network (GAN)} \citep{goodfellow2020generative} is one example of an implicit density model.

Additional figures and tables from our experiments can be found in the appendix.
The games we use are equal to or greater in size and complexity than those used as benchmarks in the papers of the other methods we compare to.

\paragraph{Saddle-point game}
The saddle-point game is a two-player zero-sum game with actions that are real numbers and utility function \(u_1(x, y) = -u_2(x, y) = x y\).
It has a unique NE at the origin.
Figure \ref{fig:saddle} shows performances.
Our methods converge fastest.

\begin{figure}
    \centering
    \includegraphics[width=.7\linewidth]{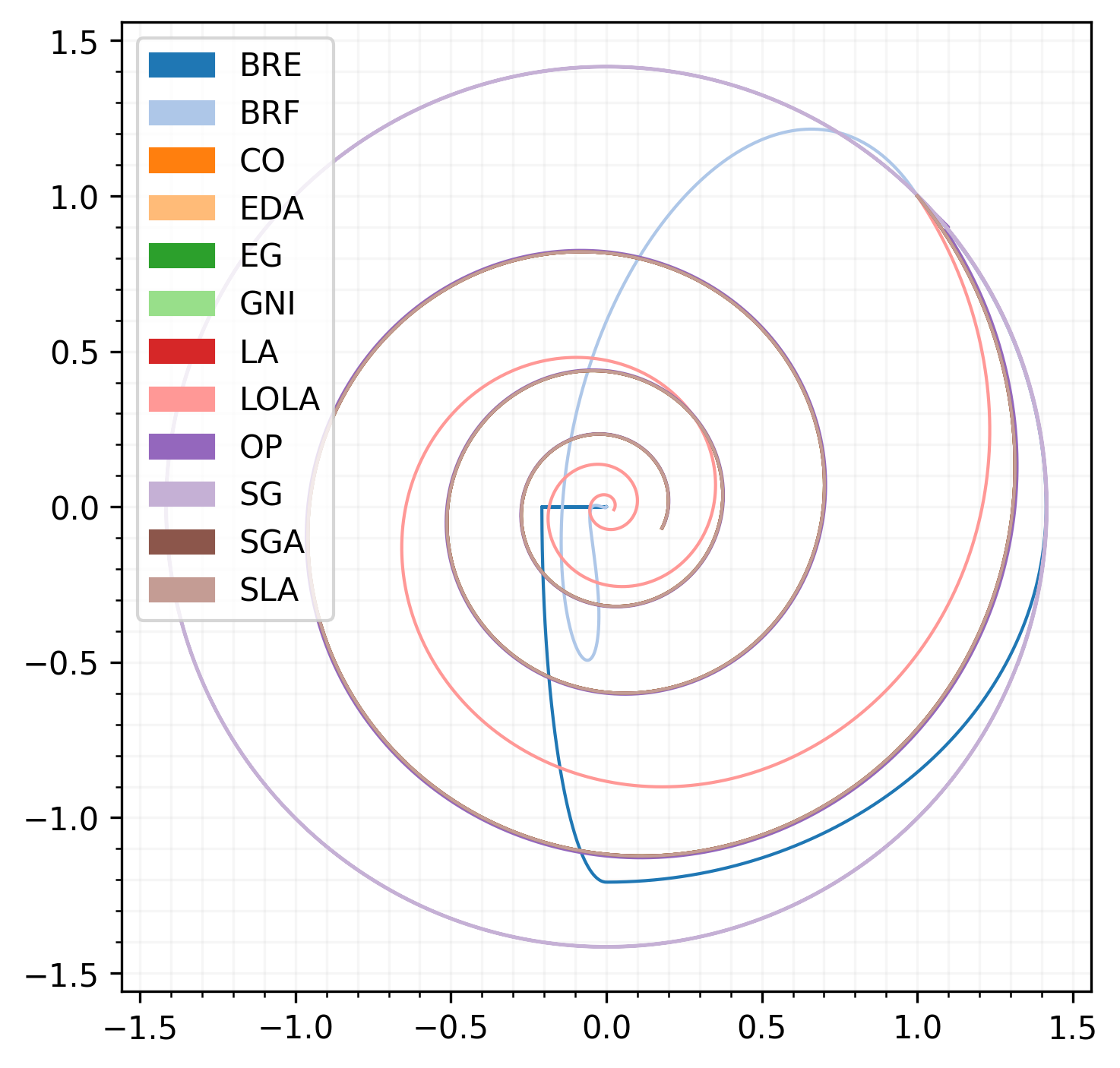}
    \includegraphics[width=.8\linewidth]{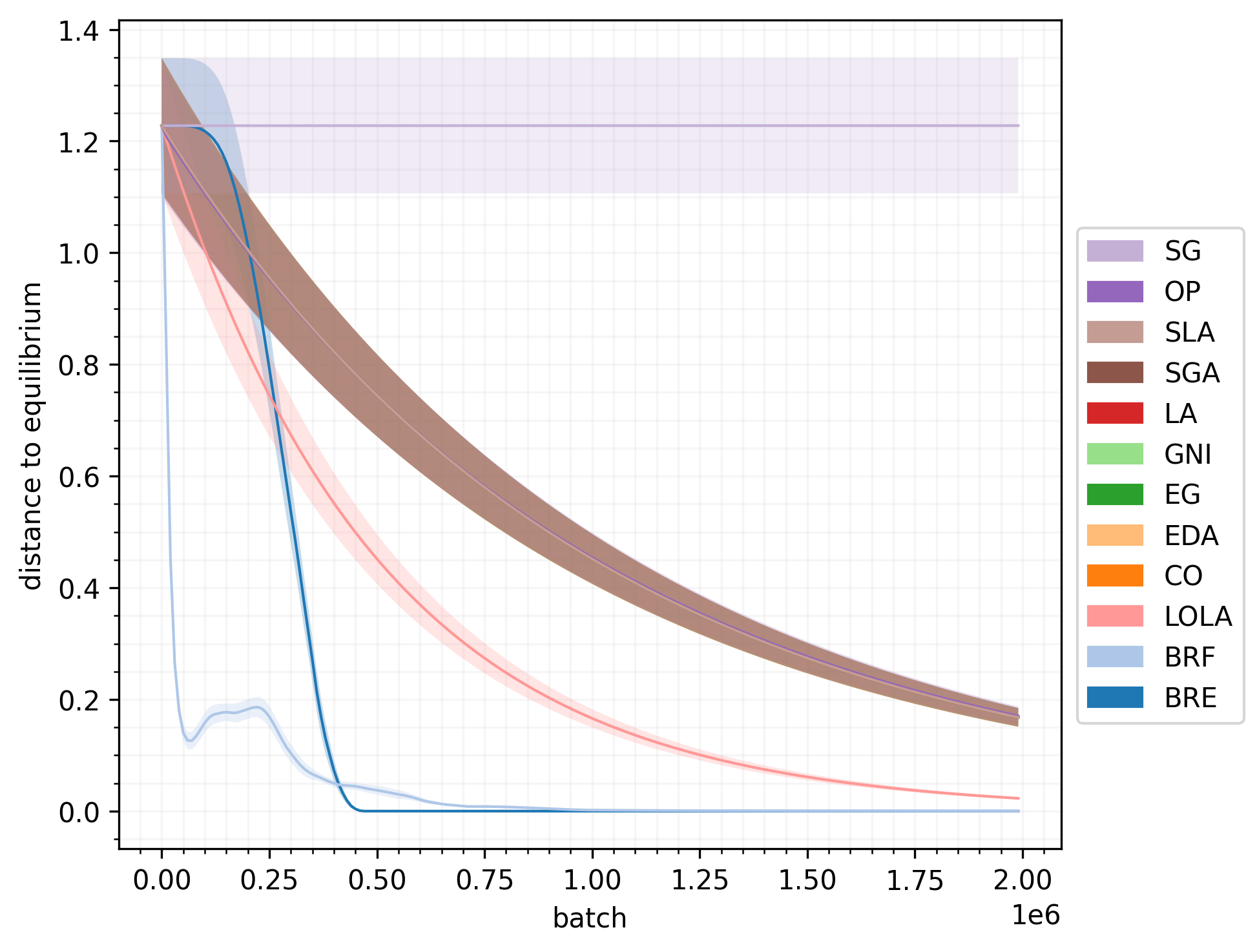}
    \caption{Saddle-point game.
    Top: Trajectories from one trial.
    Bottom: Distance to equilibrium.}
    \label{fig:saddle}
\end{figure}

\paragraph{GAN training}
\emph{Generative adversarial networks (GANs)}~\citep{goodfellow2020generative} are a prominent approach for generative modeling that use deep learning.
A GAN consists of two neural networks: a generator and a discriminator.
The generator maps latent noise to a data sample.
The discriminator maps a data sample to a probability.
The generator learns to generate fake data, while the discriminator learns to distinguish it from real data.
More precisely, the generator \(G\) and discriminator \(D\) play the zero-sum ``game'' \(\min_G \max_D V(D, G)\) where \(V(D, G) = \expect_{x \sim \mathcal{D}} \log D(x) + \expect_{z \sim \mathcal{N}} \log (1 - D(G(z)))\).
Here, \(\mathcal{D}\) is the real data distribution and \(\mathcal{N}\) is a fixed latent noise distribution, usually a multivariate standard Gaussian.
Various GAN variants with different architectures and loss functions exist in the literature.
Surveys can be found in \citet{wang2017generative, pan2019recent, jabbar2021survey, gui2021review}.

GAN training is a very high-dimensional problem, with a highly nontrivial utility function, since the strategies are entire neural network parameters for the generator and discriminator.
We test the equilibrium-finding methods on the following datasets. 
The \emph{ring} dataset consists of a mixture of 8 Gaussians with a standard deviation of 0.1 whose means are equally spaced around a circle of radius 1.
The \emph{grid} dataset consists of a mixture of 9 Gaussians with a standard deviation of 0.1 whose means are laid out in a regular square grid spanning from \(-1\) to \(+1\) in each coordinate.
The \emph{spiral} dataset consists of a noisy Archimedean spiral.
More precisely, we let
\(t \sim \mathcal{U}(0, 1)\),
\(r = \sqrt{t}\),
\(\theta = 2 \pi r n\),
\(x = \mathcal{N}(r \cos \theta, \sigma)\),
and \(y = \mathcal{N}(r \sin \theta, \sigma)\).
Here, \(n\) is the number of turns (we use 2) and \(\sigma\) is the standard deviation of the noise (we use 0.05).
Finally, the \emph{cube} dataset consists of points sampled uniformly from the edges of a cube and perturbed with Gaussian noise of scale 0.05.

In all cases, the generator's latent noise distribution is a standard Gaussian matching the dimension of the dataset.
The generator and discriminator have hidden layers of size 32.
Figures \ref{fig:gan_images_ring}, \ref{fig:gan_images_grid}, \ref{fig:gan_images_spiral}, and \ref{fig:gan_images_cube} show what the resulting data distributions look like after training.

We also test on MNIST \citep{deng2012mnist}, a dataset of 70,000 \(28 \times 28\) grayscale images of handwritten digits from 0 to 9.
The generator and discriminator networks are the same as before, and fully connected, but with the hidden layer size increased to 256 and the noise dimension increased to 32.
Due to the larger network size, we use a smaller learning rate of \(10^{-4}\).
Samples are shown in Figure \ref{fig:gan_images_mnist}.

The Wasserstein distance is a distance between probability distributions on a metric space. 
Intuitively, it is the minimum transportation cost needed to turn one distribution into another, that is, earth-mover's distance.
The \emph{empirical Wasserstein distance (EWD)} estimates the Wasserstein distance between the real data distribution and the data distribution produced by the generator.
It is obtained by sampling 1000 real data points and 1000 fake data points, computing the Euclidean distance matrix between them, solving the linear sum assignment problem, and returning the transportation cost. 
For the linear sum assignment problem, we use the implementation of the Python library scipy~\citep{2020SciPy-NMeth}, which uses a modified Jonker-Volgenant algorithm with no initialization~\citet{crouse2016implementing}.
Figure \ref{fig:gan} shows performances.
Our methods outperform the rest.

\begin{figure}
    \centering
    \includegraphics[width=.8\linewidth]{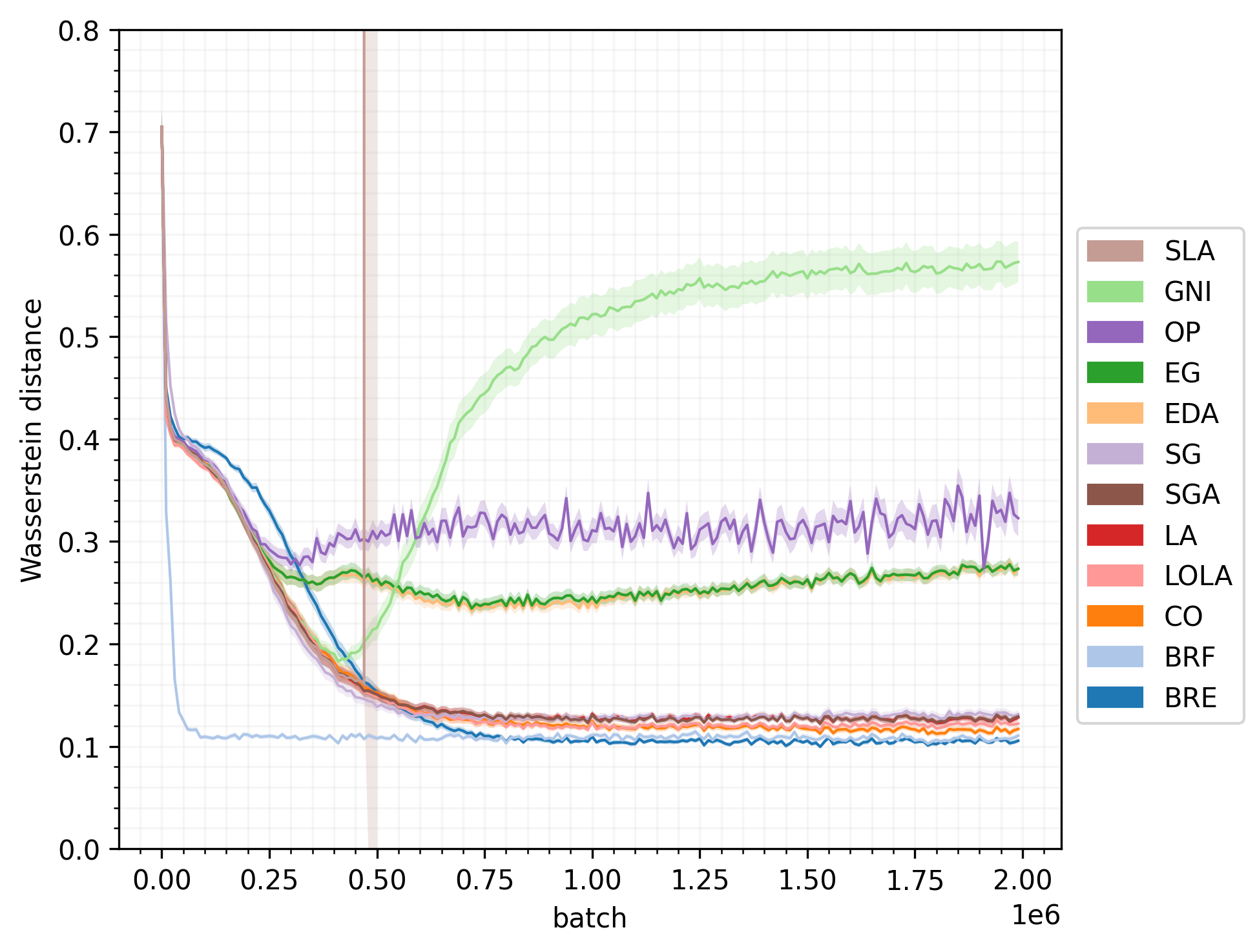}
    \includegraphics[width=.8\linewidth]{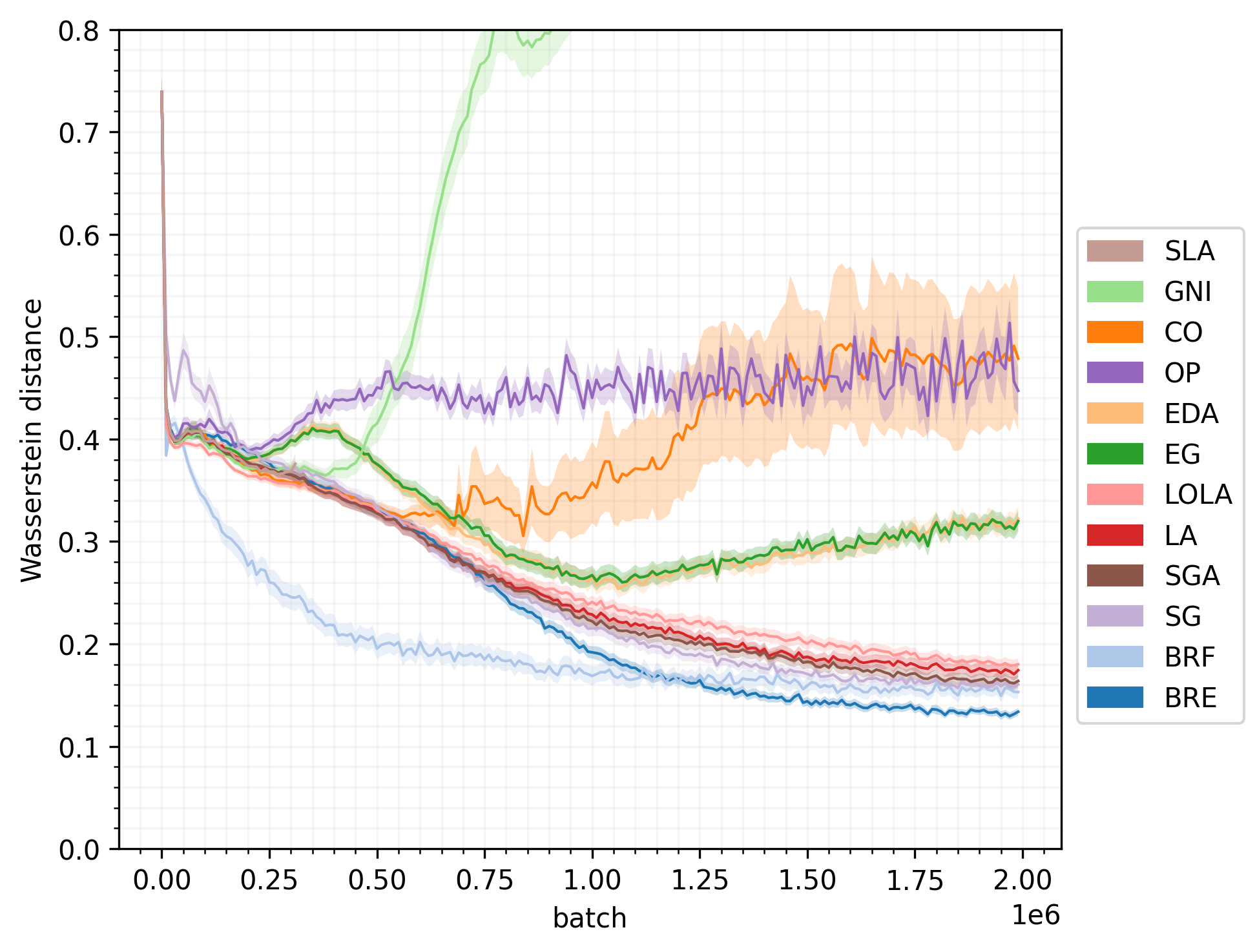}
    \includegraphics[width=.8\linewidth]{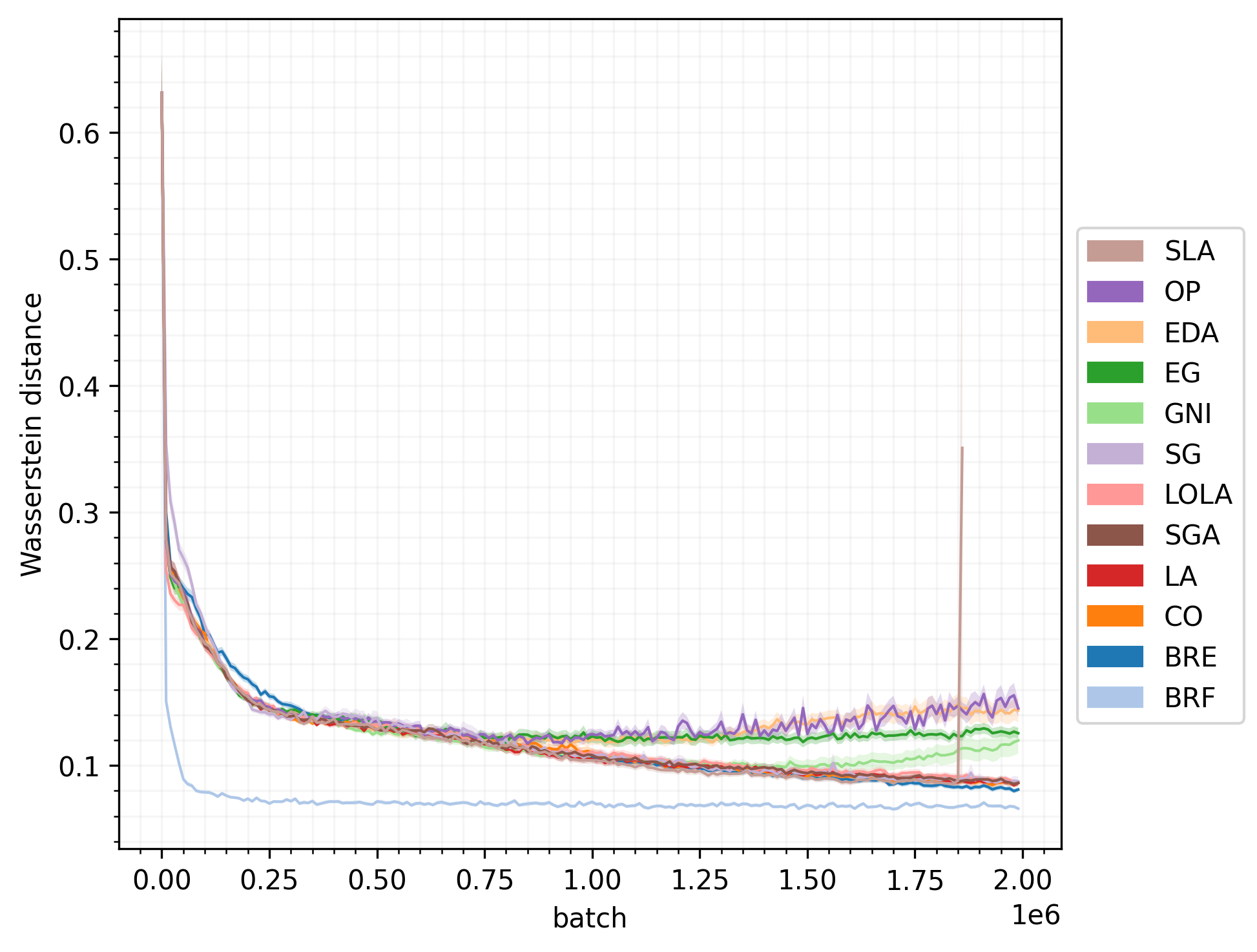}
    \includegraphics[width=.8\linewidth]{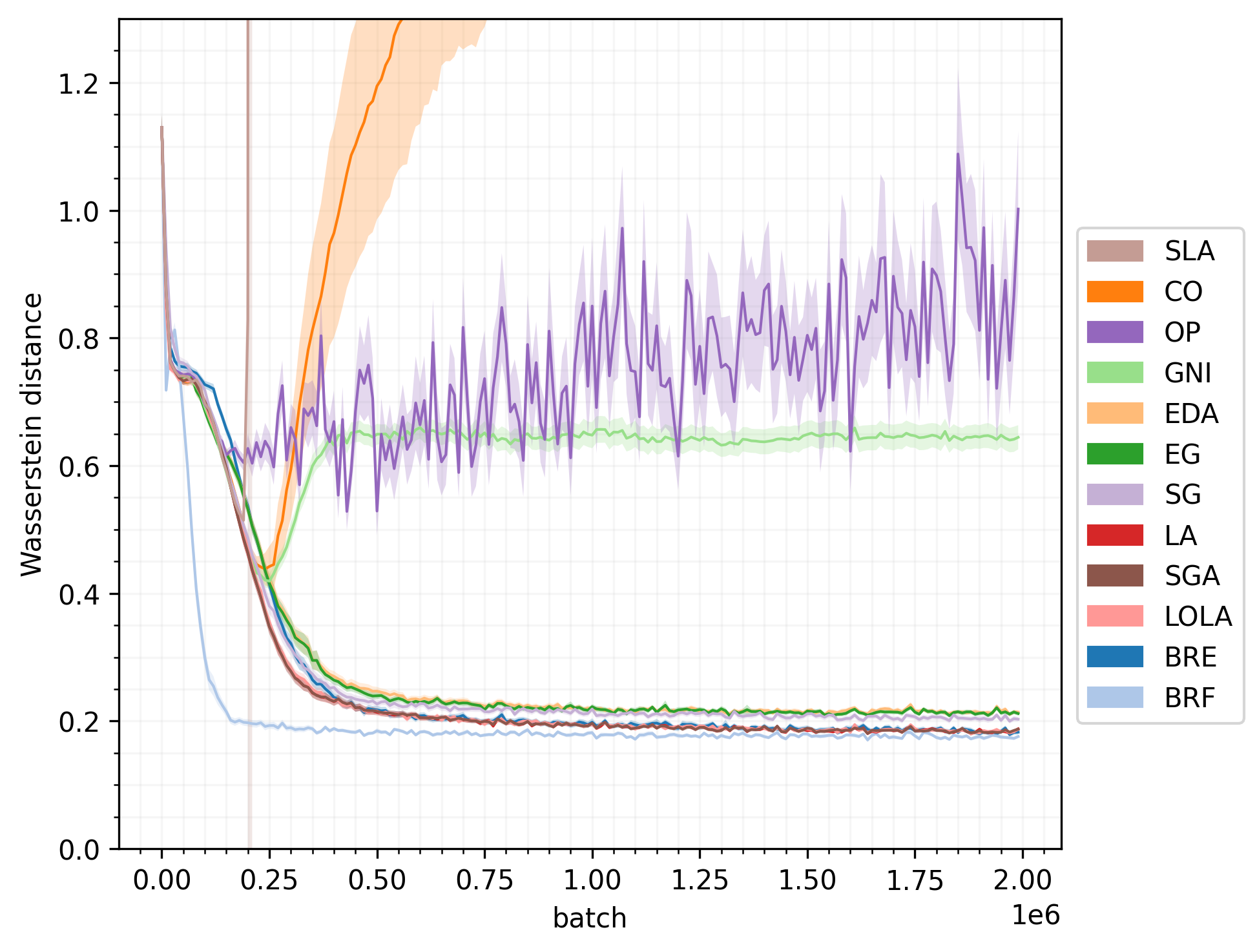}
    \caption{GAN with ring, grid, spiral, and cube datasets.}
    \label{fig:gan}
\end{figure}

\paragraph{Continuous security game}

Security games are used to model defender-adversary interactions in many domains,
including the protection of infrastructure like airports, ports, and flights \citep{Kamra_2018},
as well as the protection of wildlife, fisheries and forests \citep{kar2017trends,sinha2018stackelberg}.
Security games are often modeled with Stackelberg equilibrium as the solution concept, which coincides with NE in zero-sum security games and in certain structured general-sum games~\citep{korzhyk2011stackelberg}.
In practice, many of the aforementioned domains have continuous action spaces.
Such games have been studied by \citet{Kamra_2017,Kamra_2018,Kamra_2019}, among others.
As an example, we use the following game.
Let \(\mathcal{S} = [0, 1]^2\).
The attacker chooses a point \(x \in \mathcal{S}\).
Simultaneously, the defender chooses \(n\) points \(y_i \in \mathcal{S}\).
Let \(d = \inf_{i \in [n]} \|x - y_i\|\) be the distance between the attacker's point and the defender's closest point.
The defender receives a utility of \(\exp(-d^2)\), and the attacker receives \(-\exp(-d^2)\).
Thus the defender seeks to be close to the attacker, while the opposite is true for the attacker.
Figure \ref{fig:security} shows performances.
Our methods perform best.

\begin{figure}
    \centering
    \includegraphics[width=.8\linewidth]{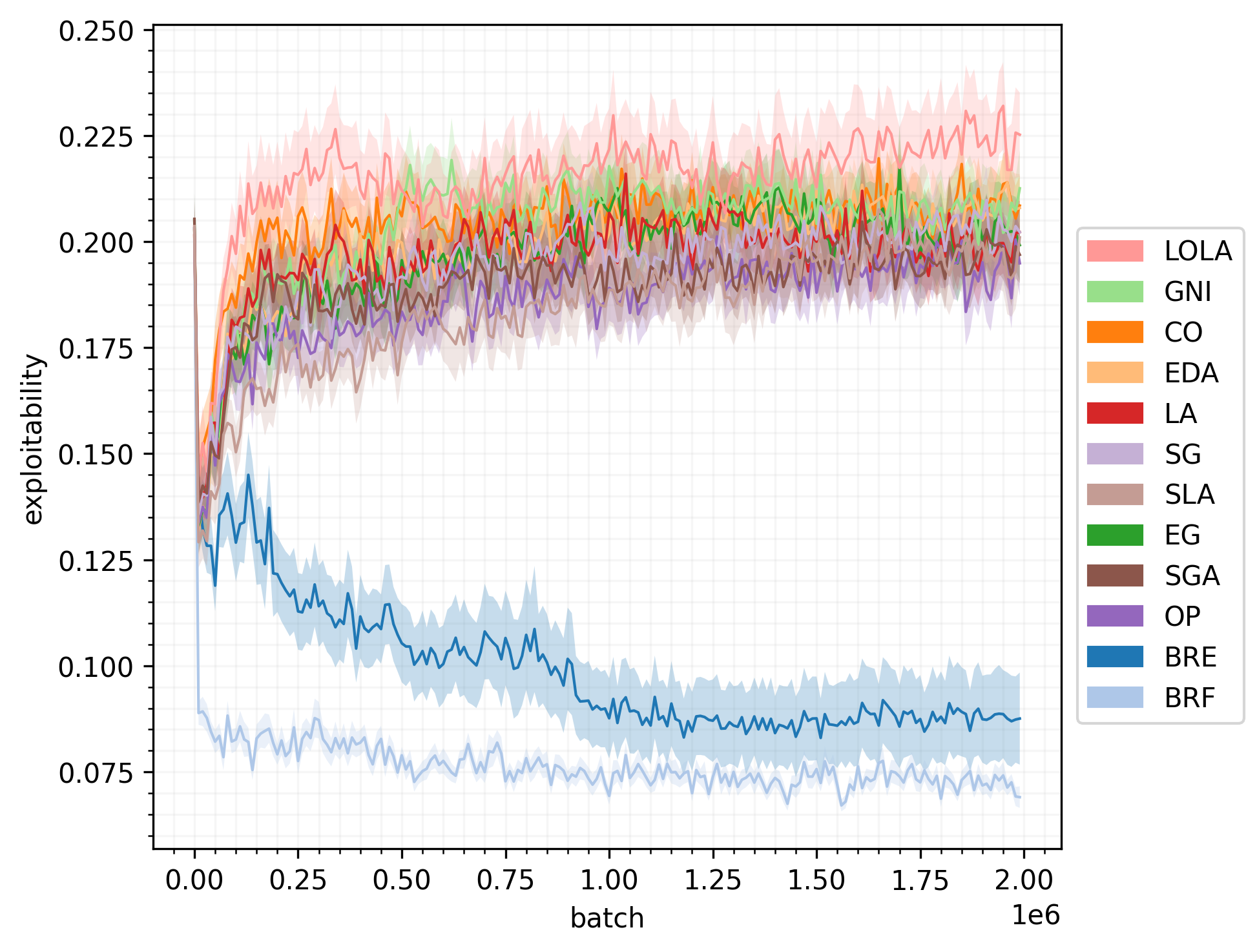}
    \includegraphics[width=.8\linewidth]{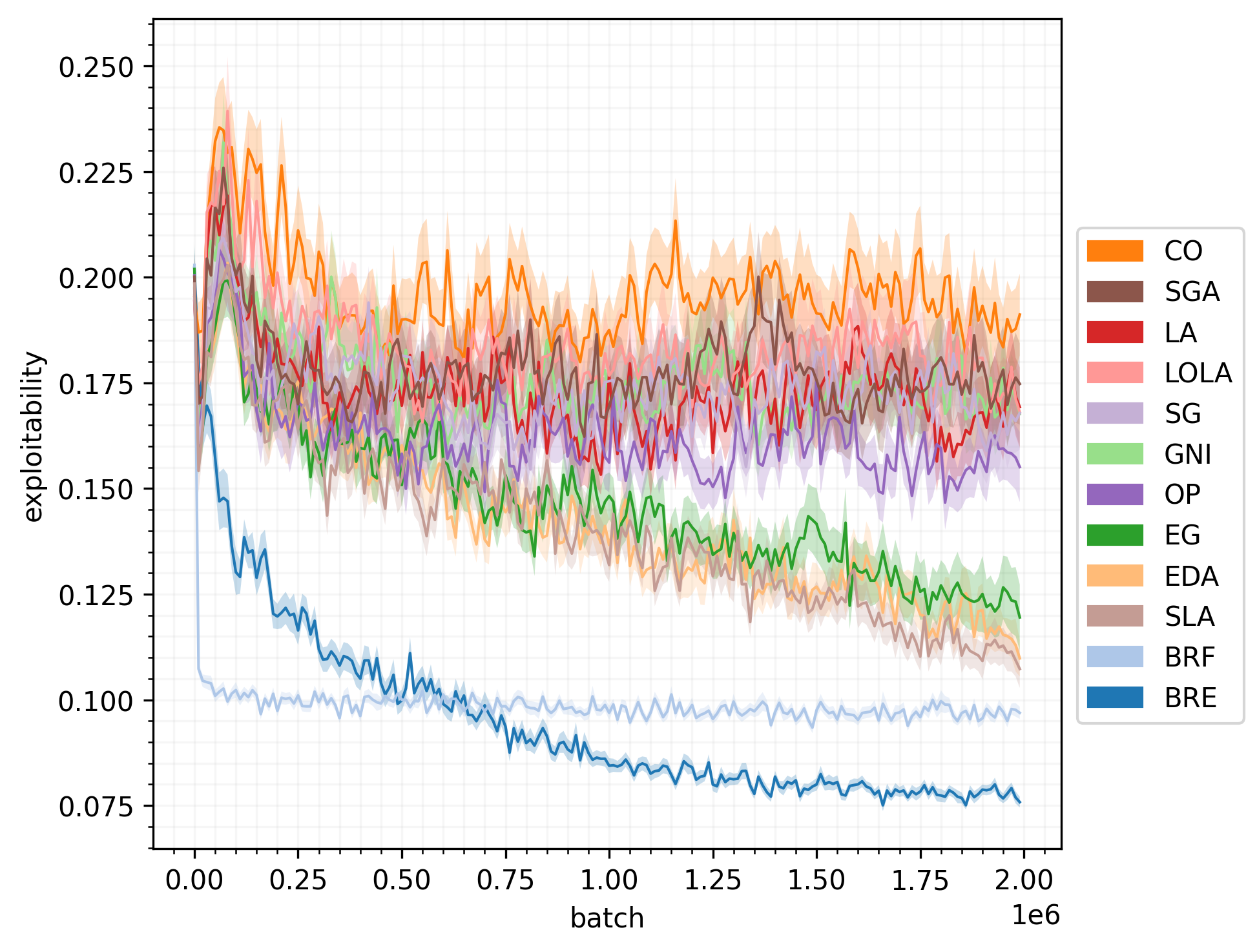}
    \caption{Security game with 1 and 2 points.}
    \label{fig:security}
\end{figure}

\paragraph{Glicksberg--Gross game}
This is a two-player zero-sum normal-form game with continuous action sets \(\mathcal{A}_i = [0, 1]\) and utility function \(u_1(x, y) = -u_2(x, y) = \frac{(1+x)(1+y)(1-xy)}{(1+xy)^2}\).
\citet{Glicksberg_1953} analyzed this game and proved that it has a unique mixed-strategy NE where each player's strategy has a cumulative distribution function of \(F(t) = \frac{4}{\pi} \arctan \sqrt{t}\).
To model mixed strategies, we use the following implicit density model.
We feed a sample from a 1-dimensional standard normal distribution into a fully-connected network with one hidden layer of size 32 and output layer of size 1.
The output is squeezed to the unit interval using the logistic sigmoid function.
Figure \ref{fig:glicksberg_gross} shows performances.
Our methods converge fastest.

\begin{figure}
    \centering
    \includegraphics[width=.8\linewidth]{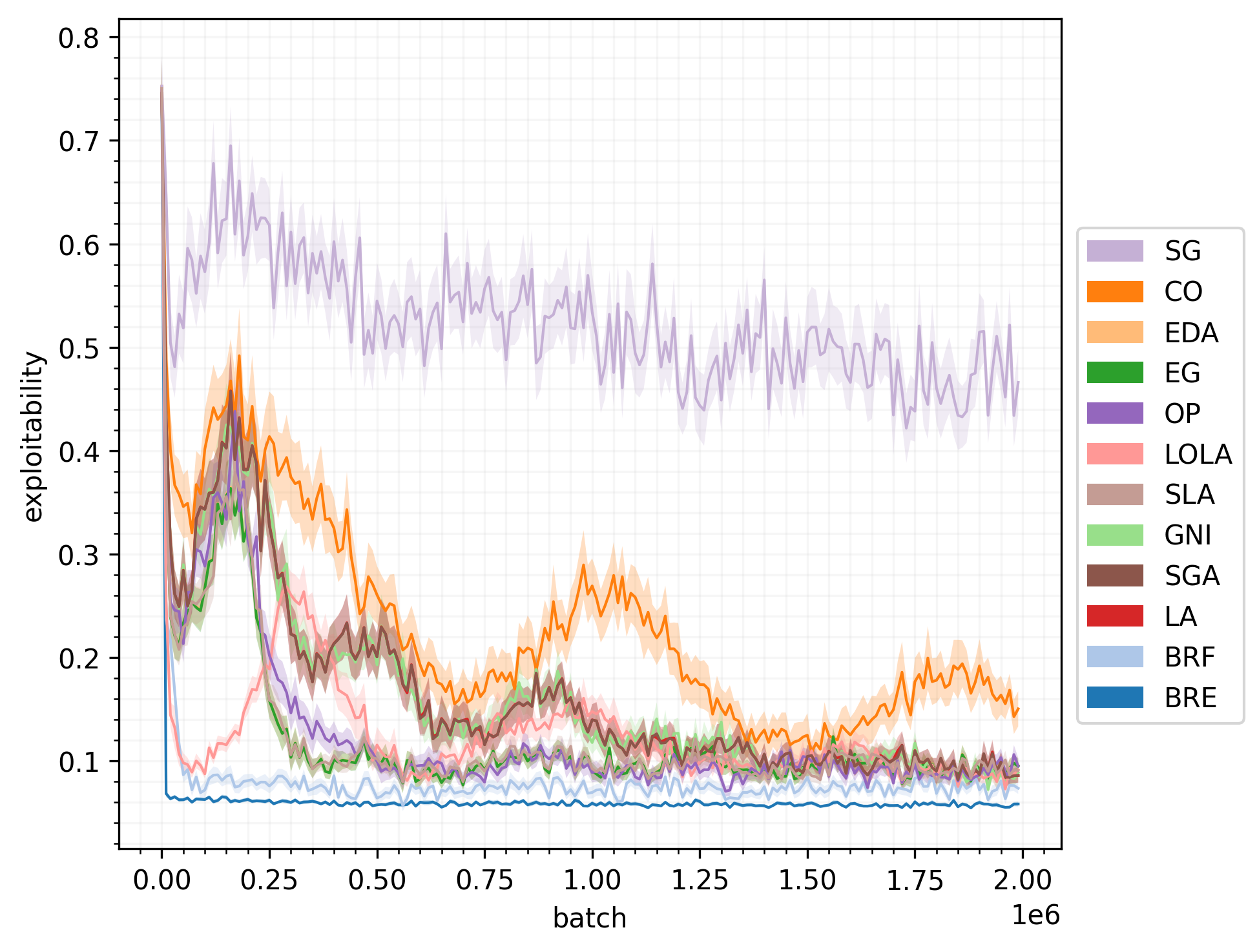}
    \caption{Glicksberg--Gross game.}
    \label{fig:glicksberg_gross}
\end{figure}

\paragraph{Shapley game}
This is a two-player normal-form game with 3 actions per player.
Its utility matrices are presented in the appendix.
It was introduced by~\citet[p. 26]{Shapley64:Some}, and is a classic example of a game for which fictitious play \citep{Brown51:Iterative,berger_2007} diverges.
(Instead, fictitious play cycles through the cells with 1's in them, with ever-increasing lengths of play in each of these cells.)
Figure \ref{fig:shapley} shows performances.
Our methods converge, while the rest diverge.

\begin{figure}
    \centering
    \includegraphics[width=.8\linewidth]{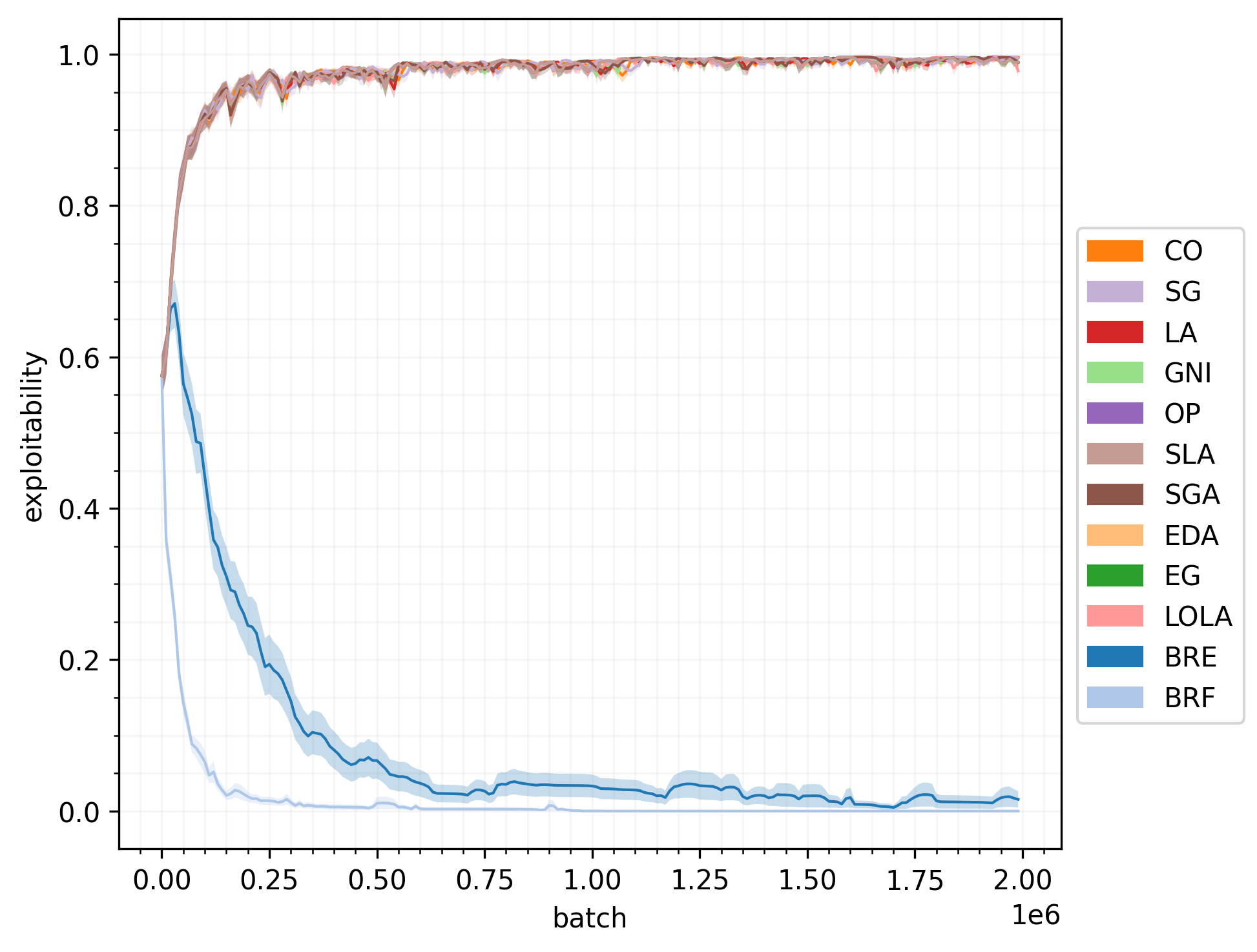}
    \caption{Shapley game.}
    \label{fig:shapley}
\end{figure}

\paragraph{Poker games}
Kuhn poker is a variant of poker introduced by \citet{Kuhn50:Simplified}.
It is a two-player zero-sum imperfect-information game.
A 3-player variant was introduced by \citet{Szafron_2013}, and was one of the largest three-player games to be solved analytically to date.
2-player Kuhn poker has a 12-dimensional strategy space per player (24 in total).
3-player Kuhn poker has a 32-dimensional strategy space per player (96 in total).
Thus the utility function for these games is high-dimensional and nonlinear, making them a good benchmark.
Figure \ref{fig:kuhn} shows performances.
Our methods converge fastest.

\begin{figure}
    \centering
    \includegraphics[width=.8\linewidth]{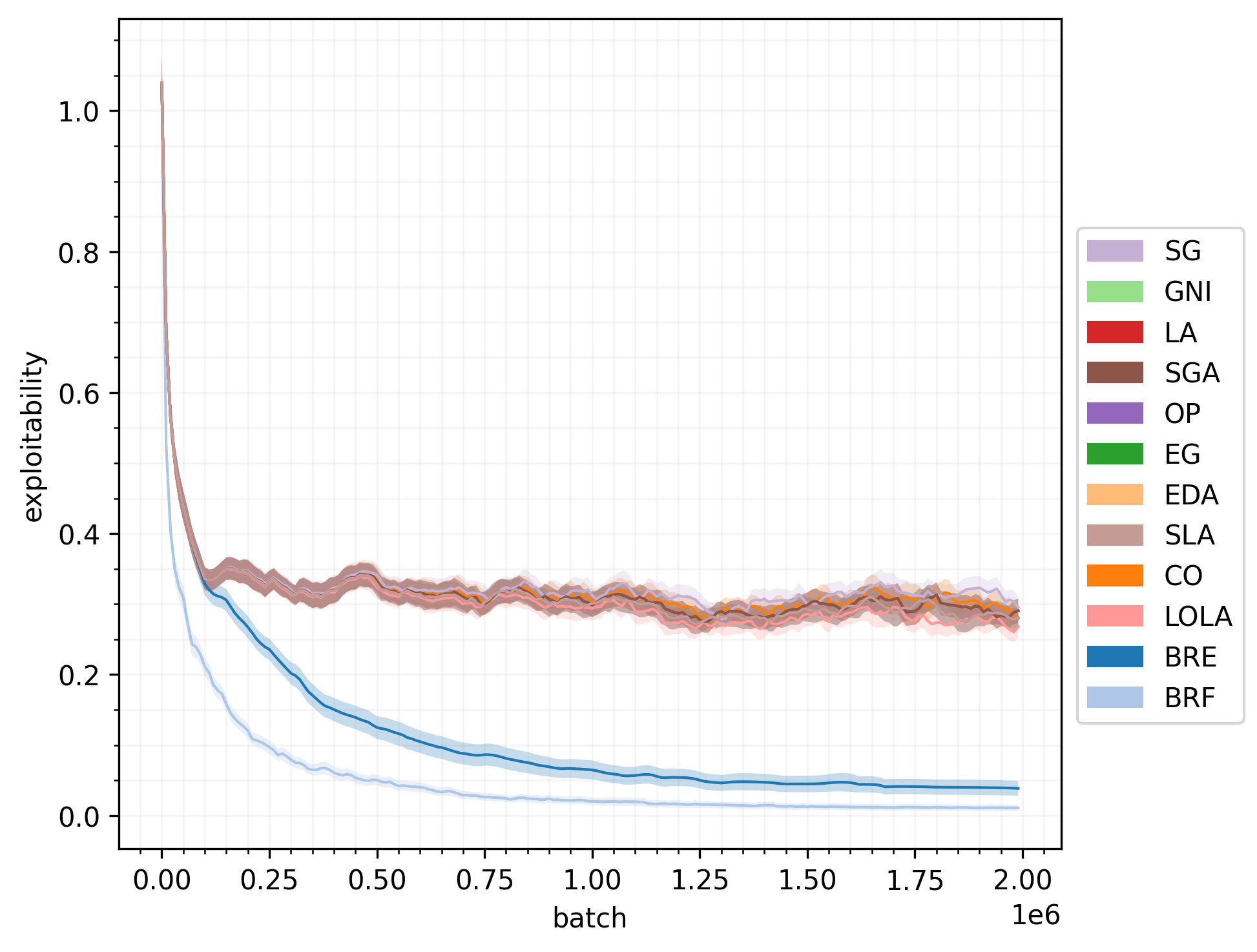}
    \includegraphics[width=.8\linewidth]{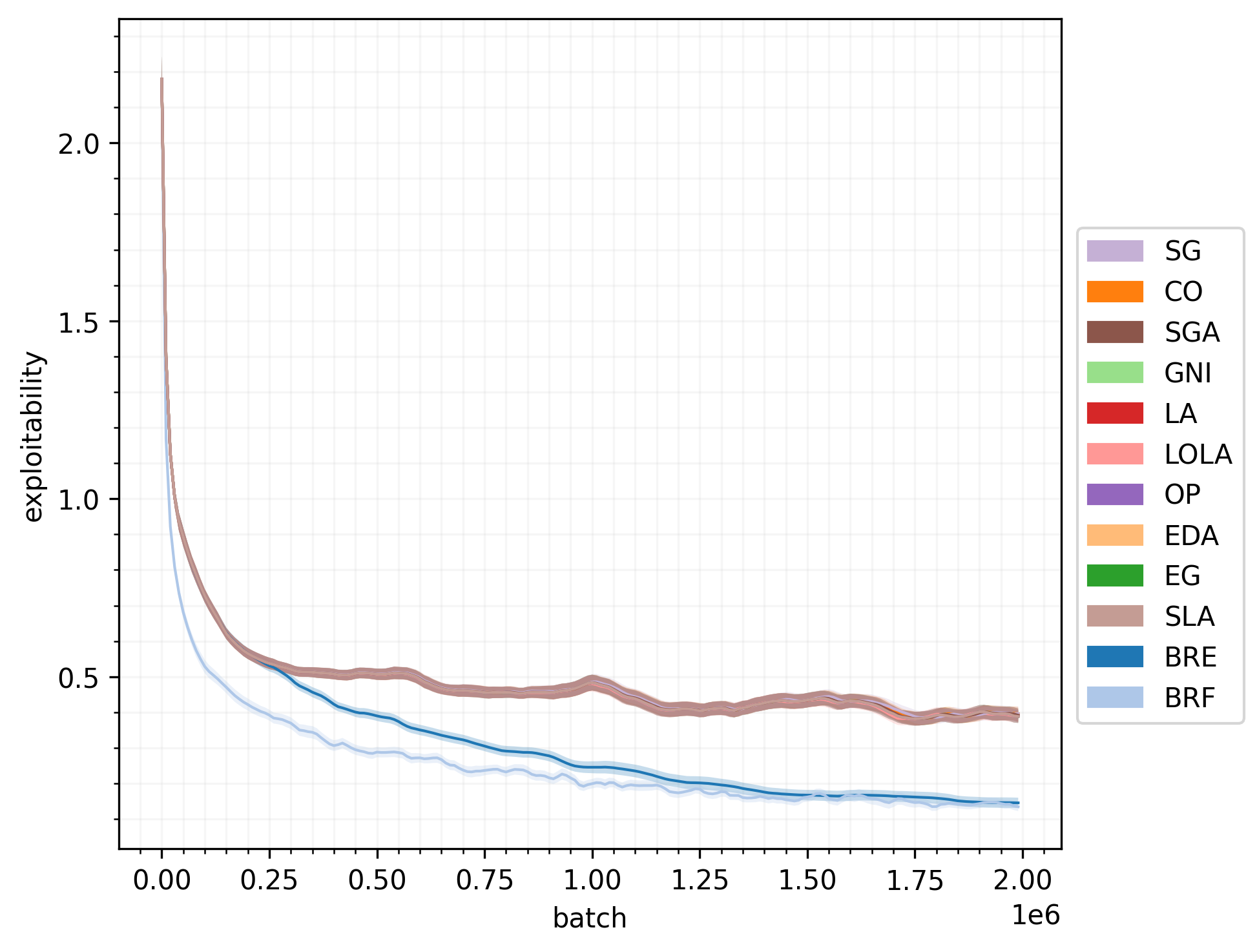}
    \caption{Kuhn poker with 2 and 3 players.}
    \label{fig:kuhn}
\end{figure}

\paragraph{Generalized rock paper scissors}
\emph{Rock paper scissors (RPS)} is a classic two-player zero-sum normal-form game with 3 actions per player.
It has a unique mixed-strategy NE where each player mixes uniformly over its actions.
\citet[p. 7]{cloud2022anticipatory} generalize RPS to \(n\) actions by letting \(u_1(a) = -u_2(a) = \llbracket a_2 - a_1 = 1 \mod n \rrbracket - \llbracket a_1 - a_2 = 1 \mod n \rrbracket\).
Figure \ref{fig:rps} shows performances.
Our methods converge fastest.

\begin{figure}
    \centering
    \includegraphics[width=.8\linewidth]{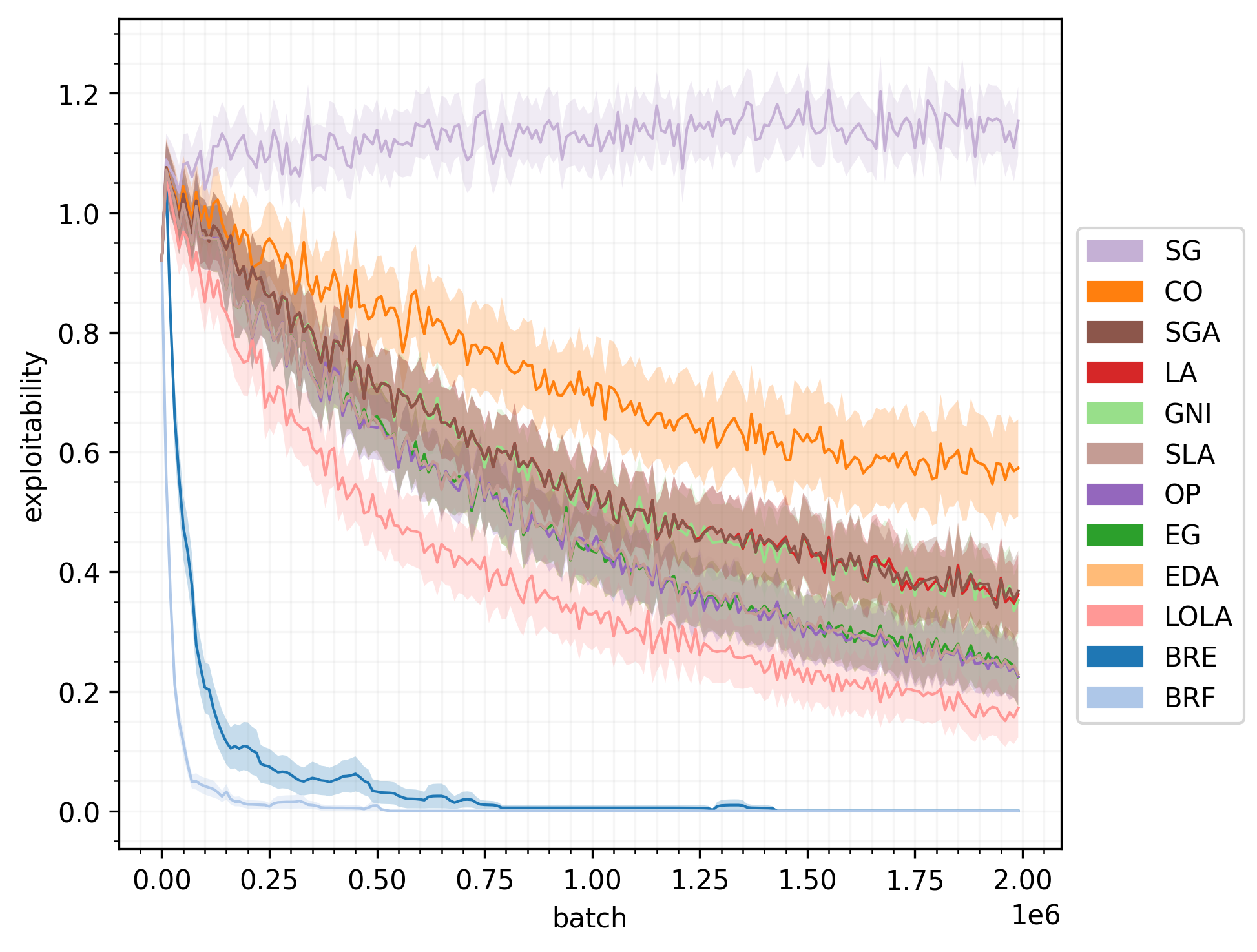}
    \includegraphics[width=.8\linewidth]{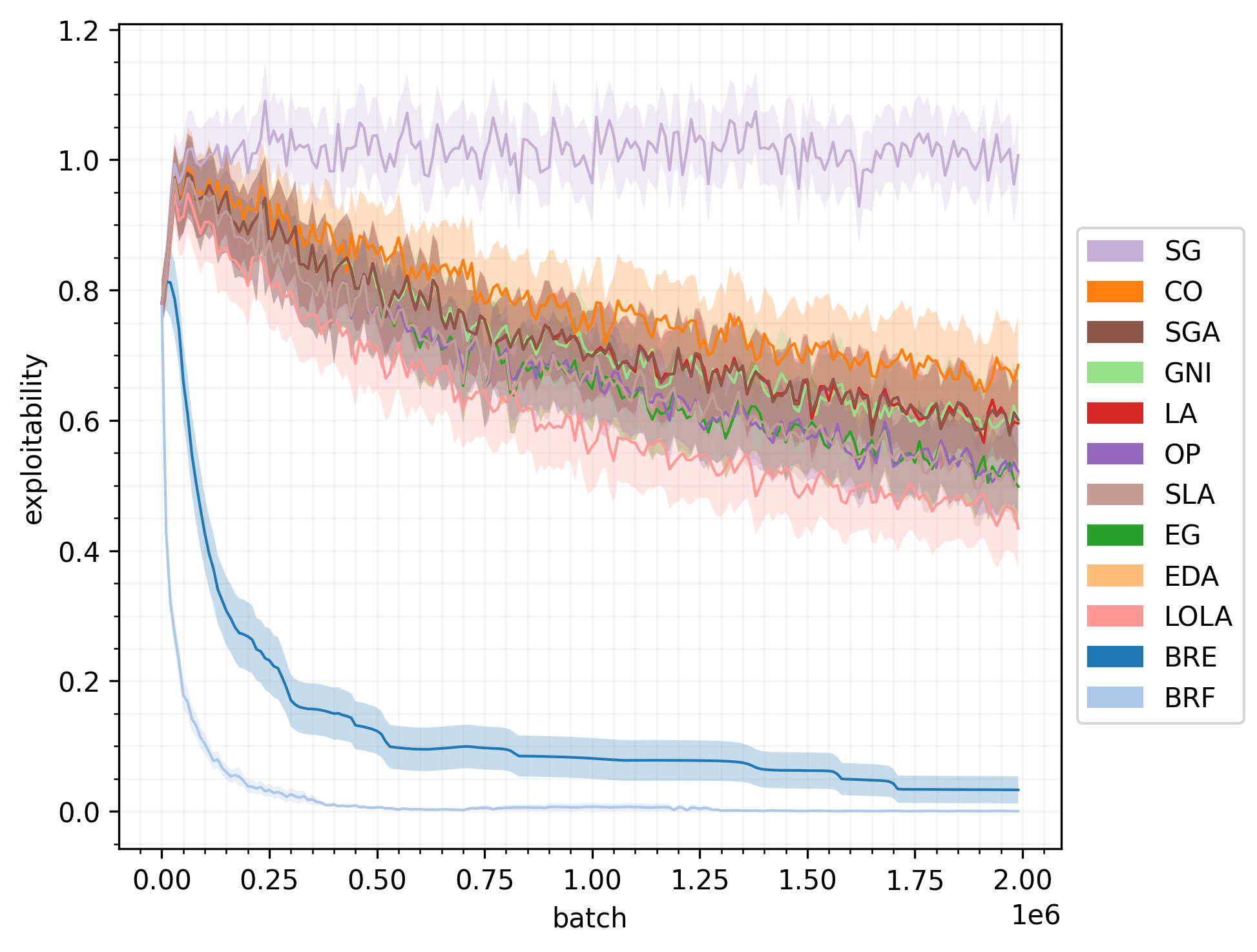}
    \caption{Rock paper scissors with 3 and 4 actions.}
    \label{fig:rps}
\end{figure}

\paragraph{Generalized matching pennies}
\textit{Matching pennies (MP)} is a classic two-player zero-sum normal-form game with 2 actions per player.
It can be generalized to \(n\) players~\citep{Jordan_1993,Leslie_2003} by letting \(u_i : [2]^n \to \mathbb{R}\) where \(u_i(a) = (2 \llbracket a_i = a_{i+1 \bmod n} \rrbracket - 1) (-1)^{\llbracket i = n - 1 \rrbracket}\).
That is, each player seeks to match the next, but the last player seeks to \emph{un}match the first.
Like the 2-player version, it has a unique mixed-strategy NE where each player mixes uniformly over its actions.
The 3-player game's NE is locally unstable in a strong sense \citep{Jordan_1993}.
More precisely, discrete-time fictitious play \citep{Brown51:Iterative} fails to converge, and instead enters a limit cycle asymptotically.
Figure \ref{fig:mp} shows performances.
In the 2-player game, our methods converge fastest.
In the 3-player game, our methods converge, while the rest diverge.

\begin{figure}
    \centering
    \includegraphics[width=.8\linewidth]{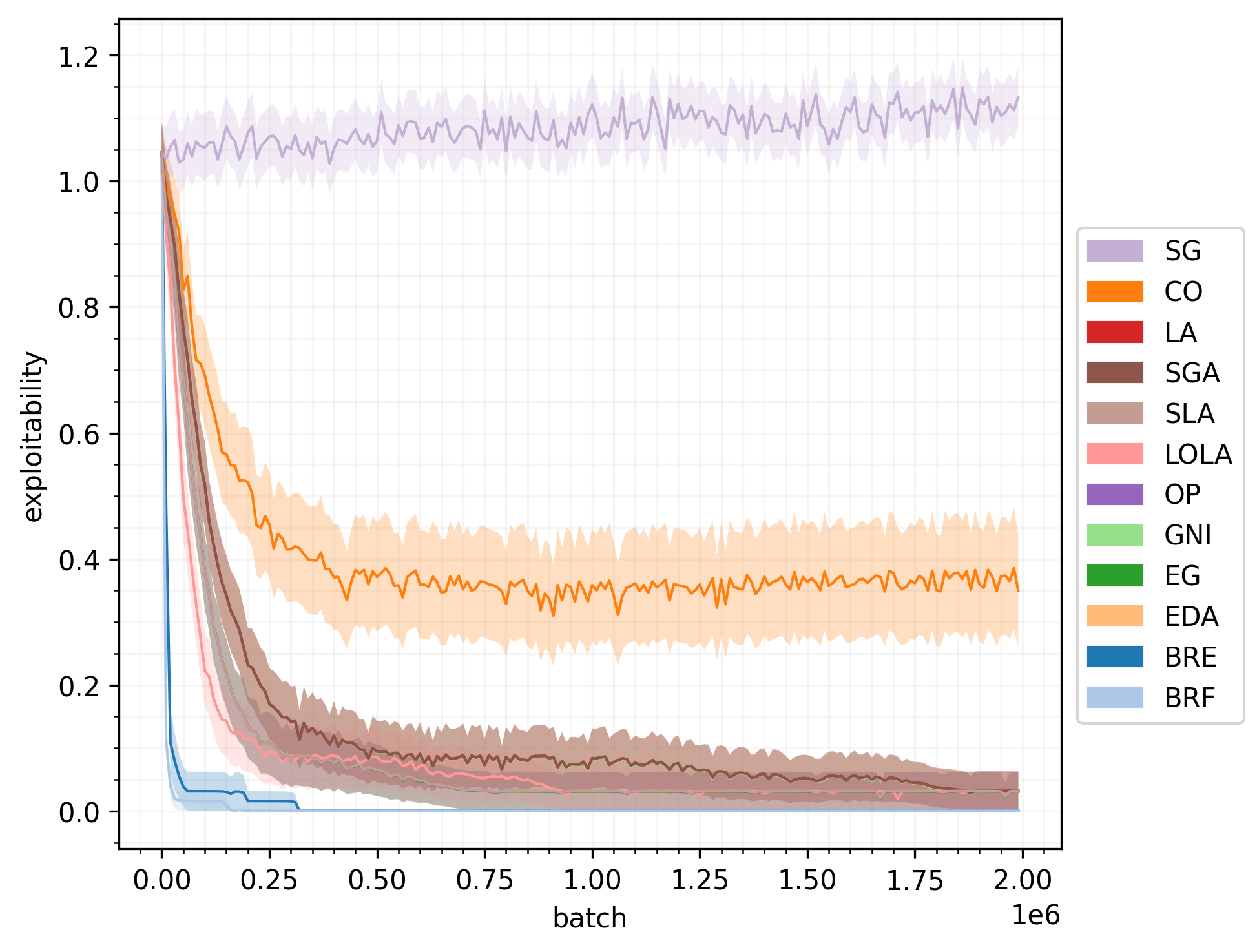}
    \includegraphics[width=.8\linewidth]{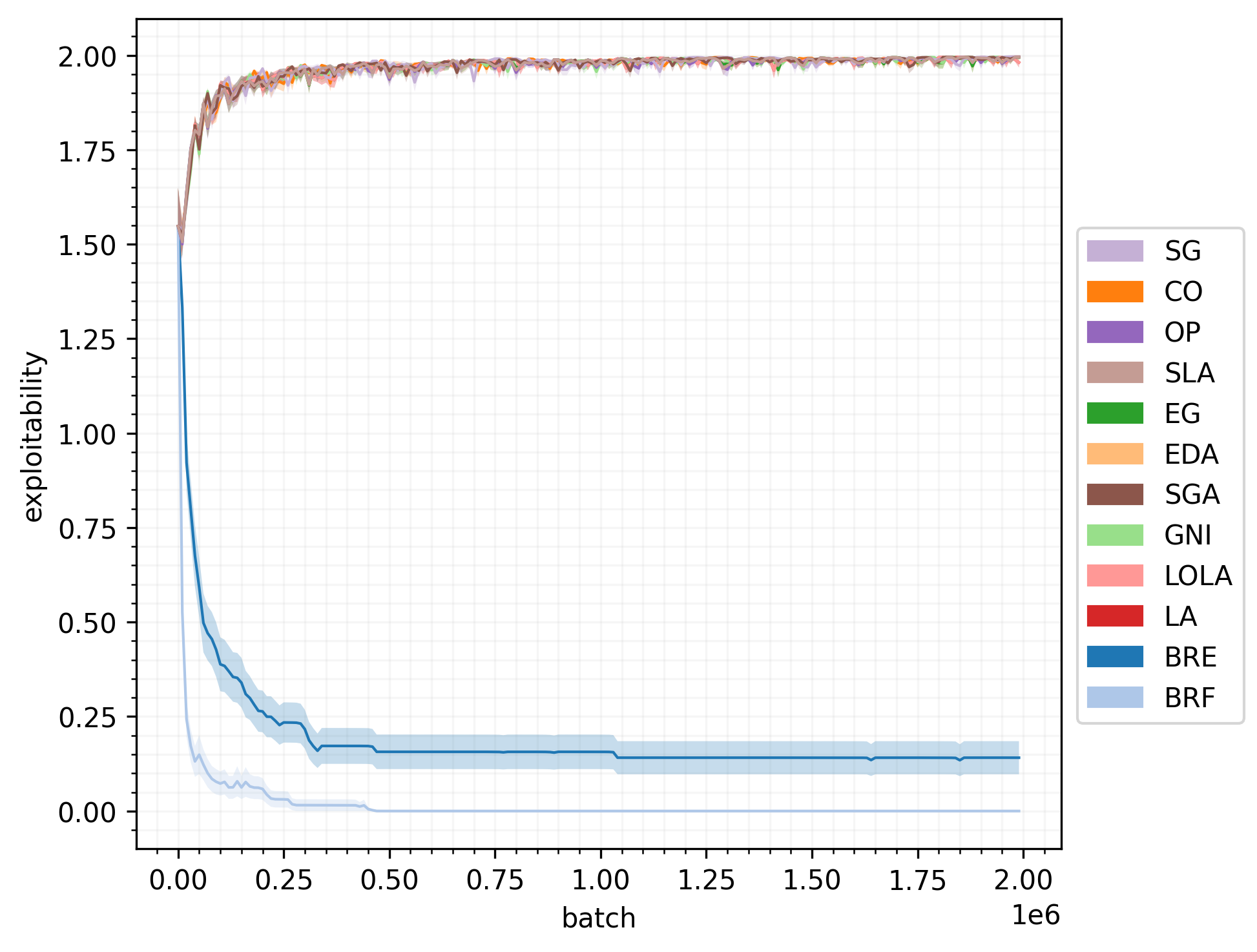}
    \caption{Matching pennies with 2 and 3 players.}
    \label{fig:mp}
\end{figure}

\section{Conclusion}
\label{sec:conclusion}

In this paper, we studied the problem of finding an approximate NE of continuous games.
The main measure of closeness to NE is \emph{exploitability}, which measures how much players can benefit from unilaterally changing their strategy.
We proposed two new methods that minimize an approximation of the exploitability with respect to the strategy profile.
These methods use learned best-response functions and best-response ensembles, respectively.
We evaluated these methods in various continuous games, showing that they outperform prior methods.
Prior equilibrium-finding techniques usually suffer from cycling or divergent behavior. 
By thinking about the equilibrium-finding problem we are trying to solve from first principles, and by formulating a novel solution to it, our paper opens up new possibilities for tackling games where such problematic behavior appears, and significantly reduces the amount of time and resources needed to obtain good approximate equilibria.
In \S\ref{sec:future} of the appendix, we discuss possible extensions of our methods and directions for future research.

\section{Acknowledgements}

This material is based on work supported by the Vannevar Bush Faculty Fellowship ONR N00014-23-1-2876; National Science Foundation grants CCF-1733556, RI-2312342, and RI-1901403; ARO award W911NF2210266; and NIH award A240108S001.

\bibliographystyle{plainnat}
\bibliography{dairefs,references}
\appendix
\section{Additional figures}

In this section, we include additional figures that did not fit in the body of the paper.
Utility tables are shown in Tables \ref{tab:utilities_mp}, \ref{tab:utilities_rps_3a}, \ref{tab:utilities_rps_4a}, and \ref{tab:utilities_shapley}.
Learned distributions under GAN training are illustrated in Figures \ref{fig:gan_images_ring}, \ref{fig:gan_images_grid}, \ref{fig:gan_images_spiral}, \ref{fig:gan_images_cube}, and \ref{fig:gan_images_mnist}.

\begin{table}
    \centering
    \begin{tabular}{|c|cc|}
        \hline
        & H & T \\
        \hline
        H & \(+1\) & \(-1\) \\
        T & \(-1\) & \(+1\) \\
        \hline
    \end{tabular}
    \caption{Utilities for matching pennies (2 players), from the perspective of player 1.}
    \label{tab:utilities_mp}
\end{table}

\begin{table}
    \centering
    \begin{tabular}{|c|ccc|}
        \hline
        & R & P & S \\
        \hline
        R & \(0\) & \(-1\) & \(+1\) \\
        P & \(+1\) & \(0\) & \(-1\) \\
        S & \(-1\) & \(+1\) & \(0\) \\
        \hline
    \end{tabular}
    \caption{Utilities for rock paper scissors (3 actions), from the perspective of player 1.}
    \label{tab:utilities_rps_3a}
\end{table}

\begin{table}
    \centering
    \begin{tabular}{|c|cccc|}
        \hline
        & A & B & C & D \\
        \hline
        A & \(0\) & \(-1\) & \(0\) & \(+1\) \\
        B & \(+1\) & \(0\) & \(-1\) & \(0\) \\
        C & \(0\) & \(+1\) & \(0\) & \(-1\) \\
        D & \(-1\) & \(0\) & \(+1\) & \(0\) \\
        \hline
    \end{tabular}
    \caption{Utilities for rock paper scissors (4 actions), from the perspective of player 1.}
    \label{tab:utilities_rps_4a}
\end{table}

\begin{table}
    \centering
    \begin{tabular}{|c|ccc|}
        \hline
        & A & B & C \\
        \hline
        A & \(1\) & \(0\) & \(0\) \\
        B & \(0\) & \(1\) & \(0\) \\
        C & \(0\) & \(0\) & \(1\) \\
        \hline
    \end{tabular}
    \hspace{1em}
    \begin{tabular}{|c|ccc|}
        \hline
        & A & B & C \\
        \hline
        A & \(0\) & \(1\) & \(0\) \\
        B & \(0\) & \(0\) & \(1\) \\
        C & \(1\) & \(0\) & \(0\) \\
        \hline
    \end{tabular}
    \caption{Utilities for Shapley game (players 1 and 2).}
    \label{tab:utilities_shapley}
\end{table}

\begin{figure}
    \centering
    \includegraphics[width=.4\linewidth]{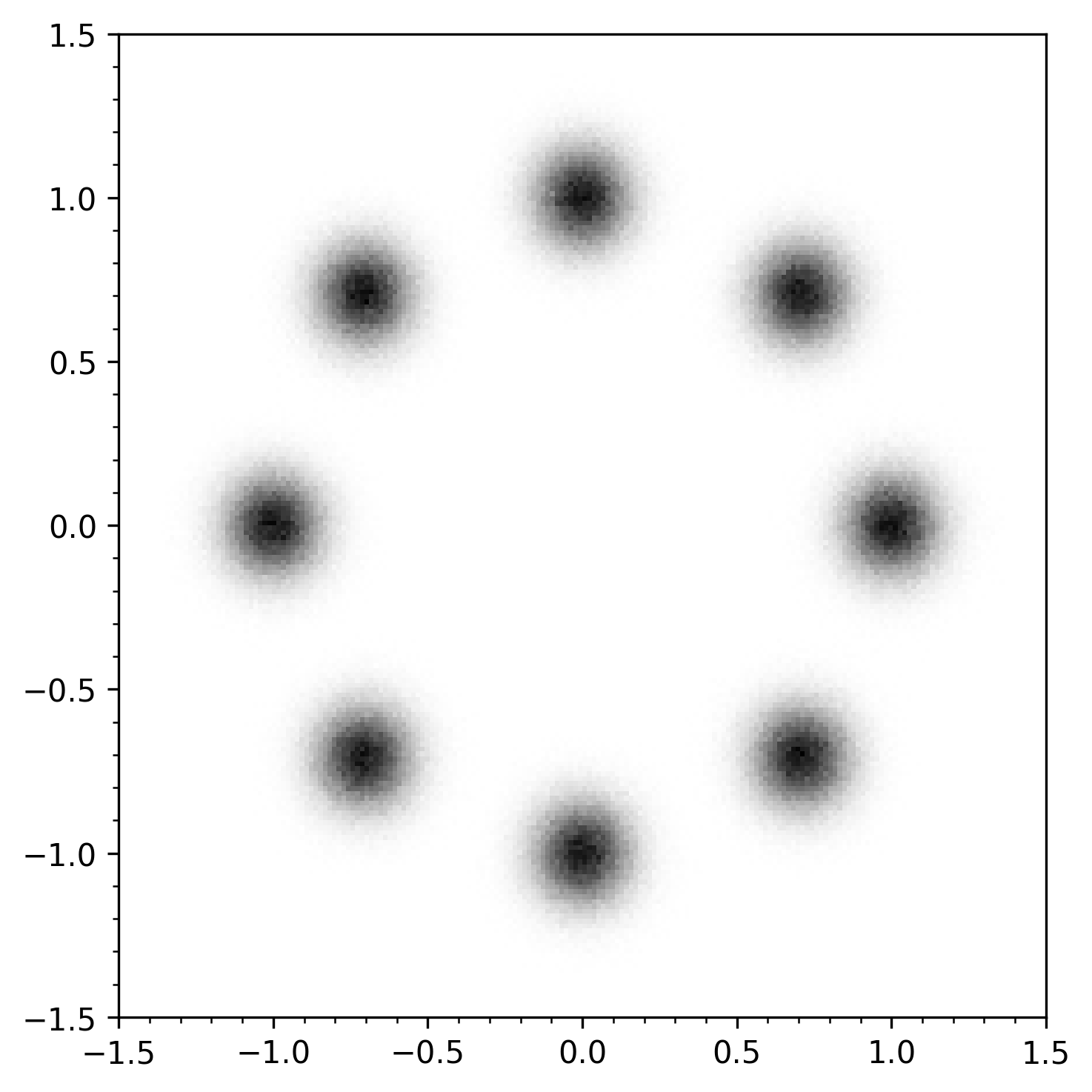}%
    \includegraphics[width=.4\linewidth]{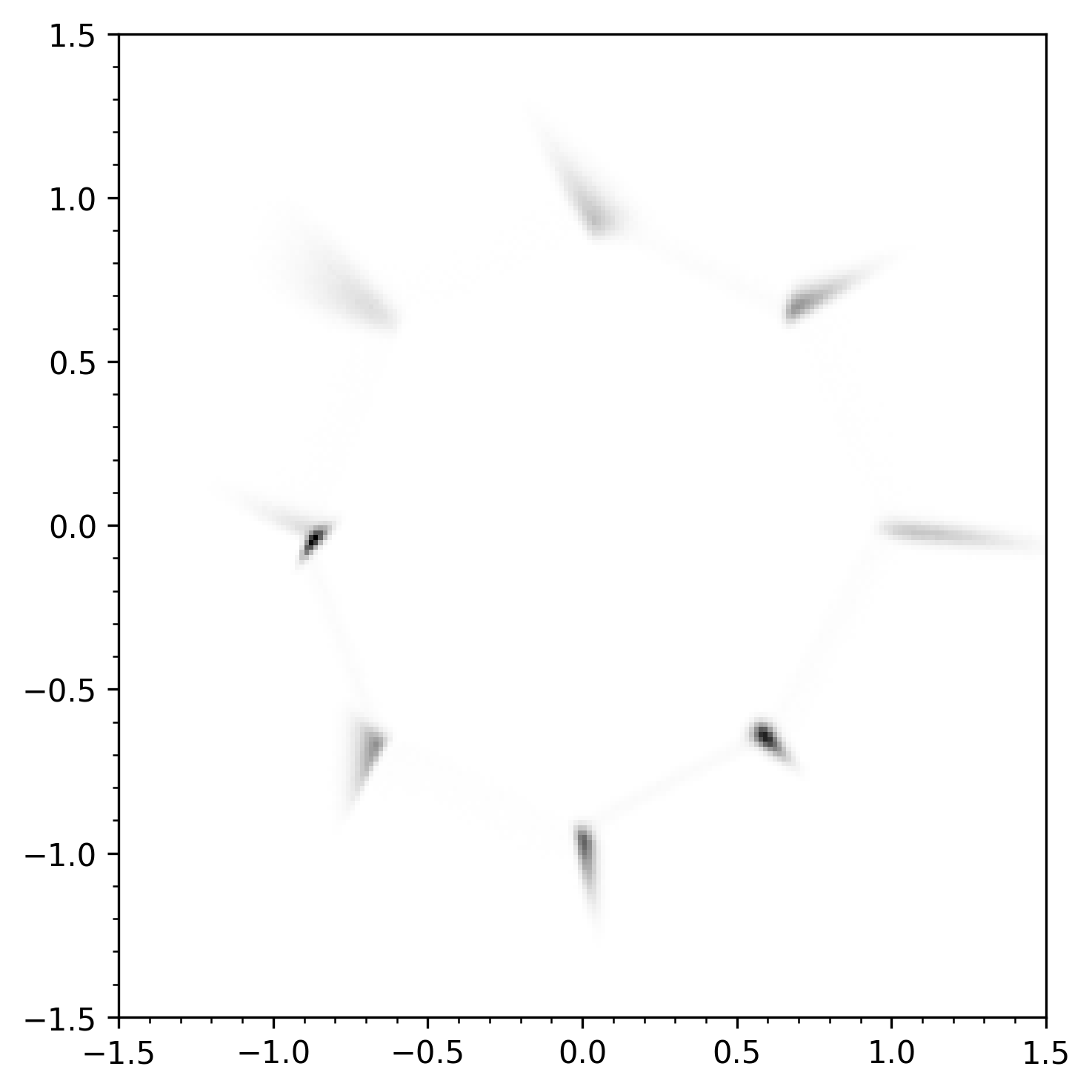}
    \includegraphics[width=.4\linewidth]{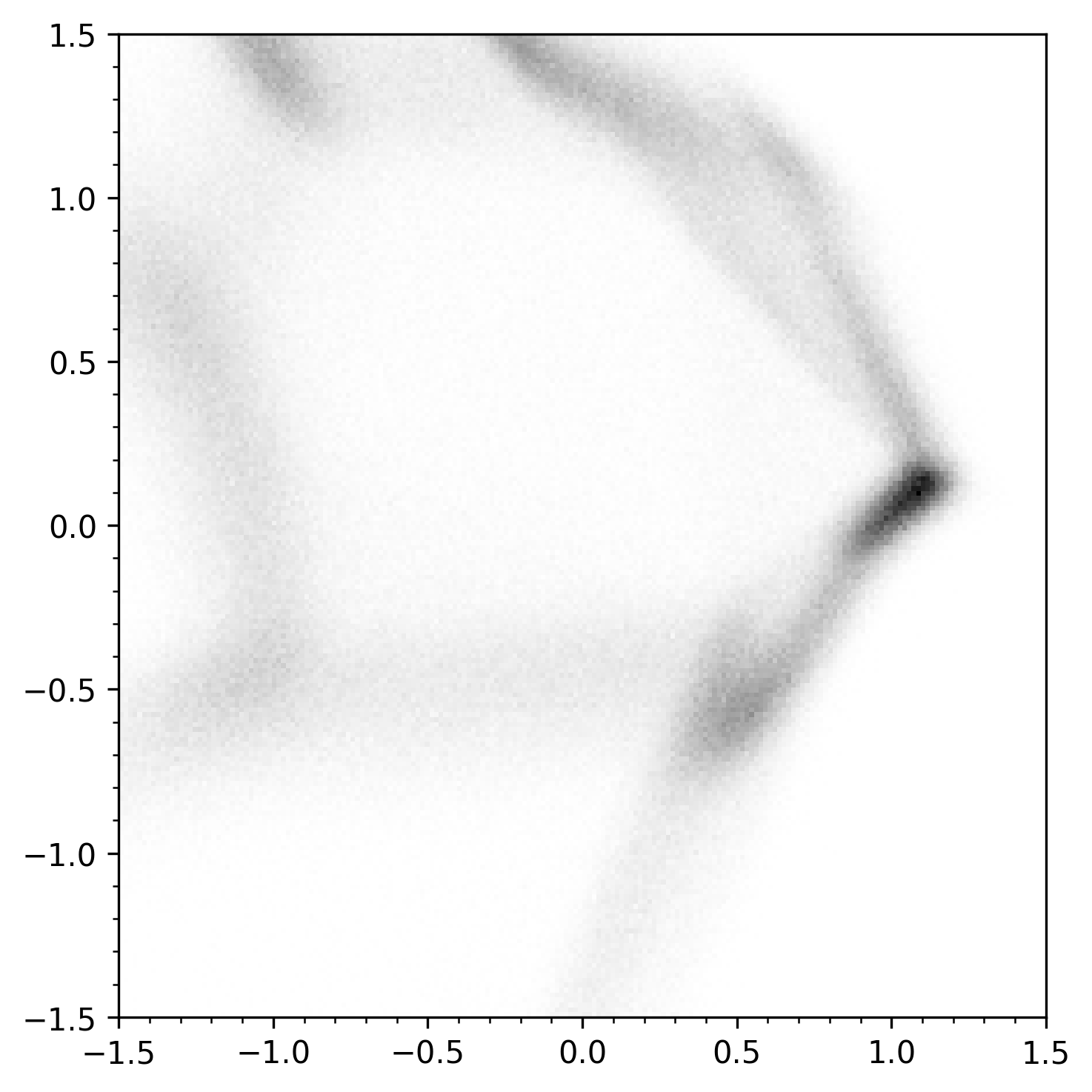}%
    \includegraphics[width=.4\linewidth]{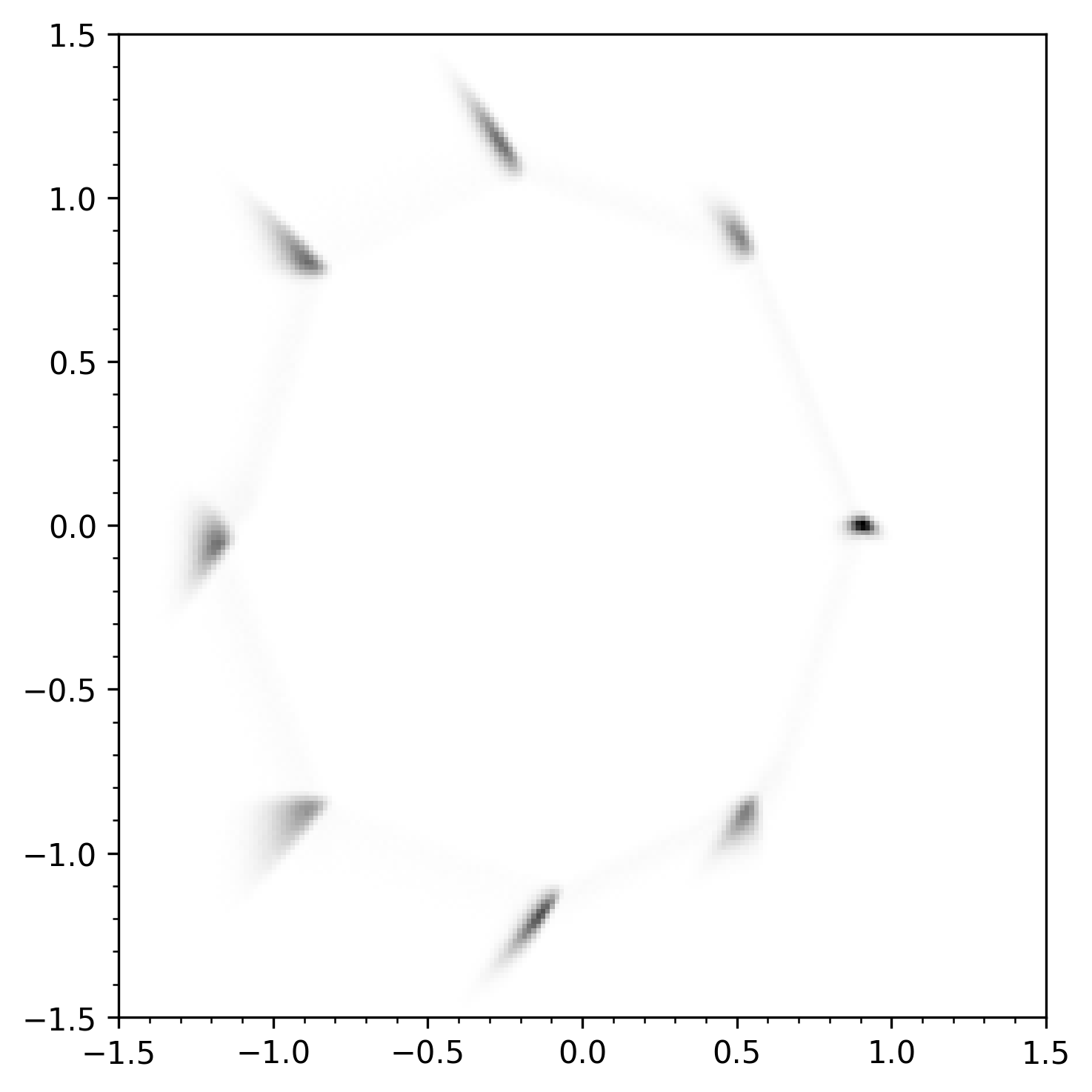}
    \includegraphics[width=.4\linewidth]{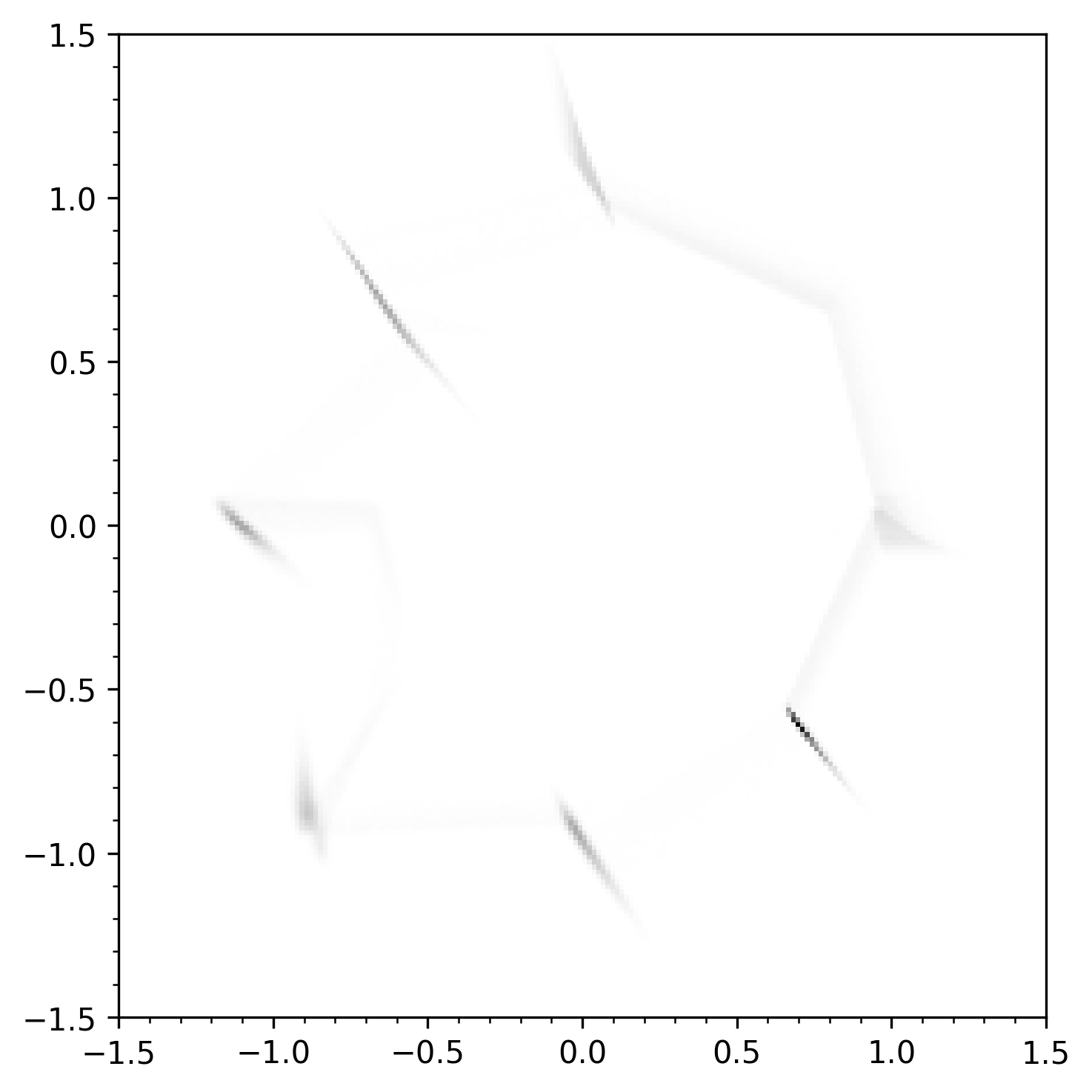}%
    \includegraphics[width=.4\linewidth]{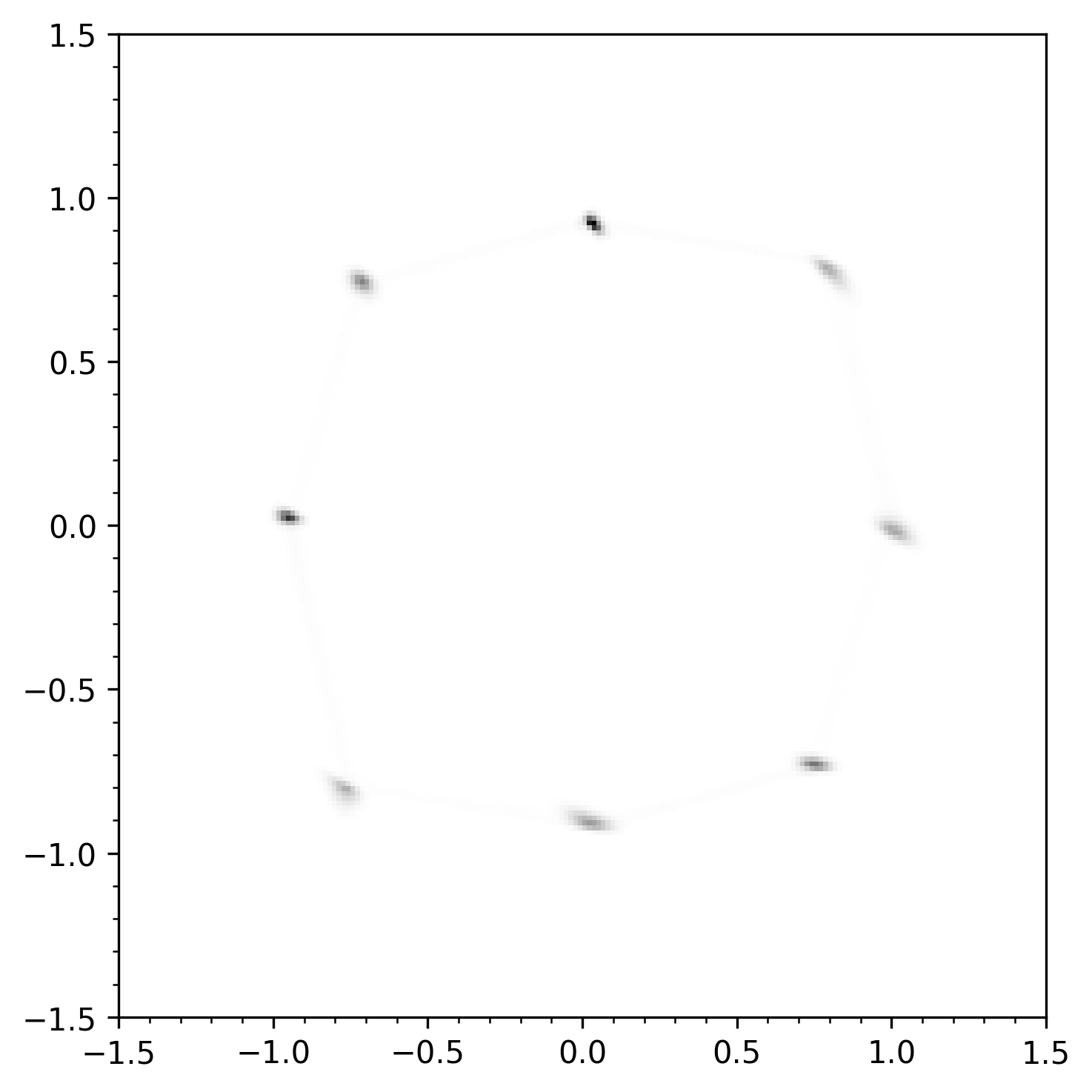}
    \includegraphics[width=.4\linewidth]{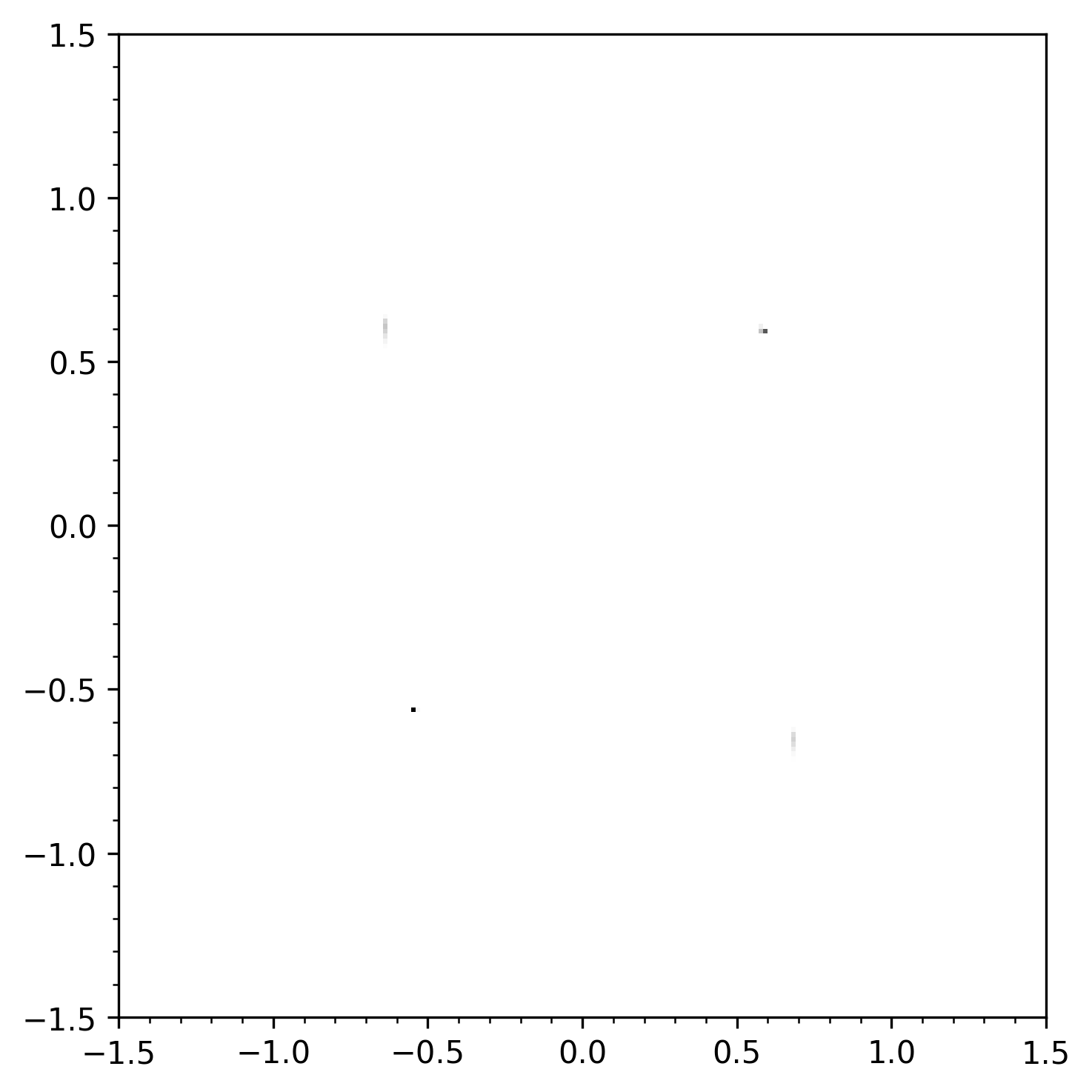}%
    \includegraphics[width=.4\linewidth]{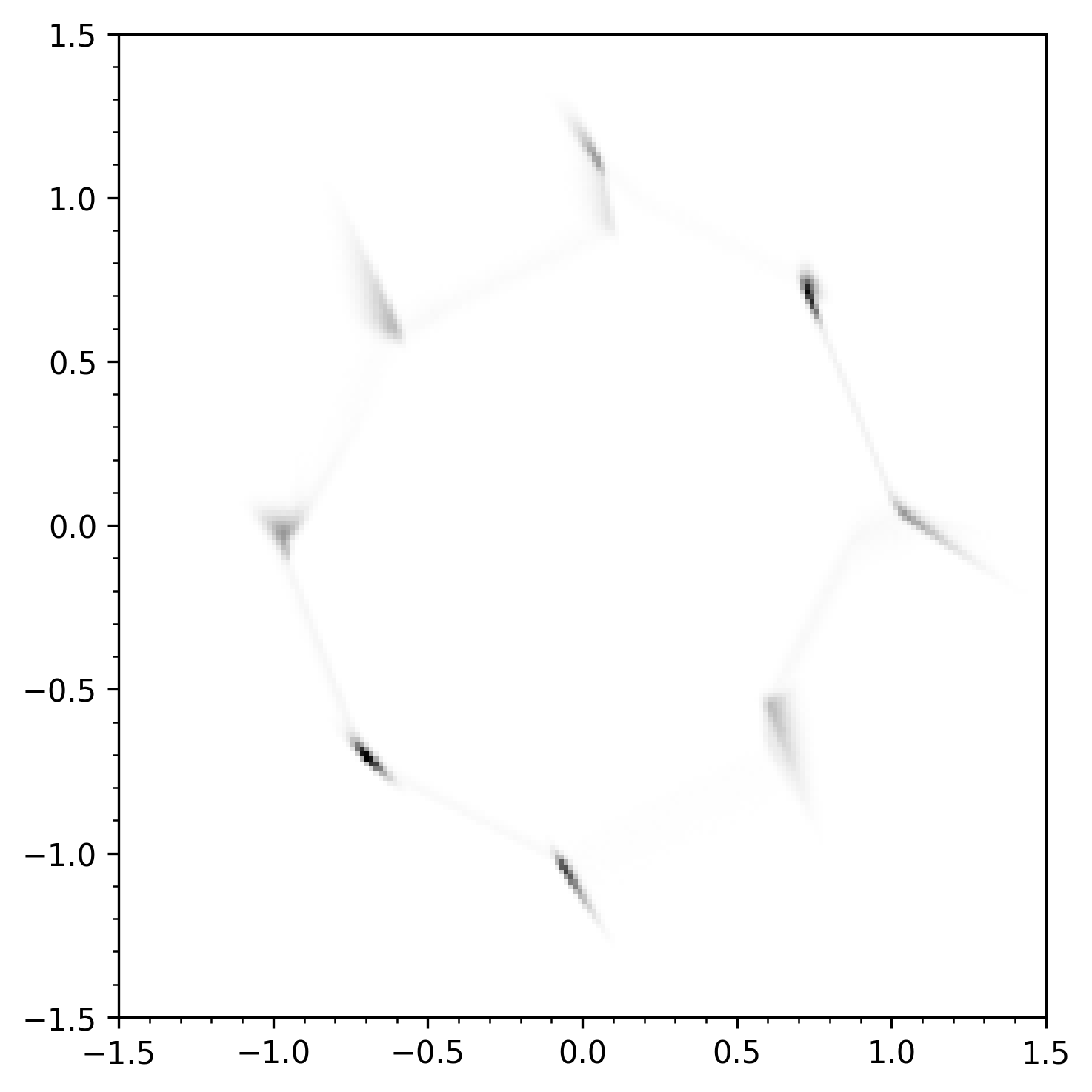}
    \includegraphics[width=.4\linewidth]{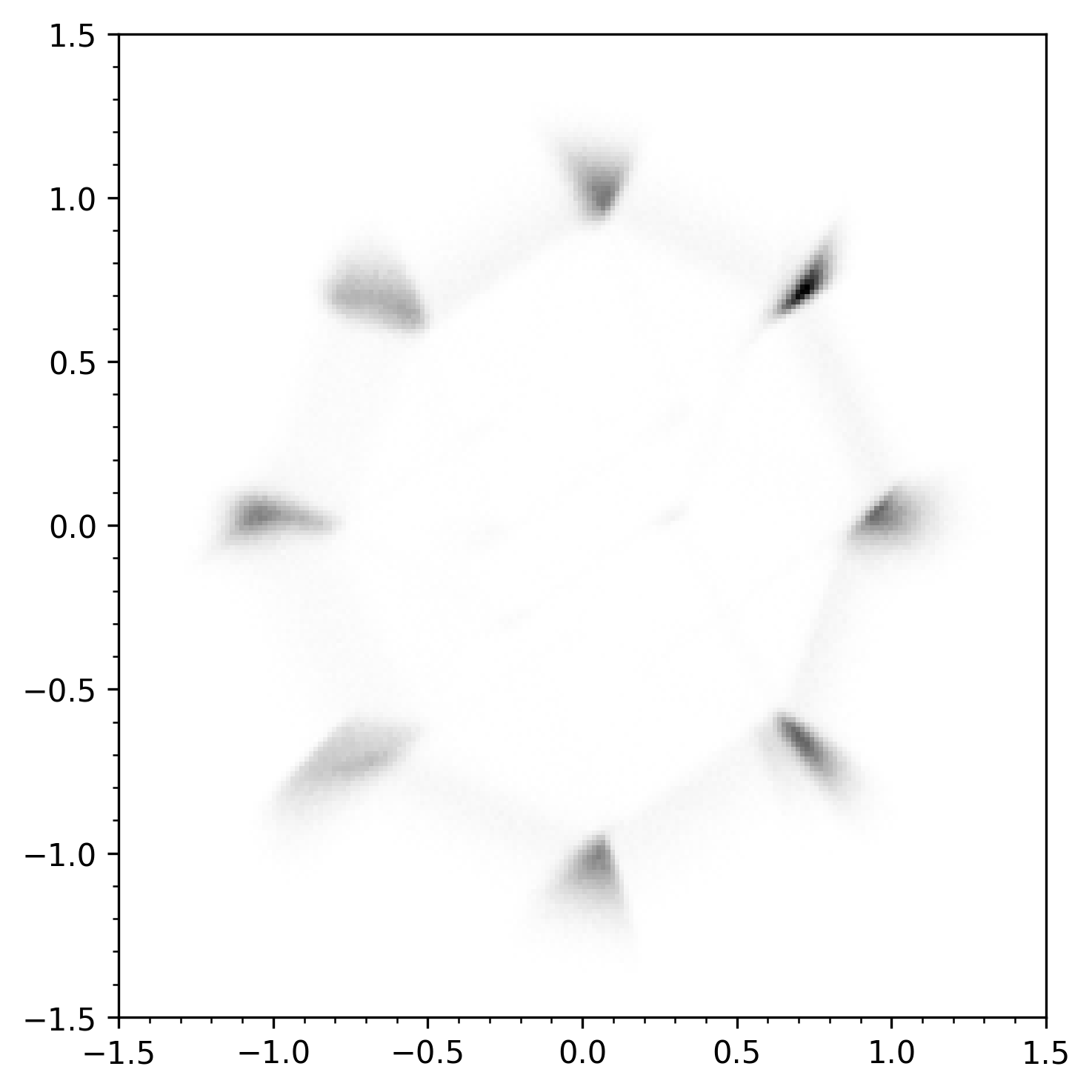}%
    \includegraphics[width=.4\linewidth]{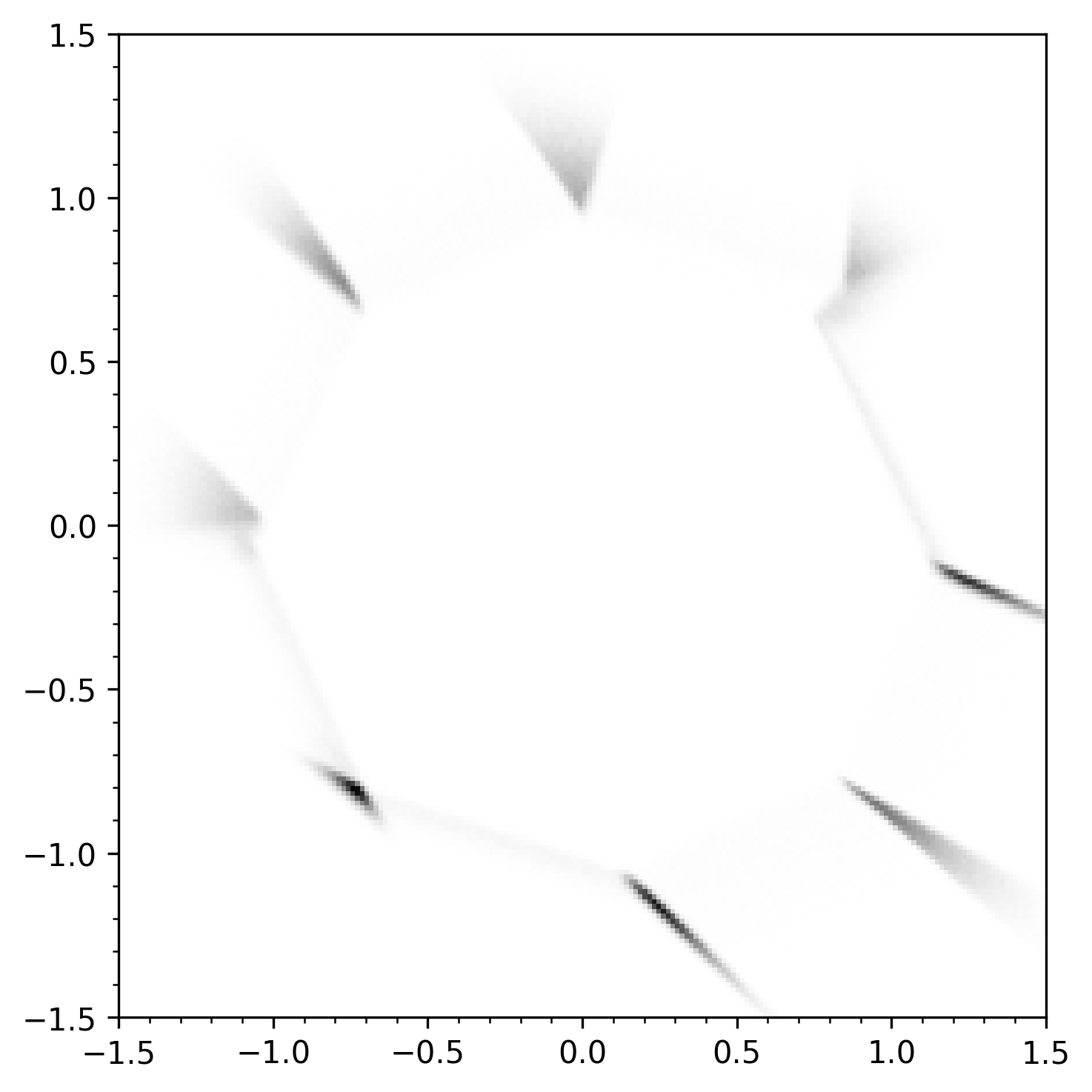}
    \includegraphics[width=.4\linewidth]{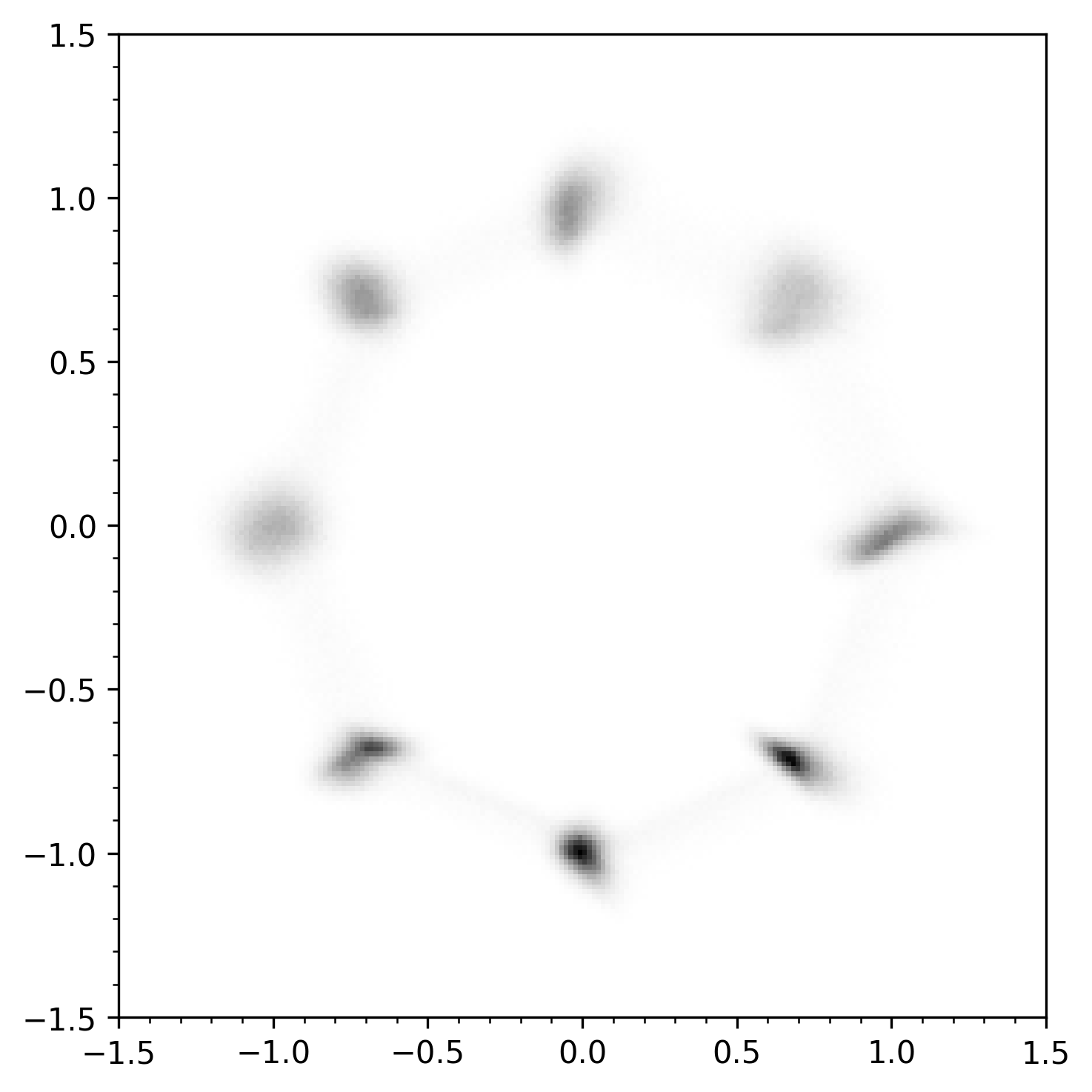}%
    \includegraphics[width=.4\linewidth]{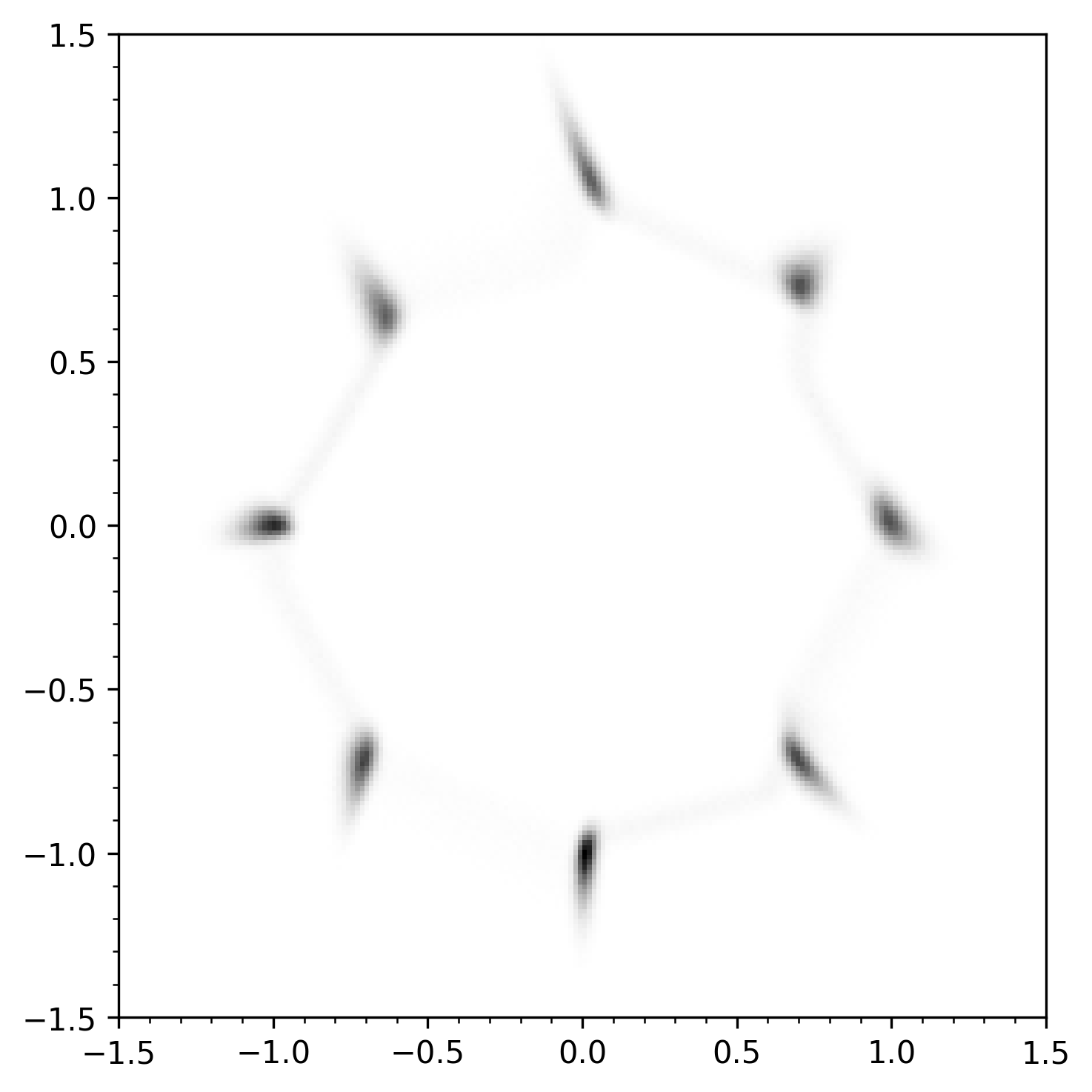}
    \caption{GAN (ring dataset).
    Left to right, top to bottom: Ground truth, SG, OP, EG, CO, SGA, GNI, LA, LOLA, EDA, BRF, BRE. SLA excluded due to divergence.}
    \label{fig:gan_images_ring}
\end{figure}

\begin{figure}
    \centering
    \includegraphics[width=.4\linewidth]{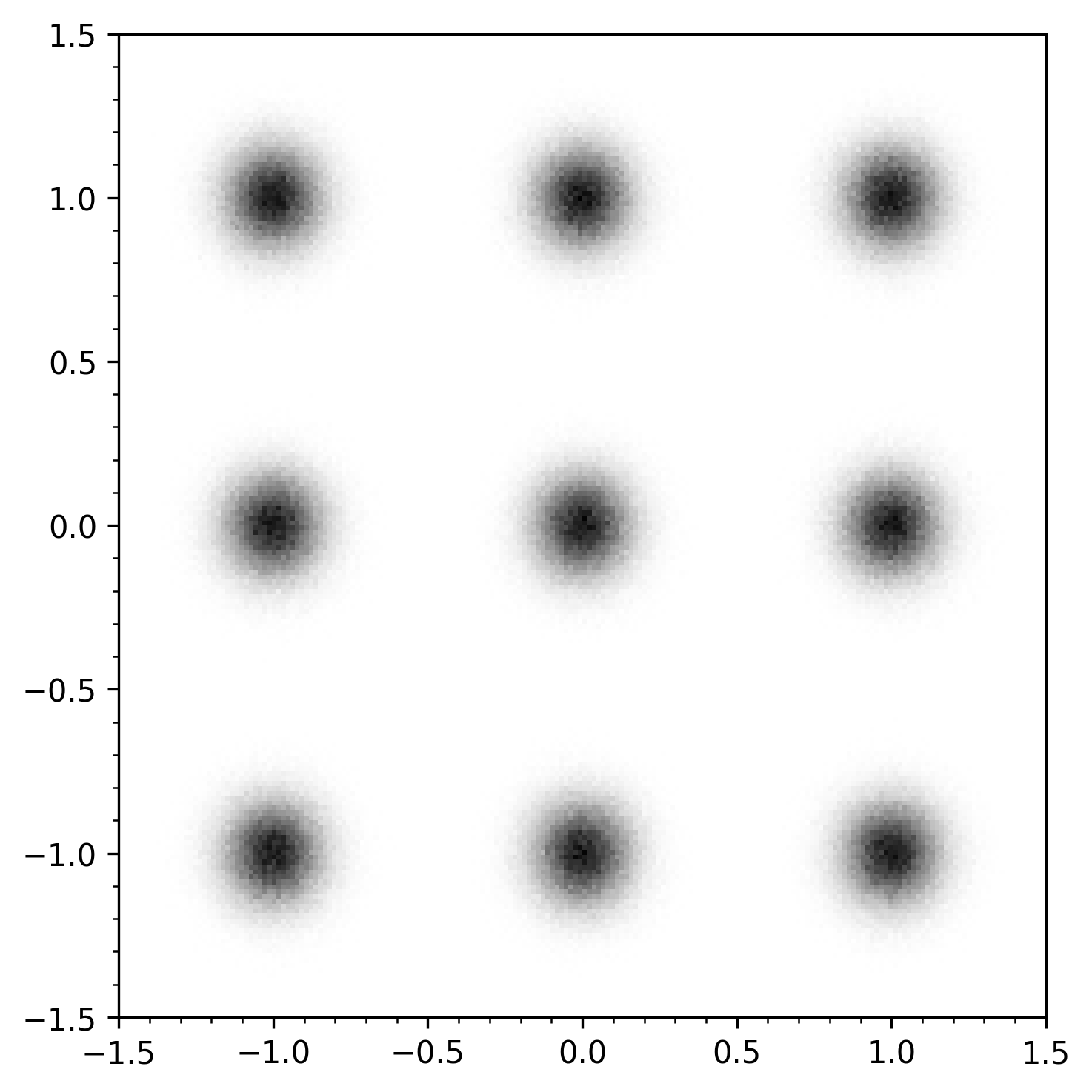}%
    \includegraphics[width=.4\linewidth]{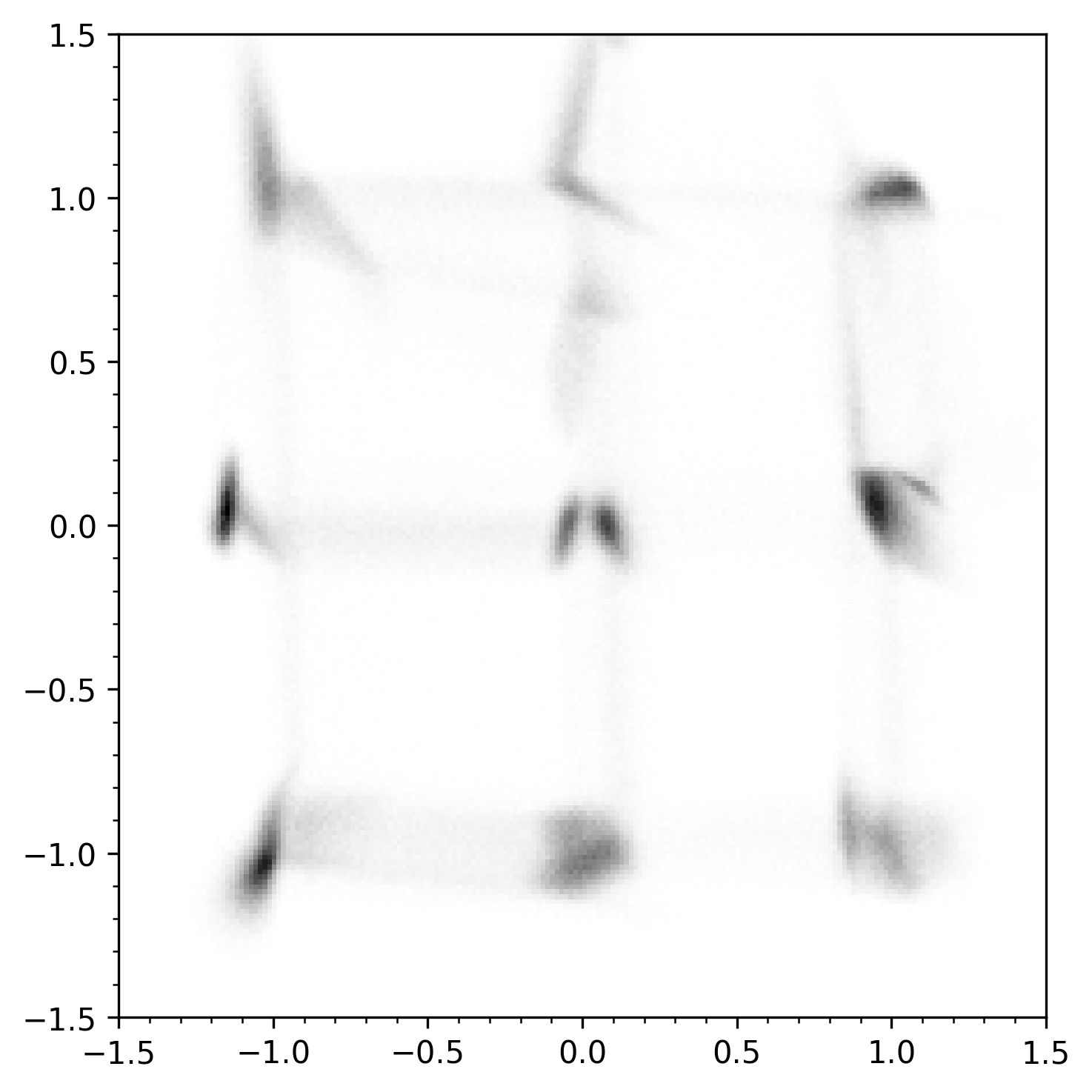}
    \includegraphics[width=.4\linewidth]{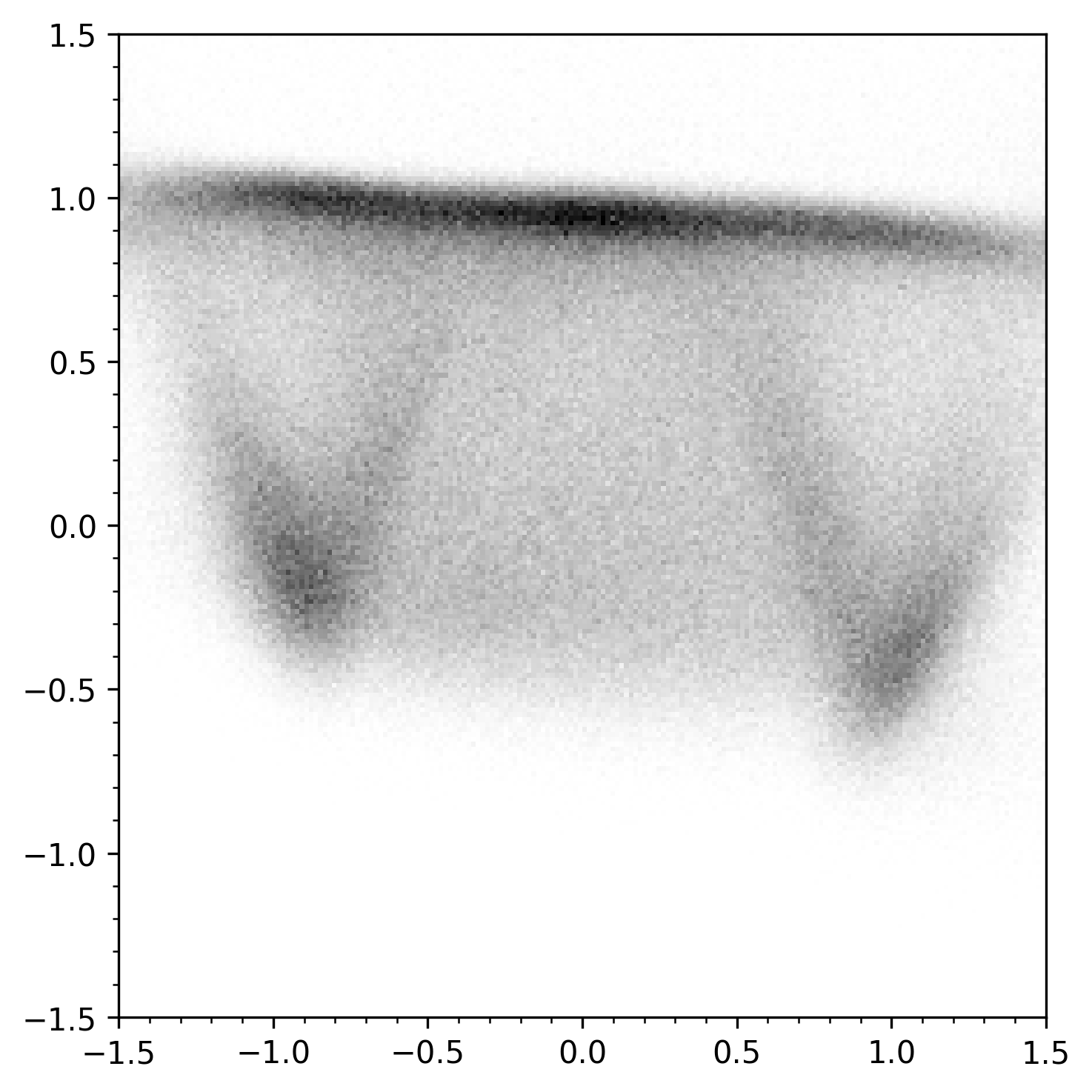}%
    \includegraphics[width=.4\linewidth]{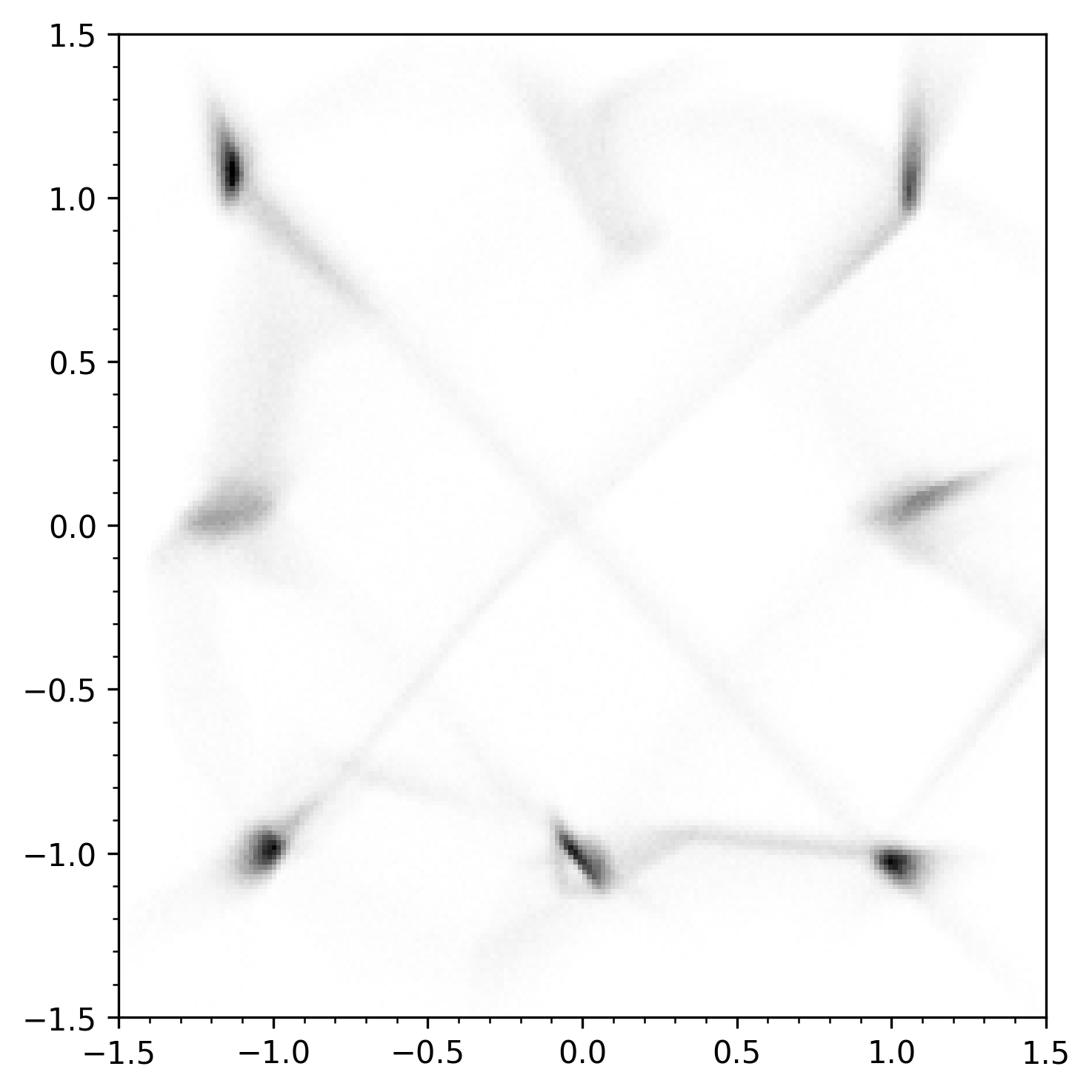}
    \includegraphics[width=.4\linewidth]{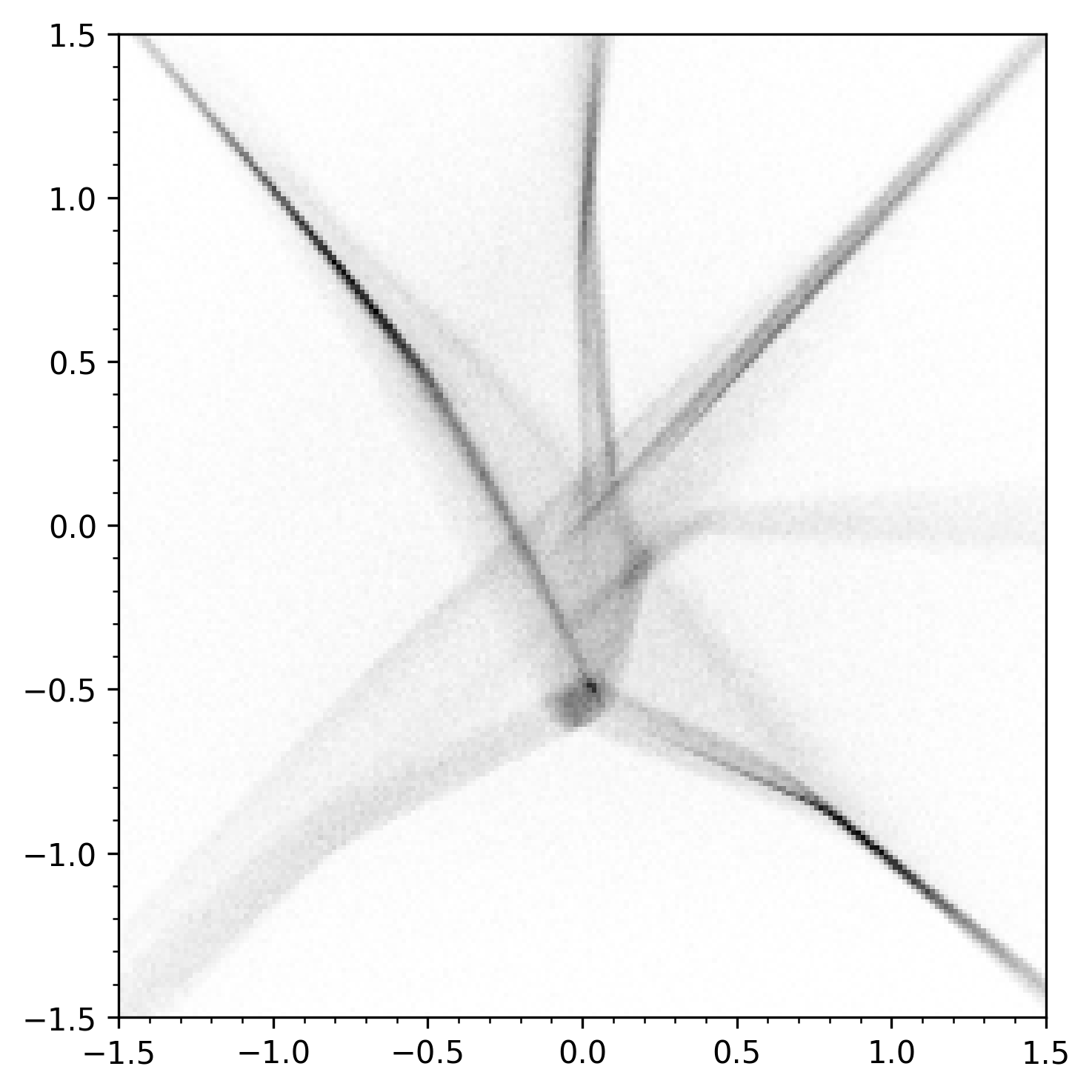}%
    \includegraphics[width=.4\linewidth]{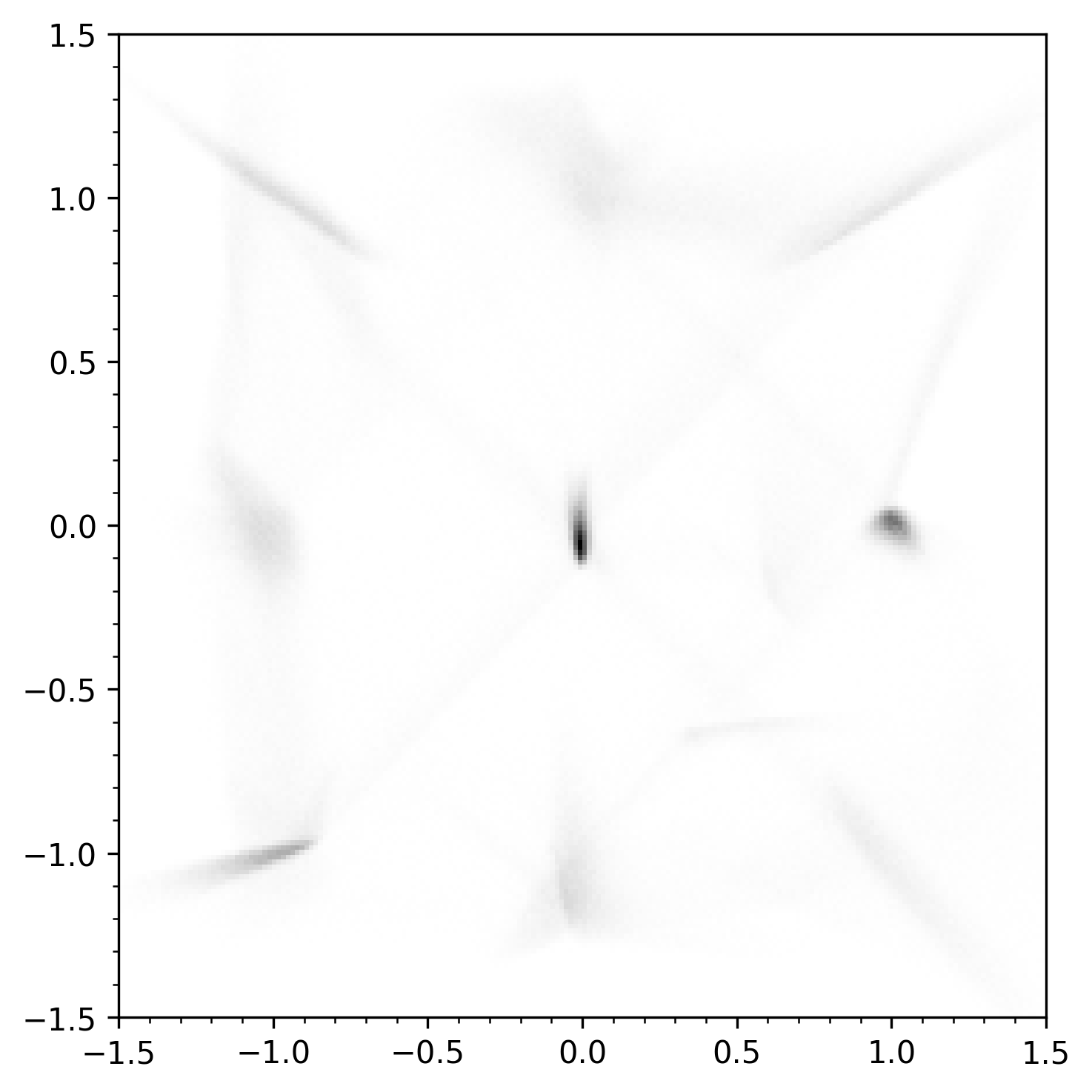}
    \includegraphics[width=.4\linewidth]{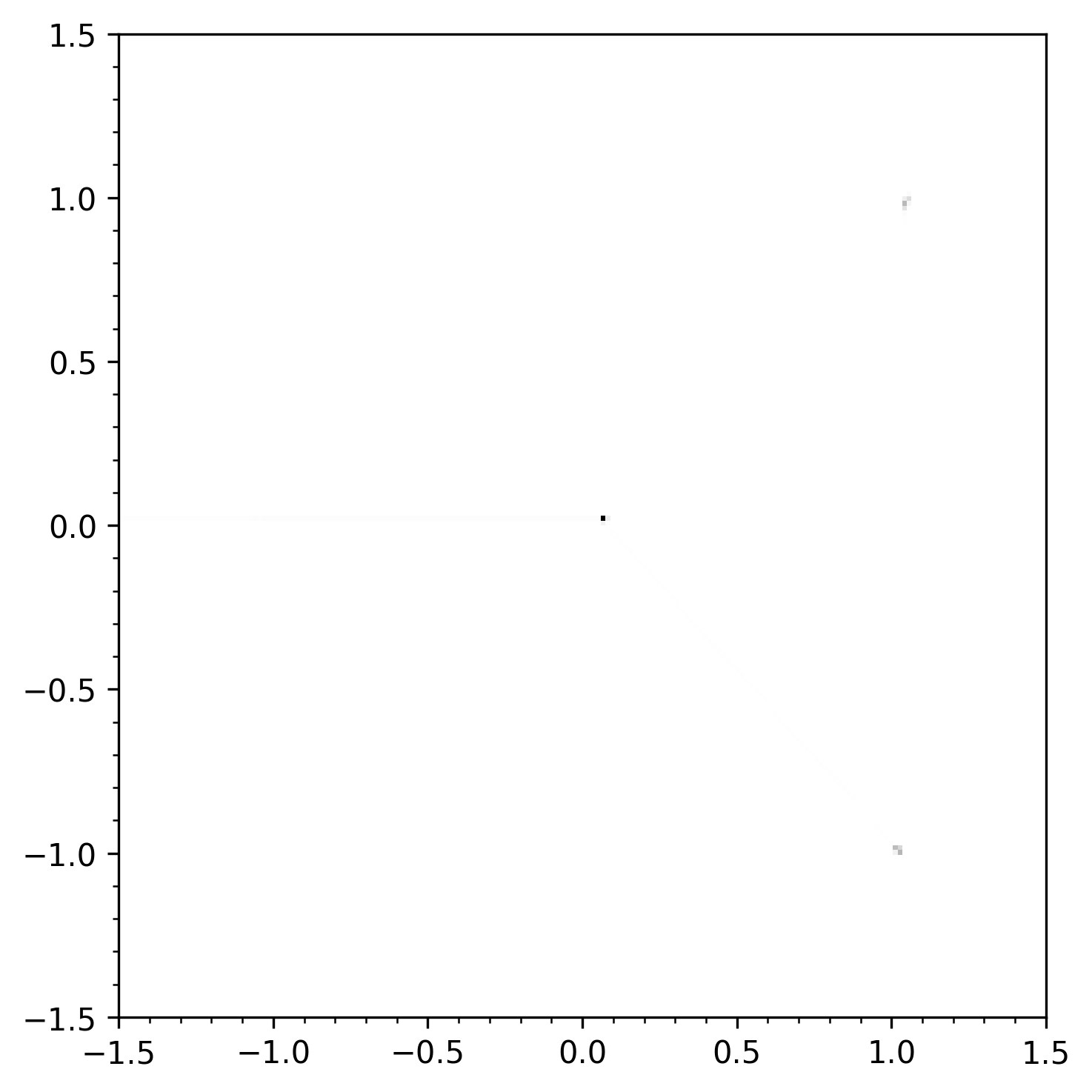}%
    \includegraphics[width=.4\linewidth]{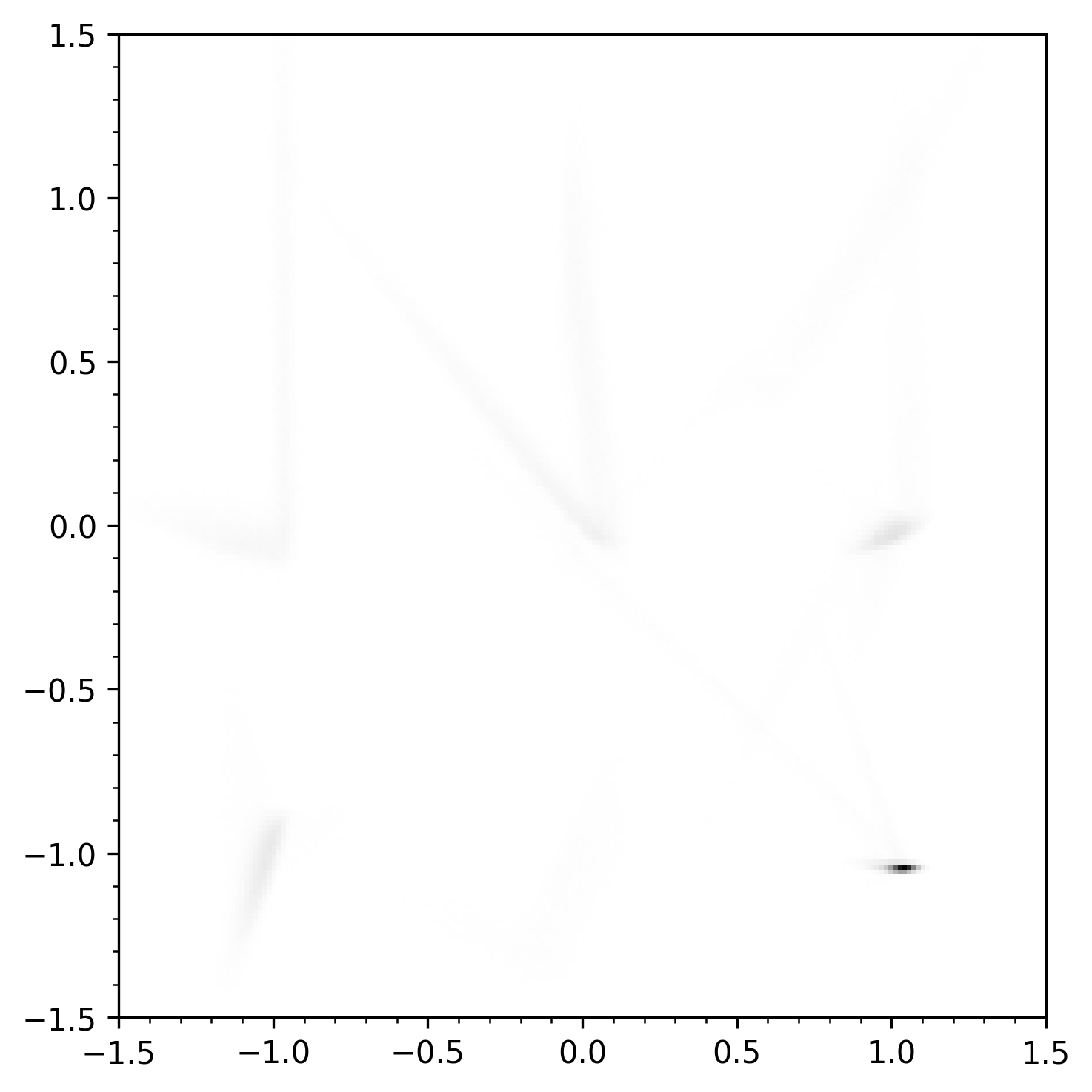}
    \includegraphics[width=.4\linewidth]{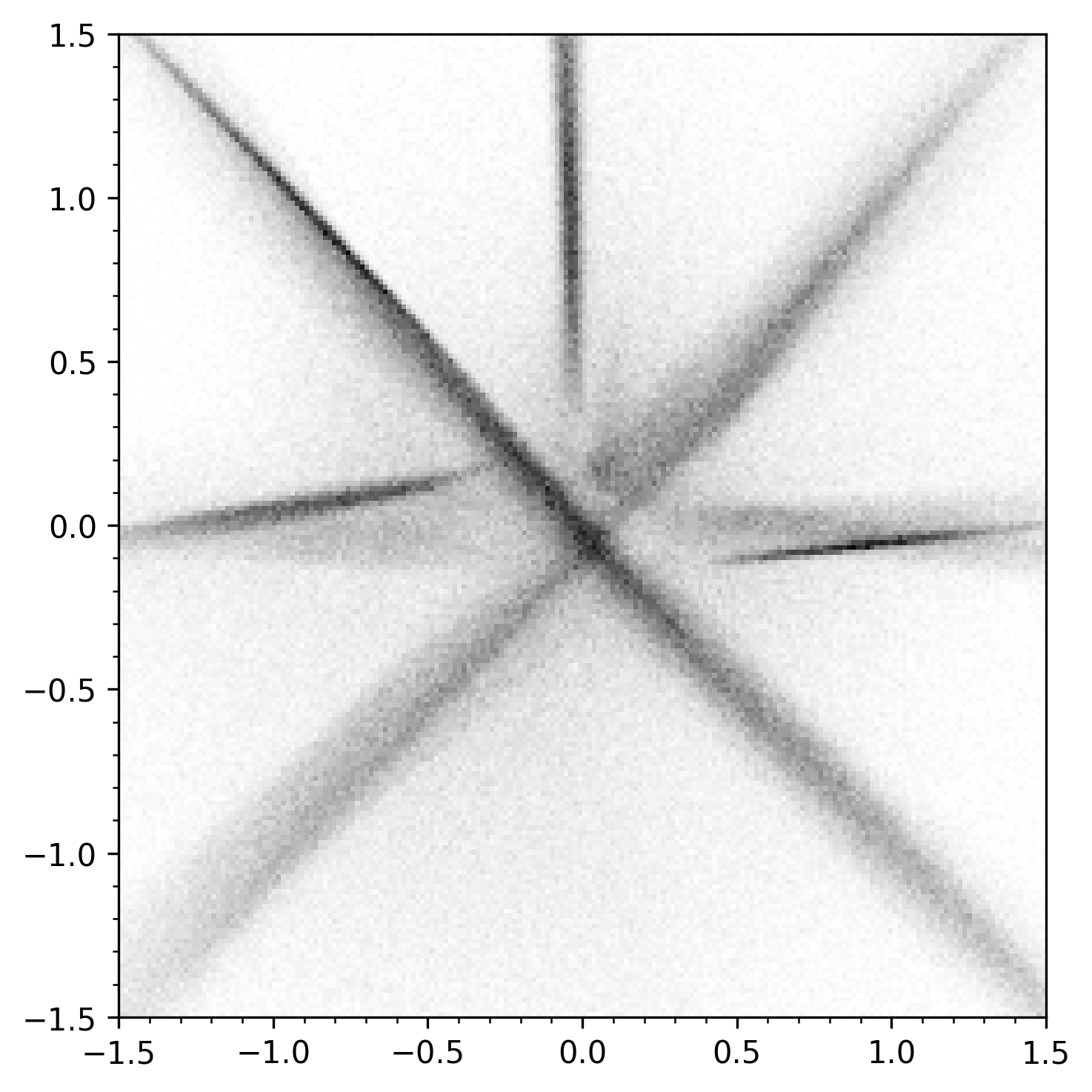}%
    \includegraphics[width=.4\linewidth]{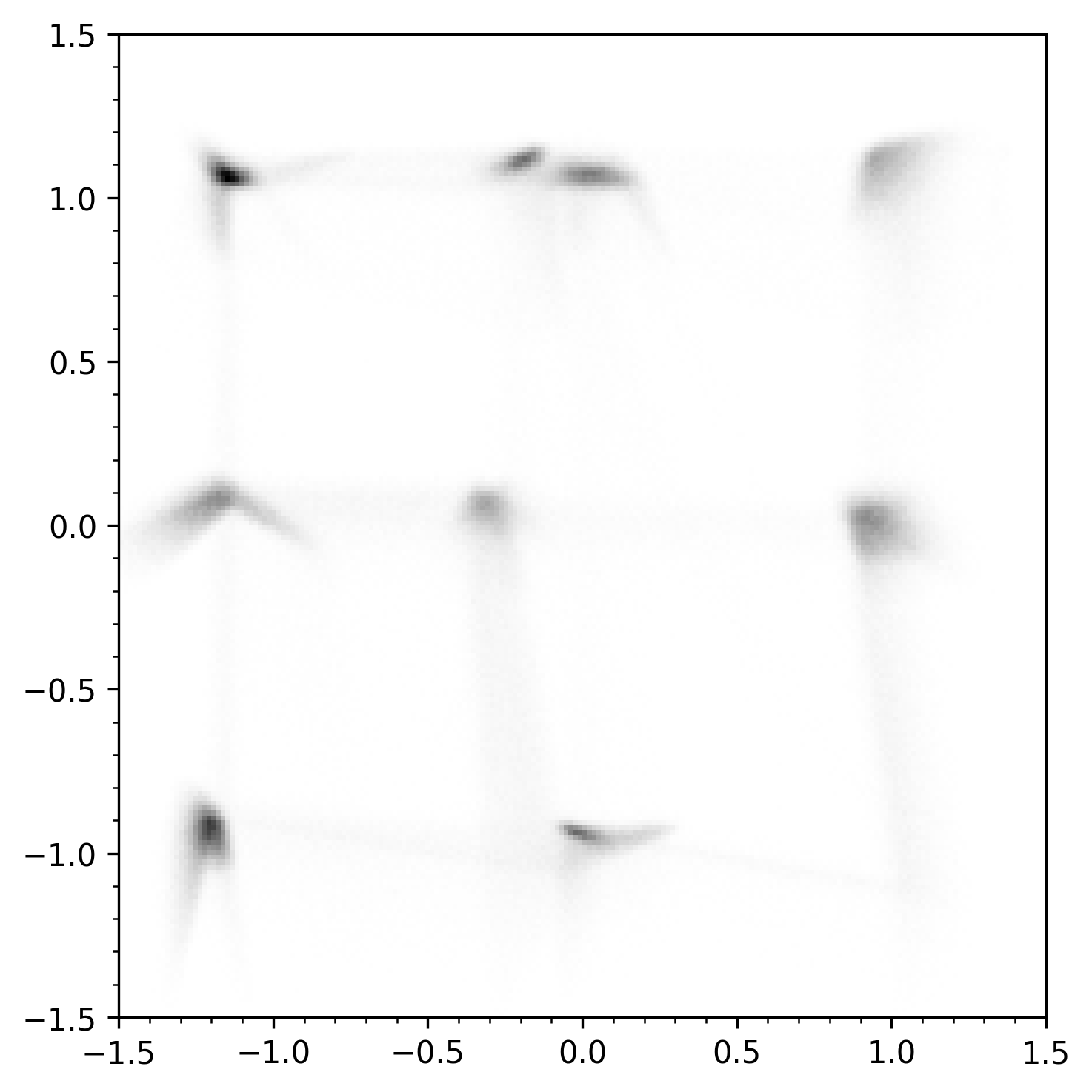}
    \includegraphics[width=.4\linewidth]{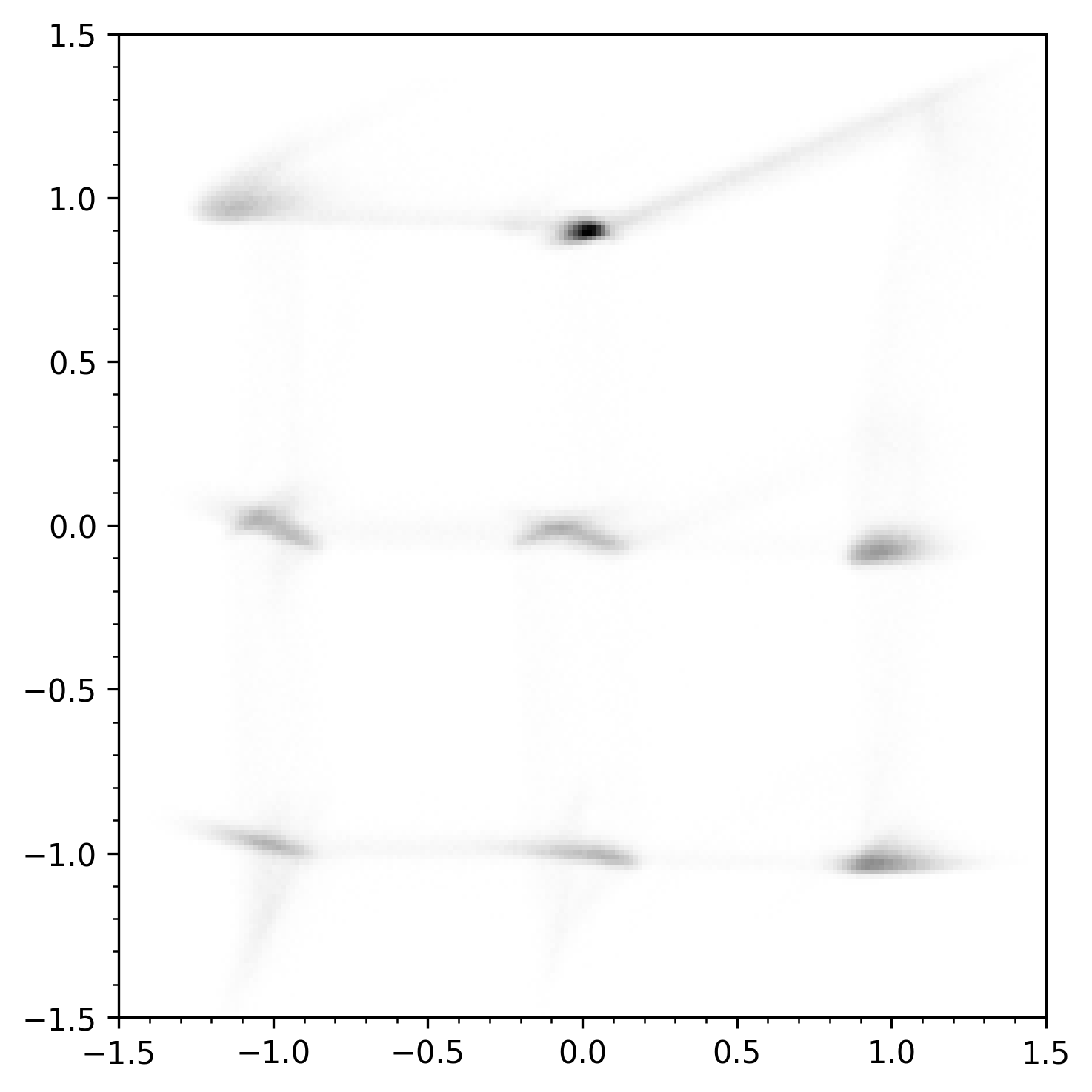}%
    \includegraphics[width=.4\linewidth]{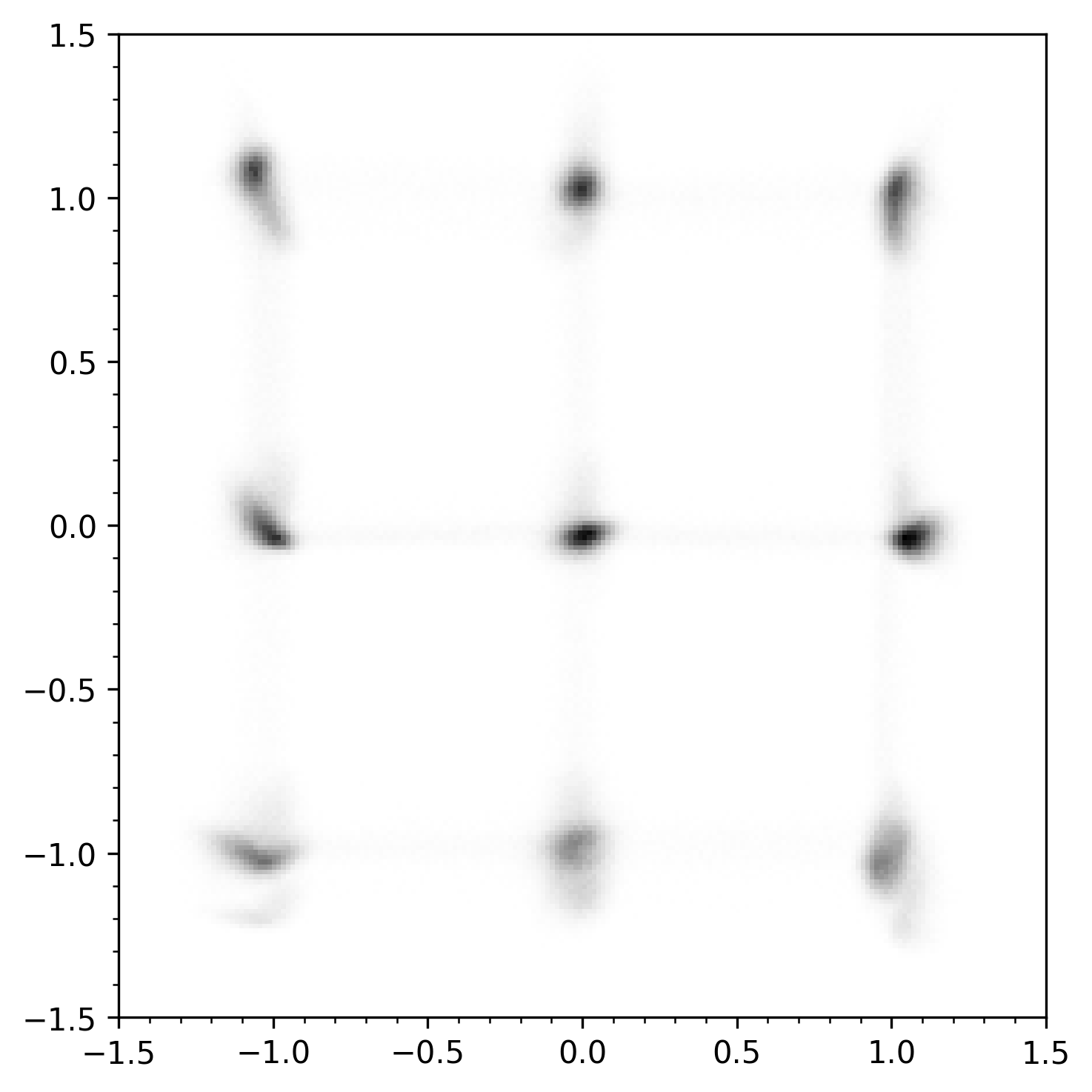}
    \caption{GAN (grid dataset).
    Left to right, top to bottom: Ground truth, SG, OP, EG, CO, SGA, GNI, LA, LOLA, EDA, BRF, BRE. SLA excluded due to divergence.}
    \label{fig:gan_images_grid}
\end{figure}

\begin{figure}
    \centering
    \includegraphics[width=.4\linewidth]{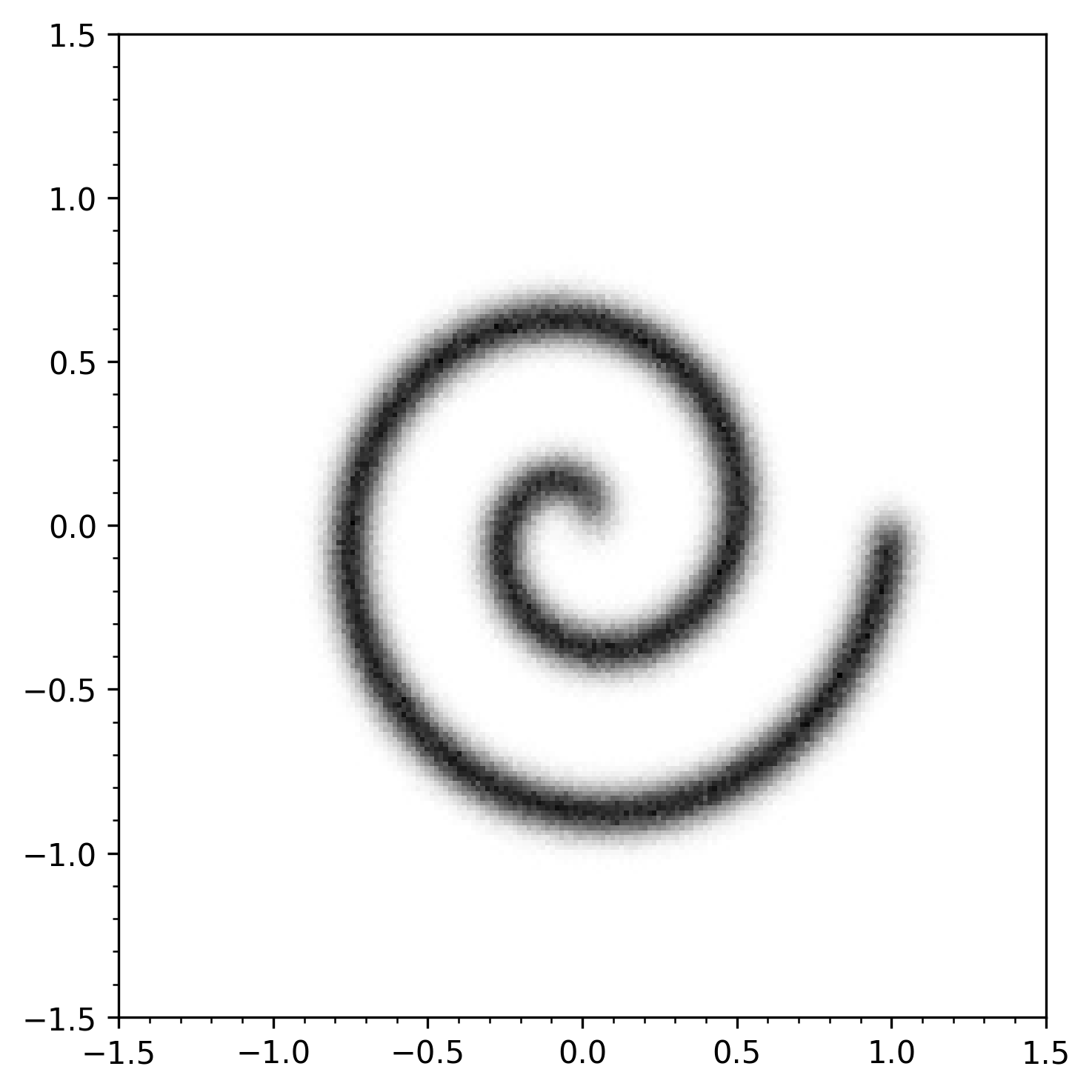}%
    \includegraphics[width=.4\linewidth]{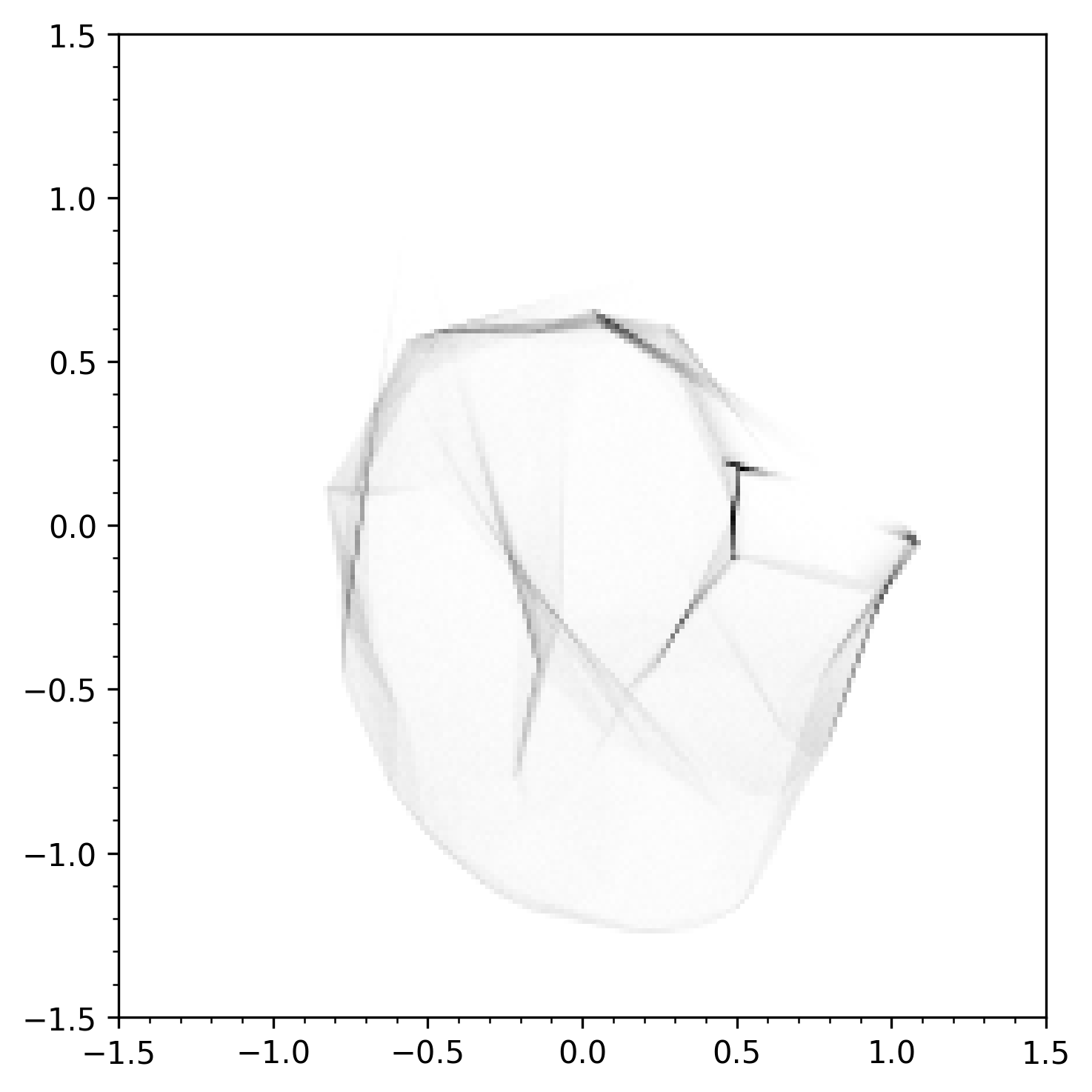}
    \includegraphics[width=.4\linewidth]{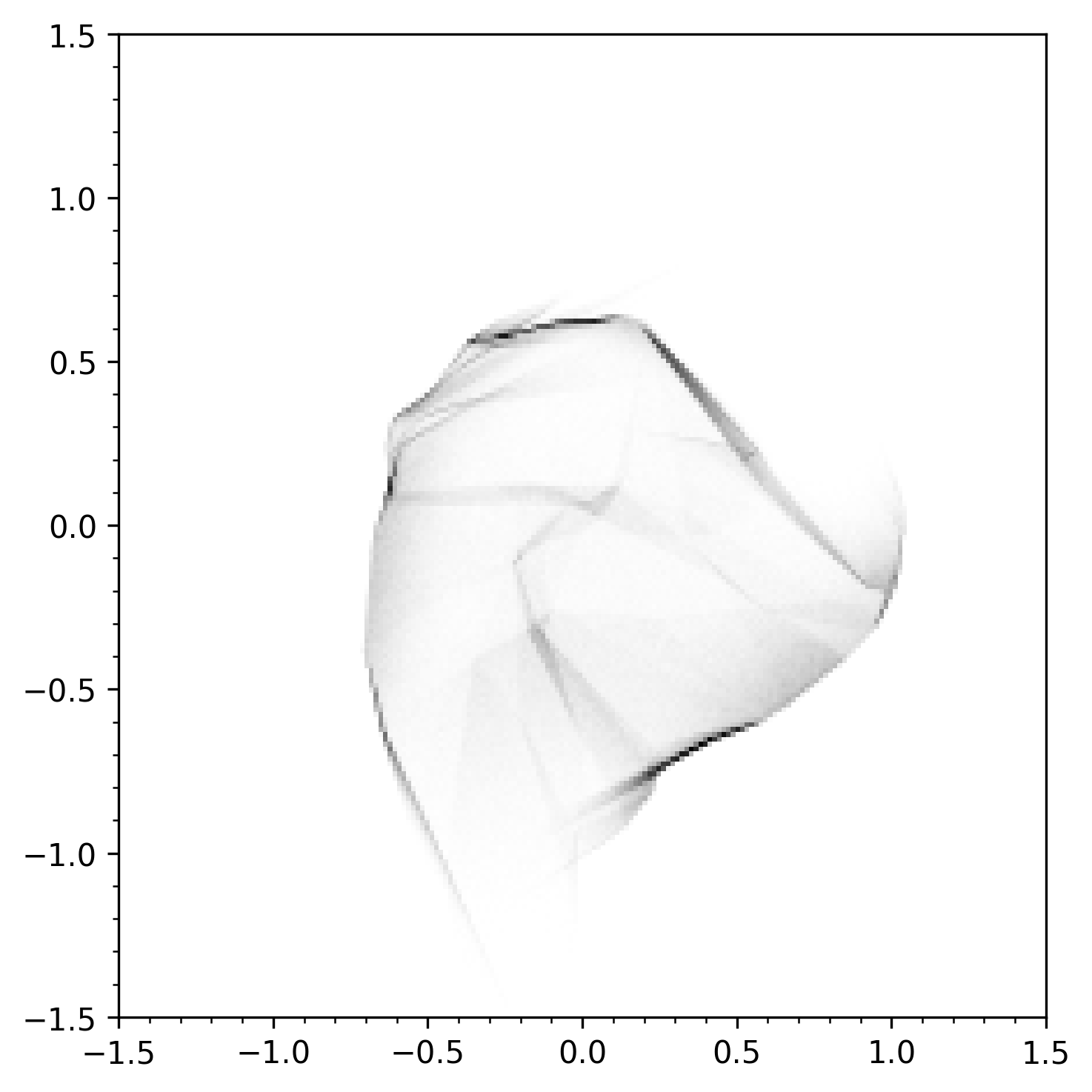}%
    \includegraphics[width=.4\linewidth]{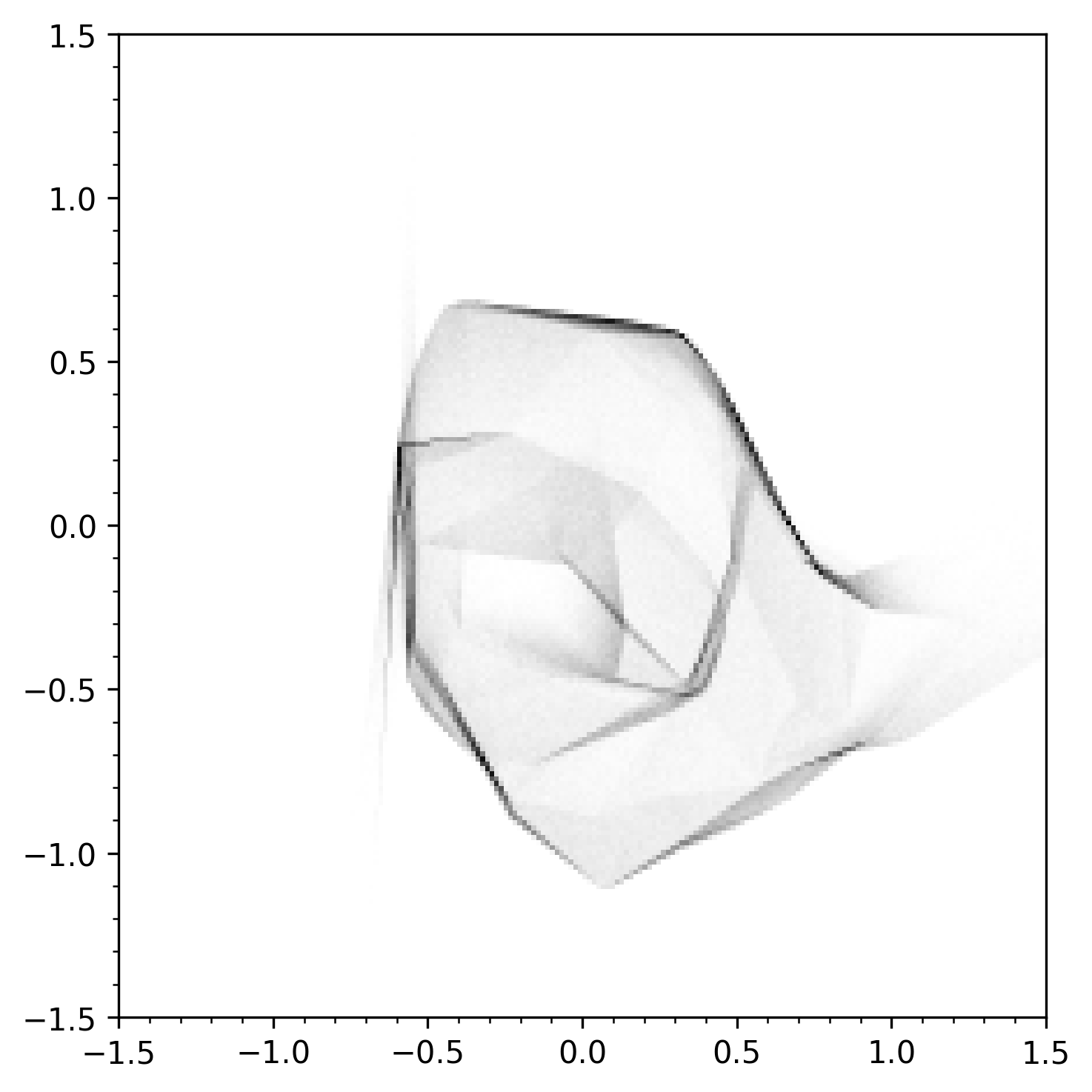}
    \includegraphics[width=.4\linewidth]{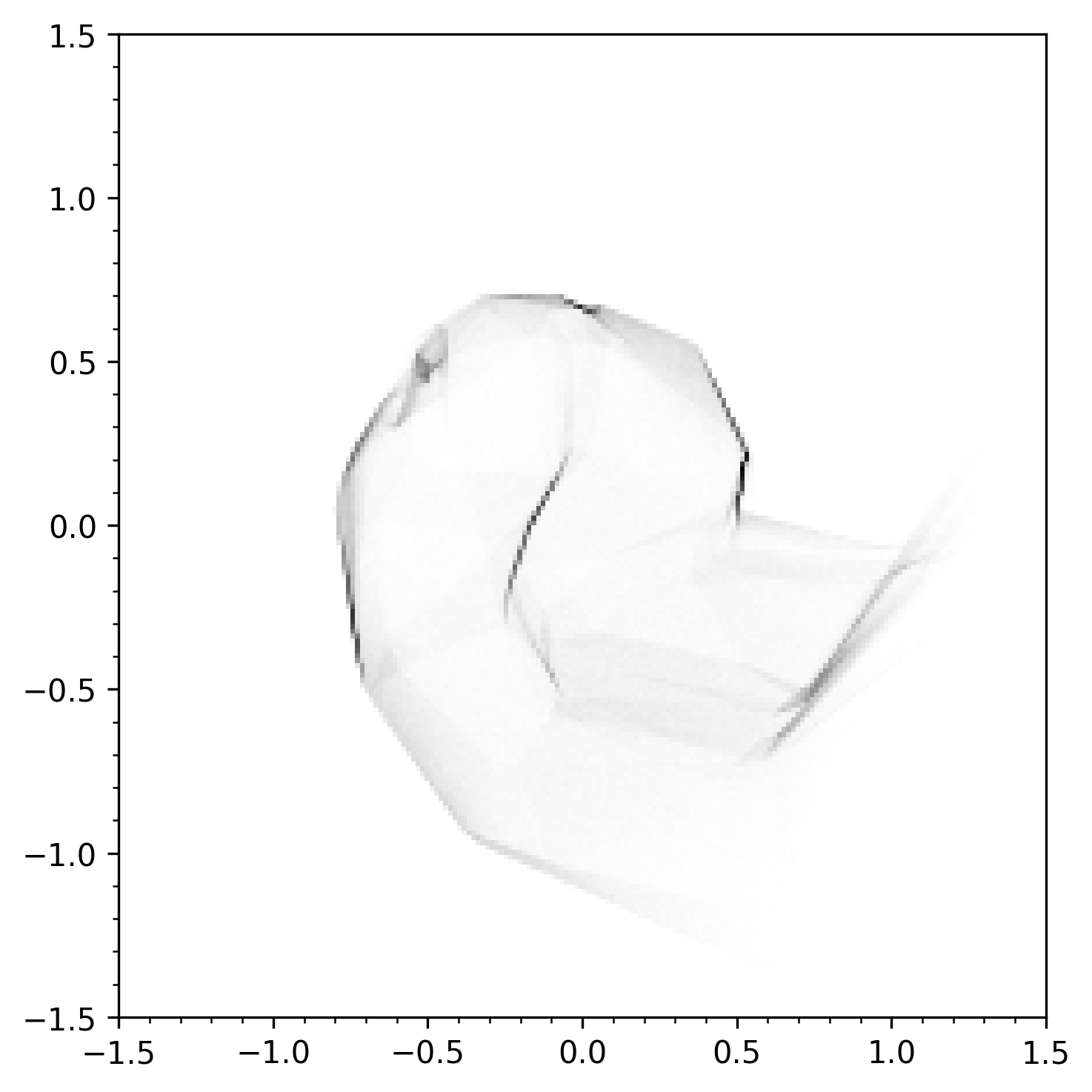}%
    \includegraphics[width=.4\linewidth]{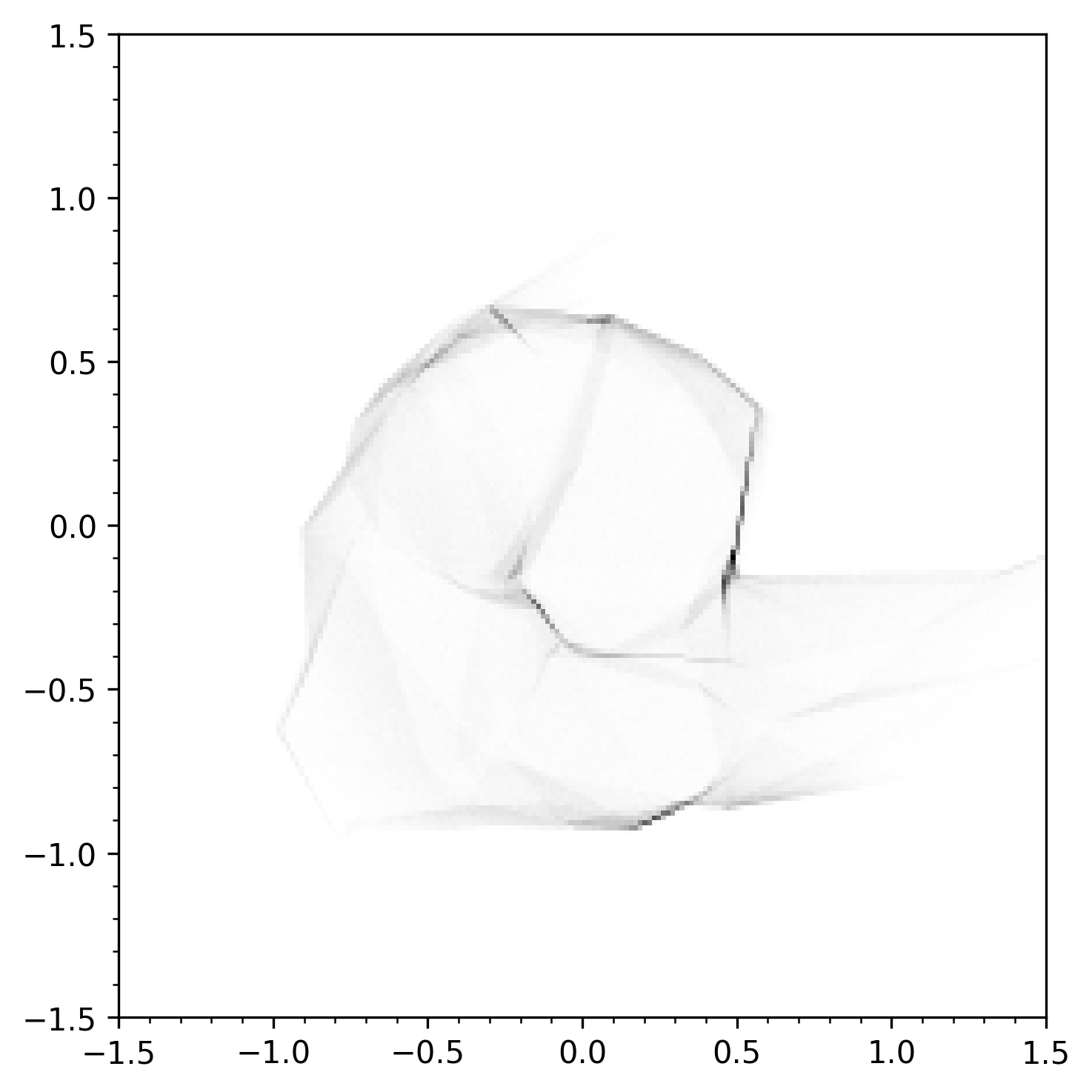}
    \includegraphics[width=.4\linewidth]{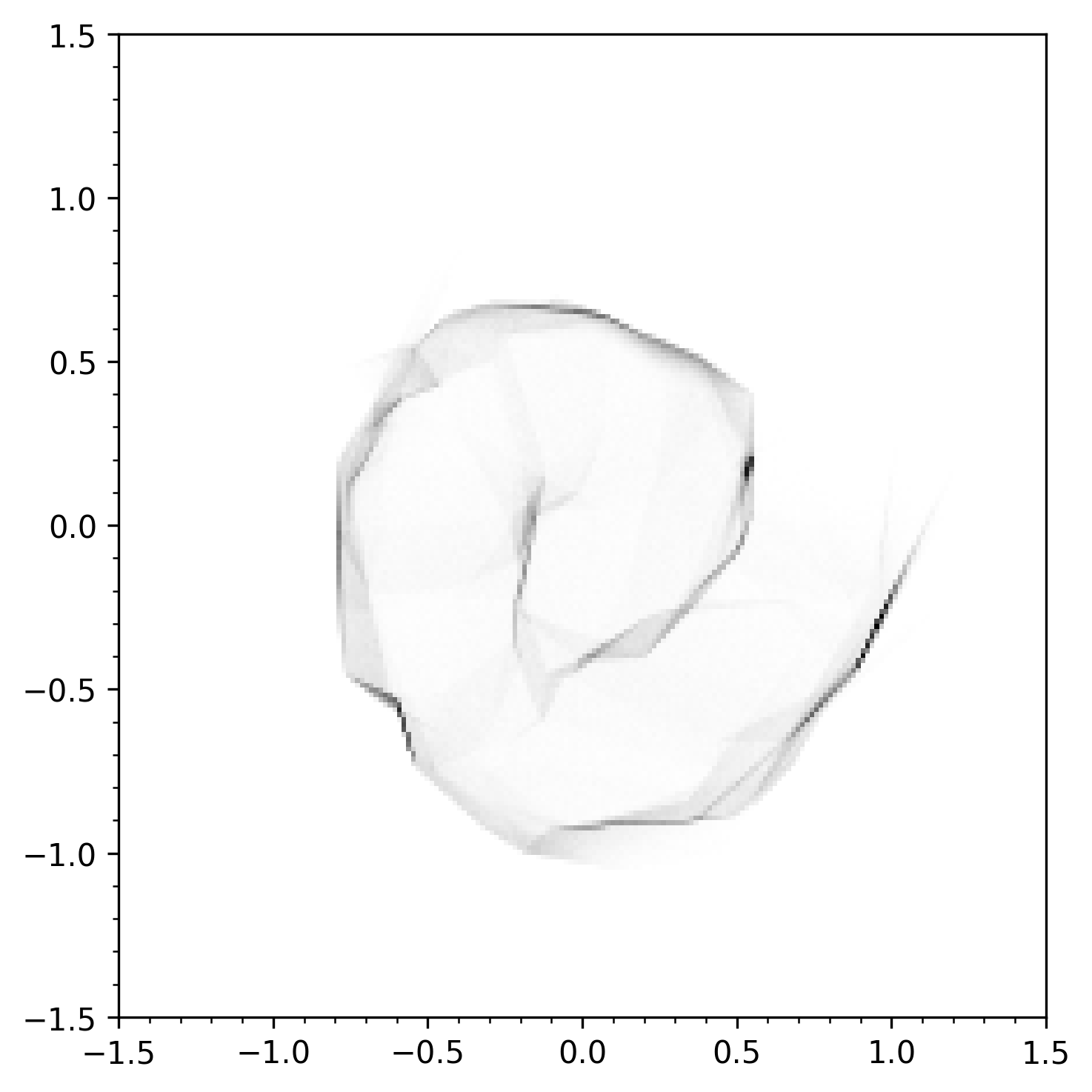}%
    \includegraphics[width=.4\linewidth]{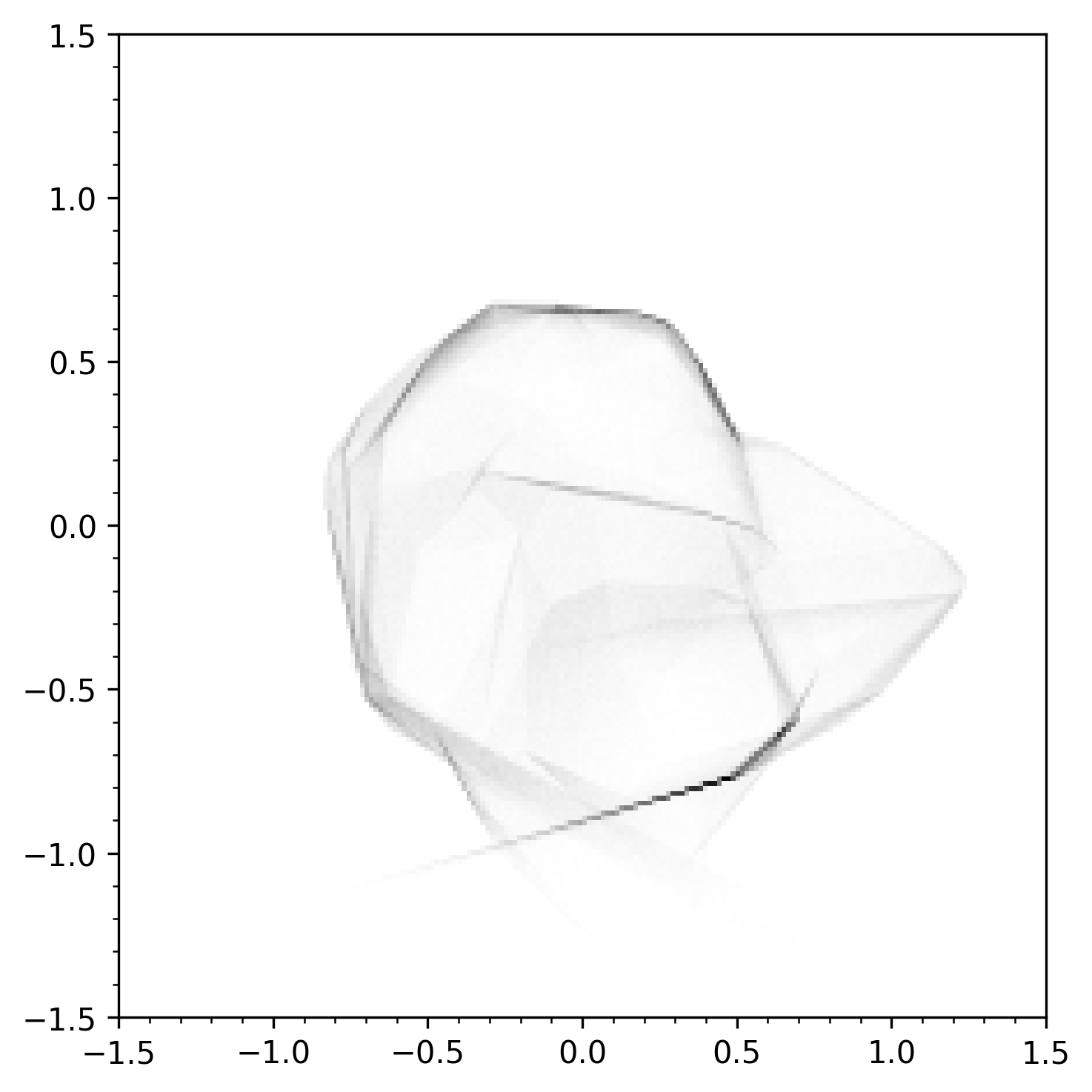}
    \includegraphics[width=.4\linewidth]{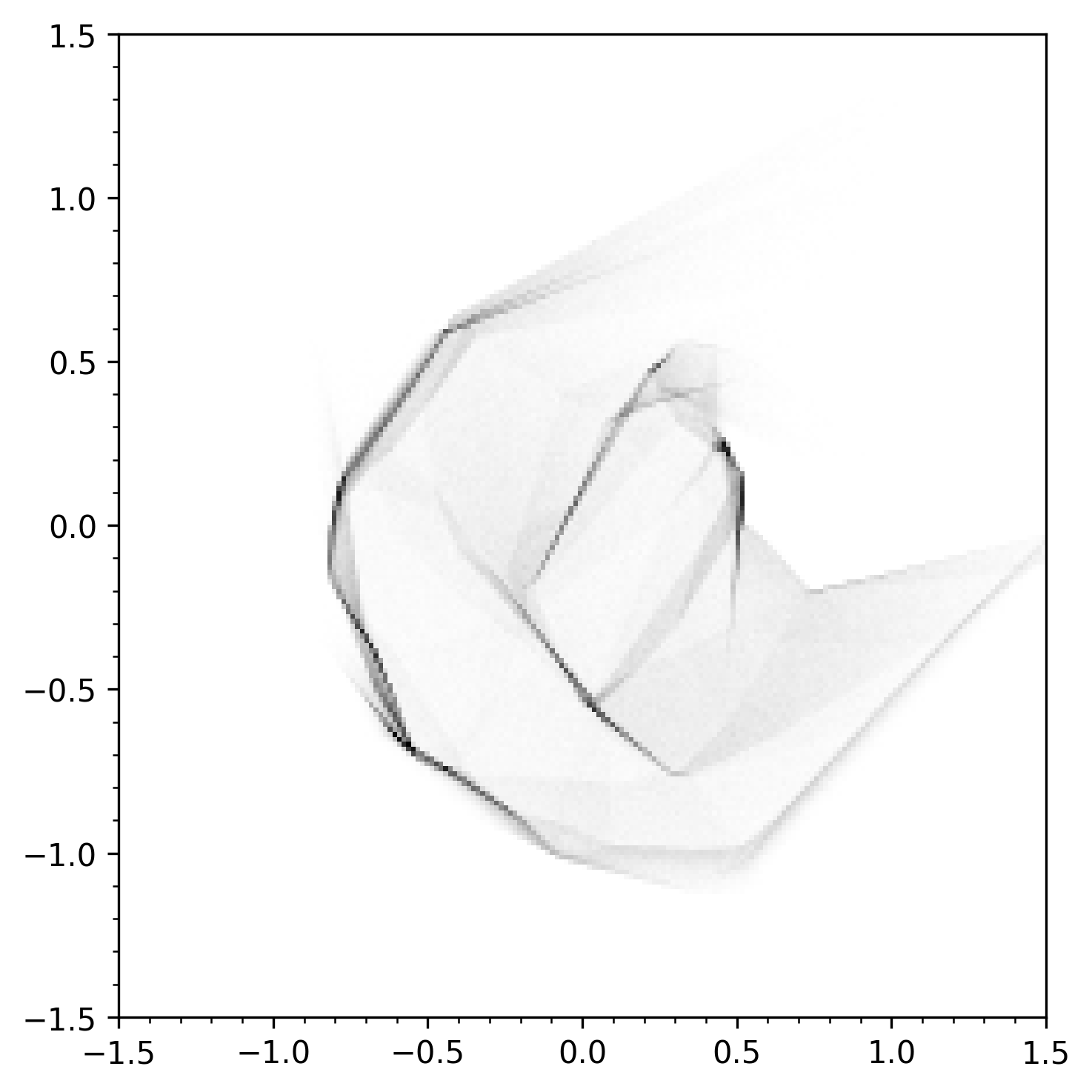}%
    \includegraphics[width=.4\linewidth]{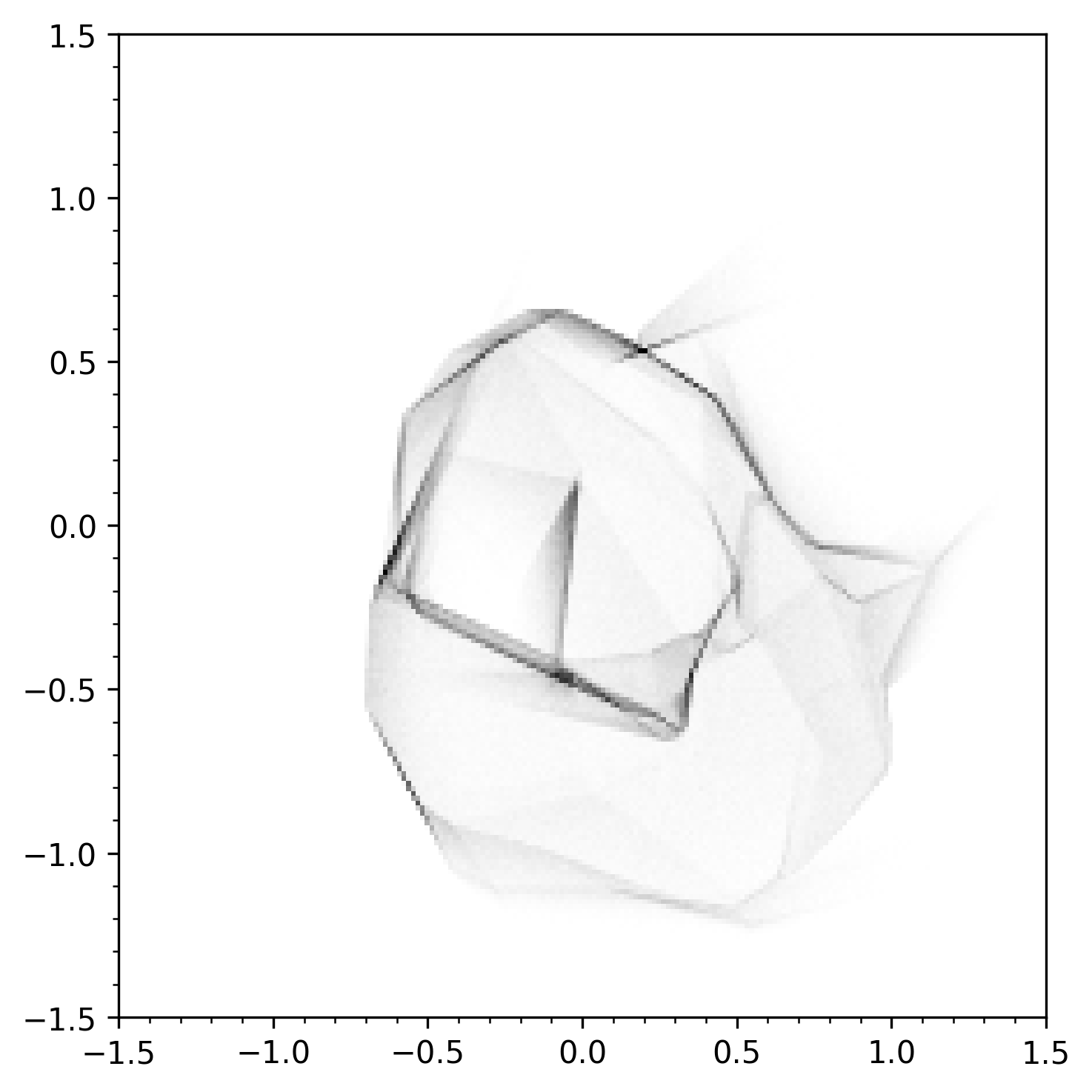}
    \includegraphics[width=.4\linewidth]{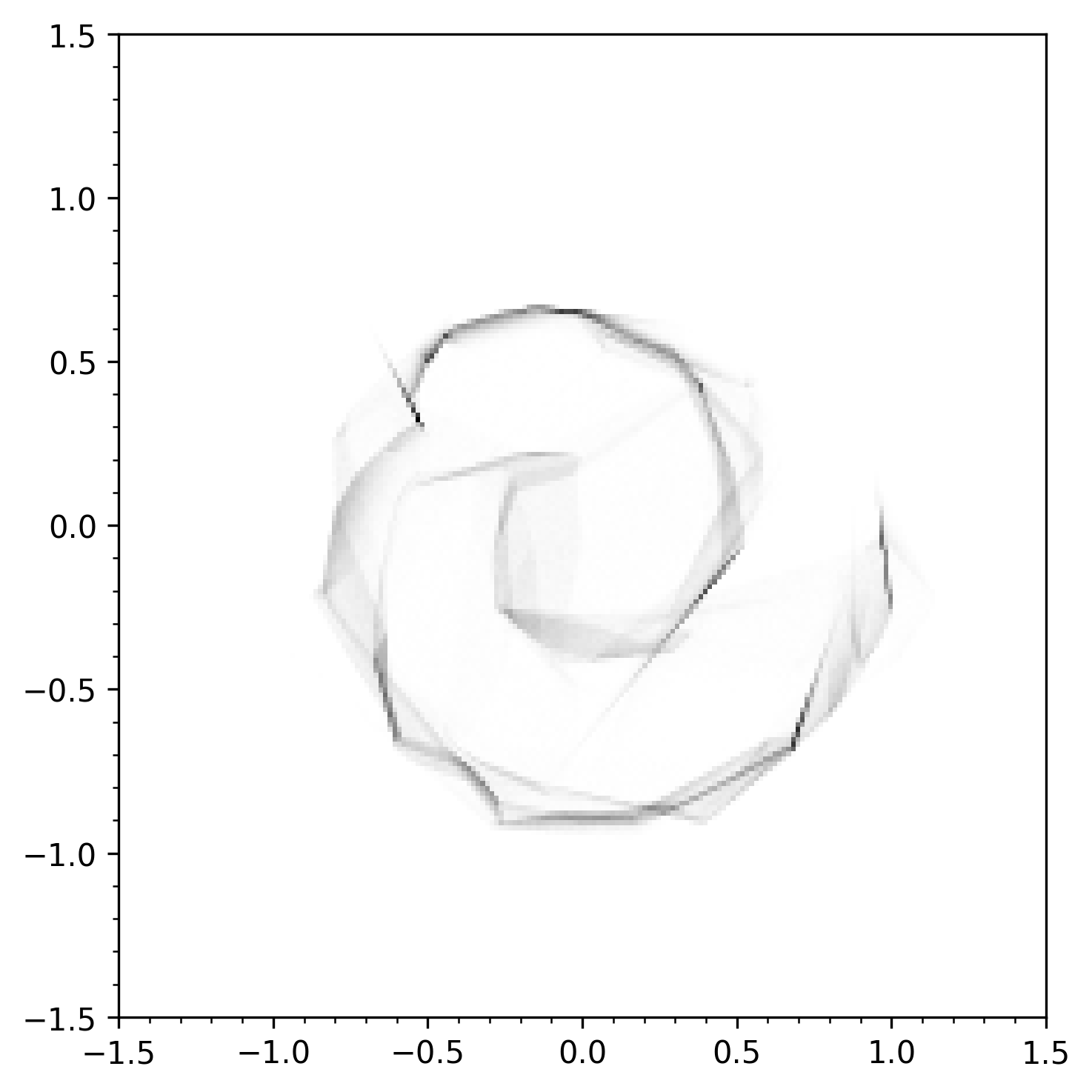}%
    \includegraphics[width=.4\linewidth]{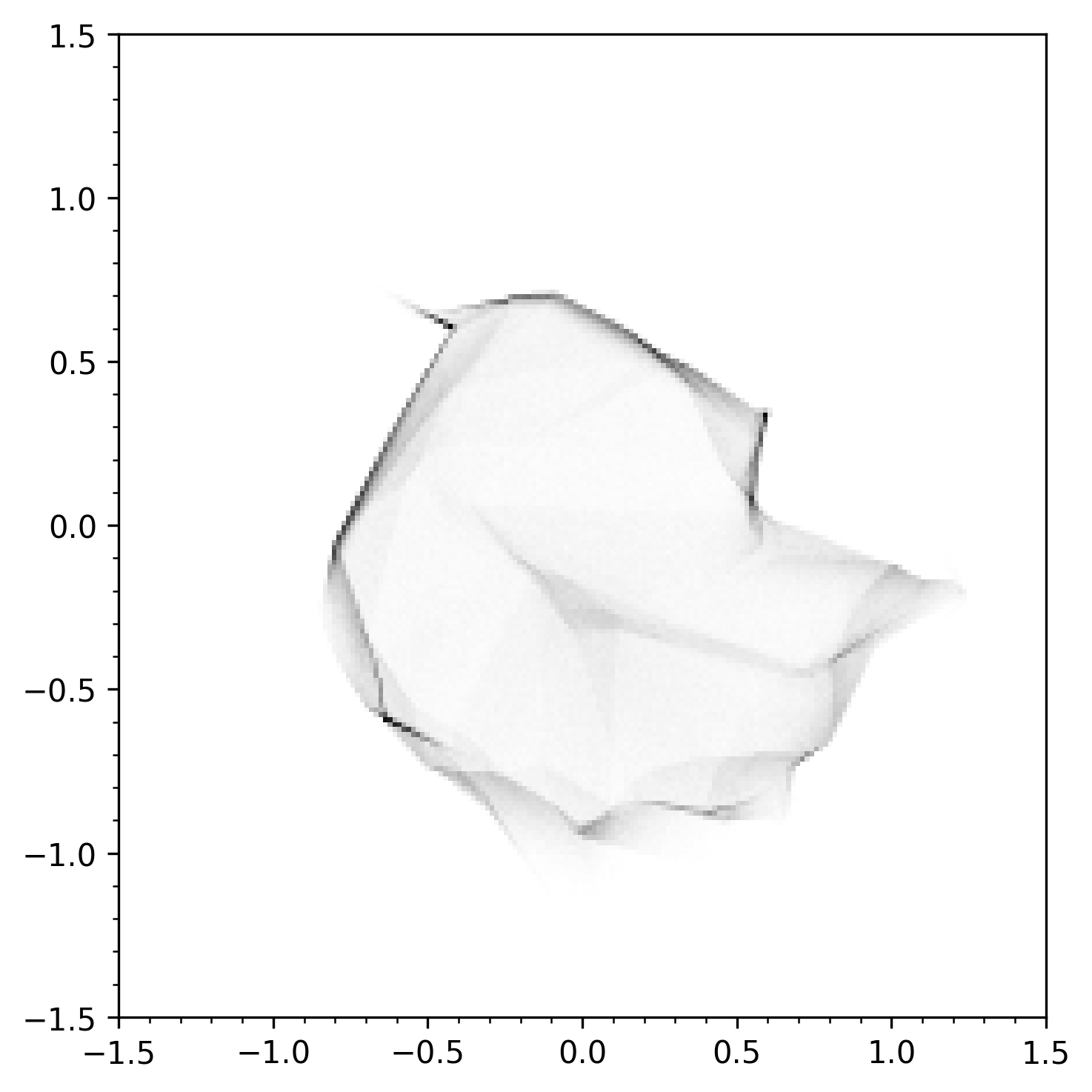}
    \caption{GAN (spiral dataset).
    Left to right, top to bottom: Ground truth, SG, OP, EG, CO, SGA, GNI, LA, LOLA, EDA, BRF, BRE. SLA excluded due to divergence.}
    \label{fig:gan_images_spiral}
\end{figure}

\begin{figure}
    \centering
    \includegraphics[width=.4\linewidth]{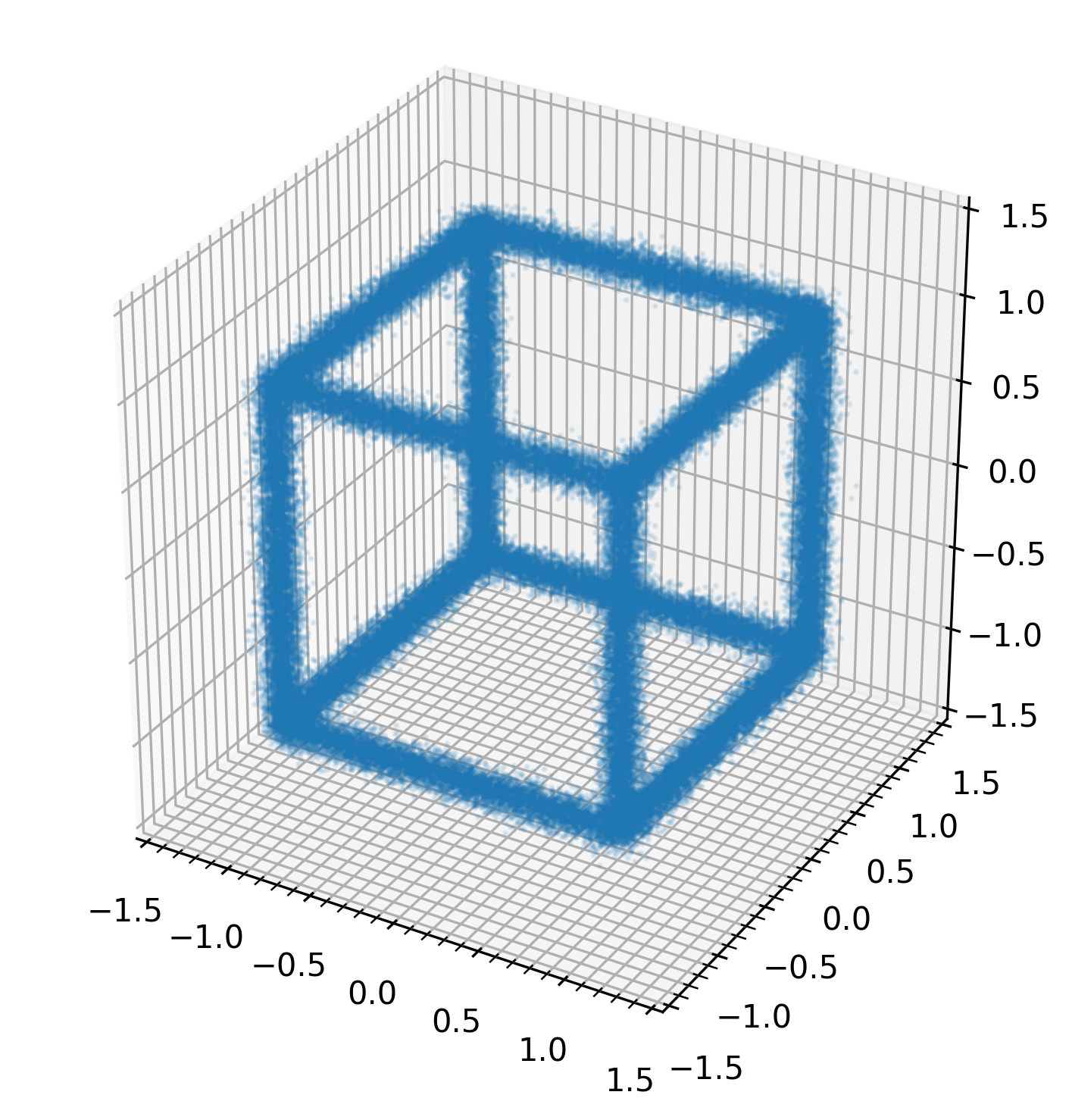}%
    \includegraphics[width=.4\linewidth]{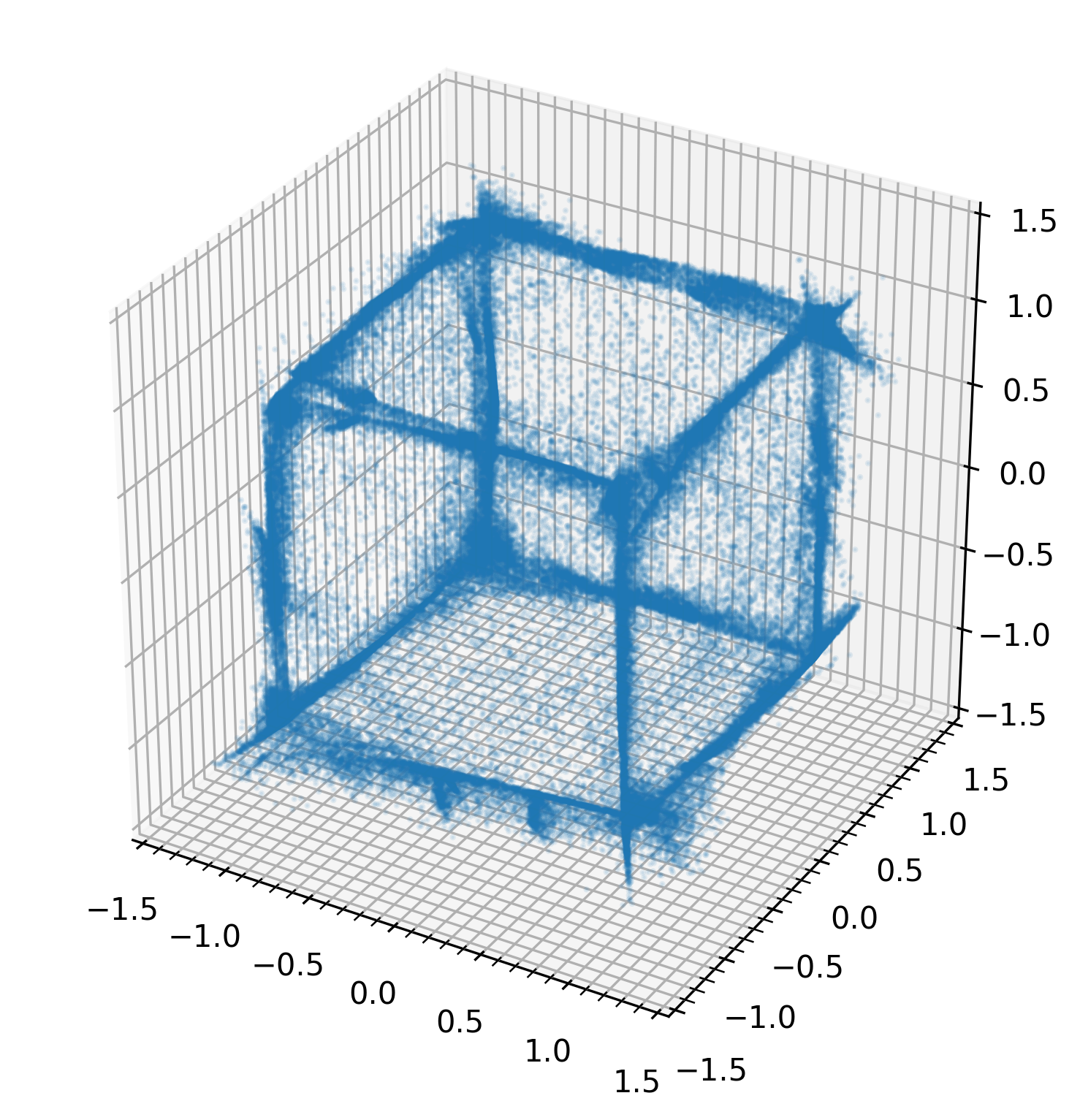}
    \includegraphics[width=.4\linewidth]{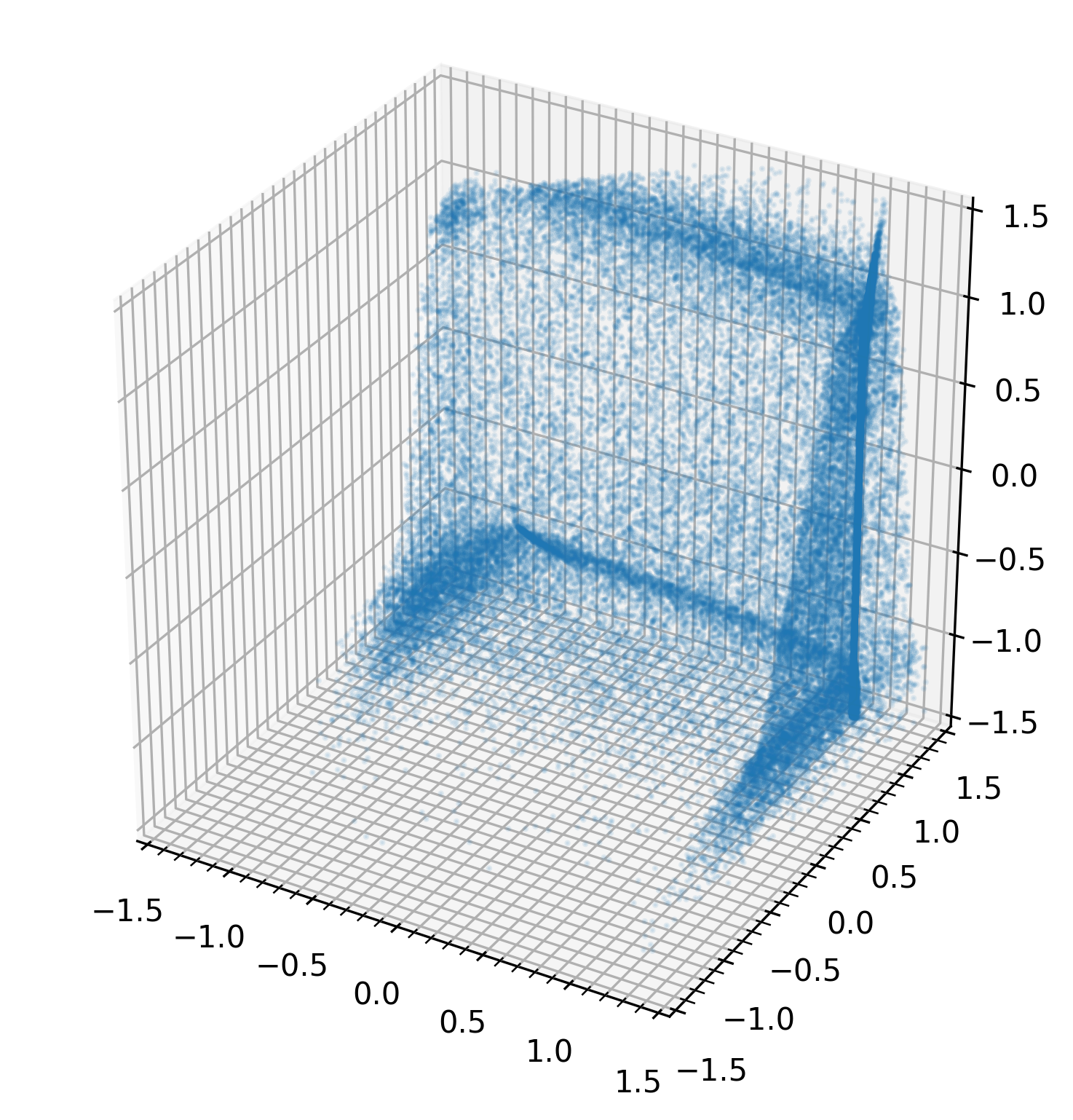}%
    \includegraphics[width=.4\linewidth]{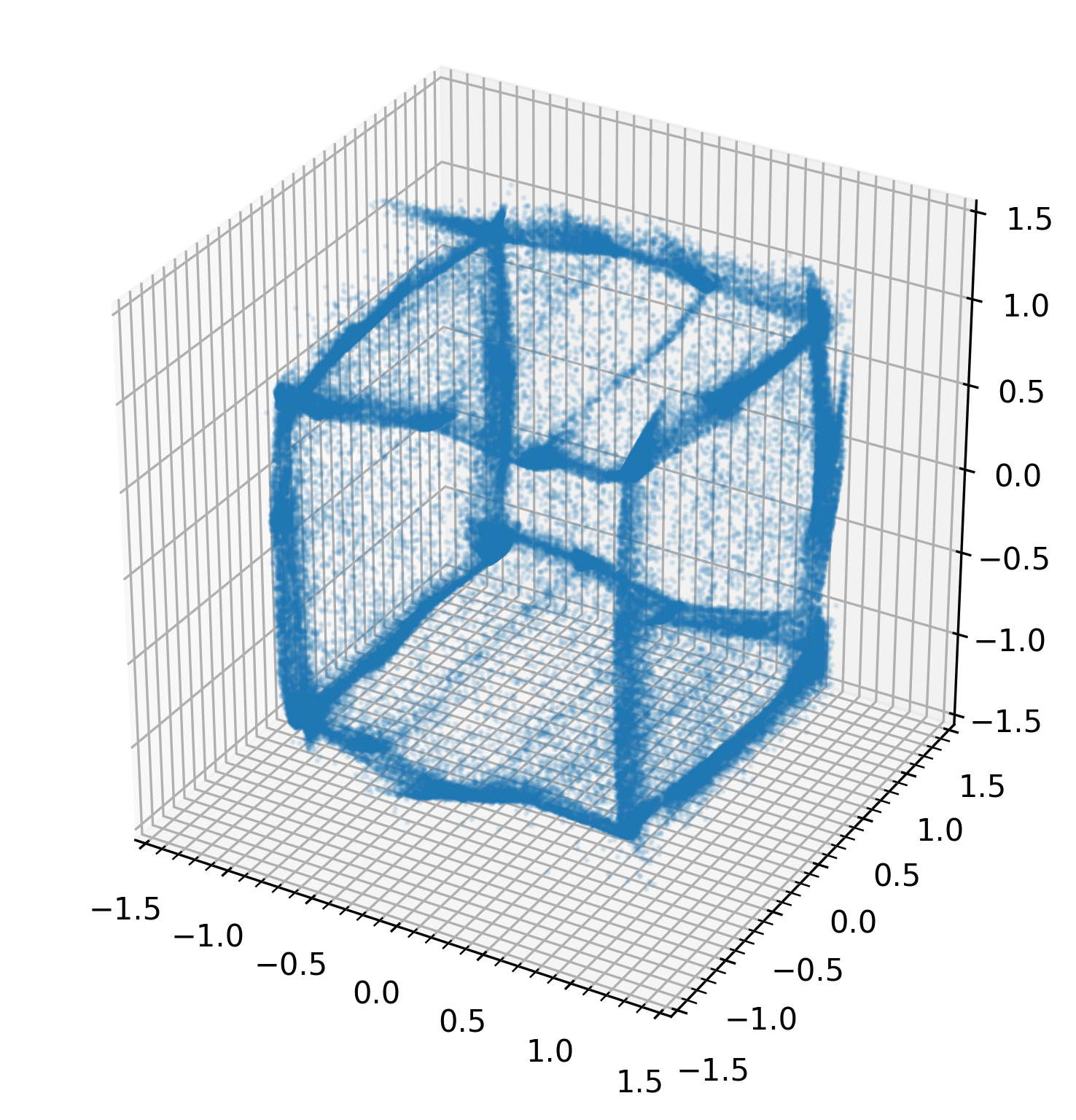}
    \includegraphics[width=.4\linewidth]{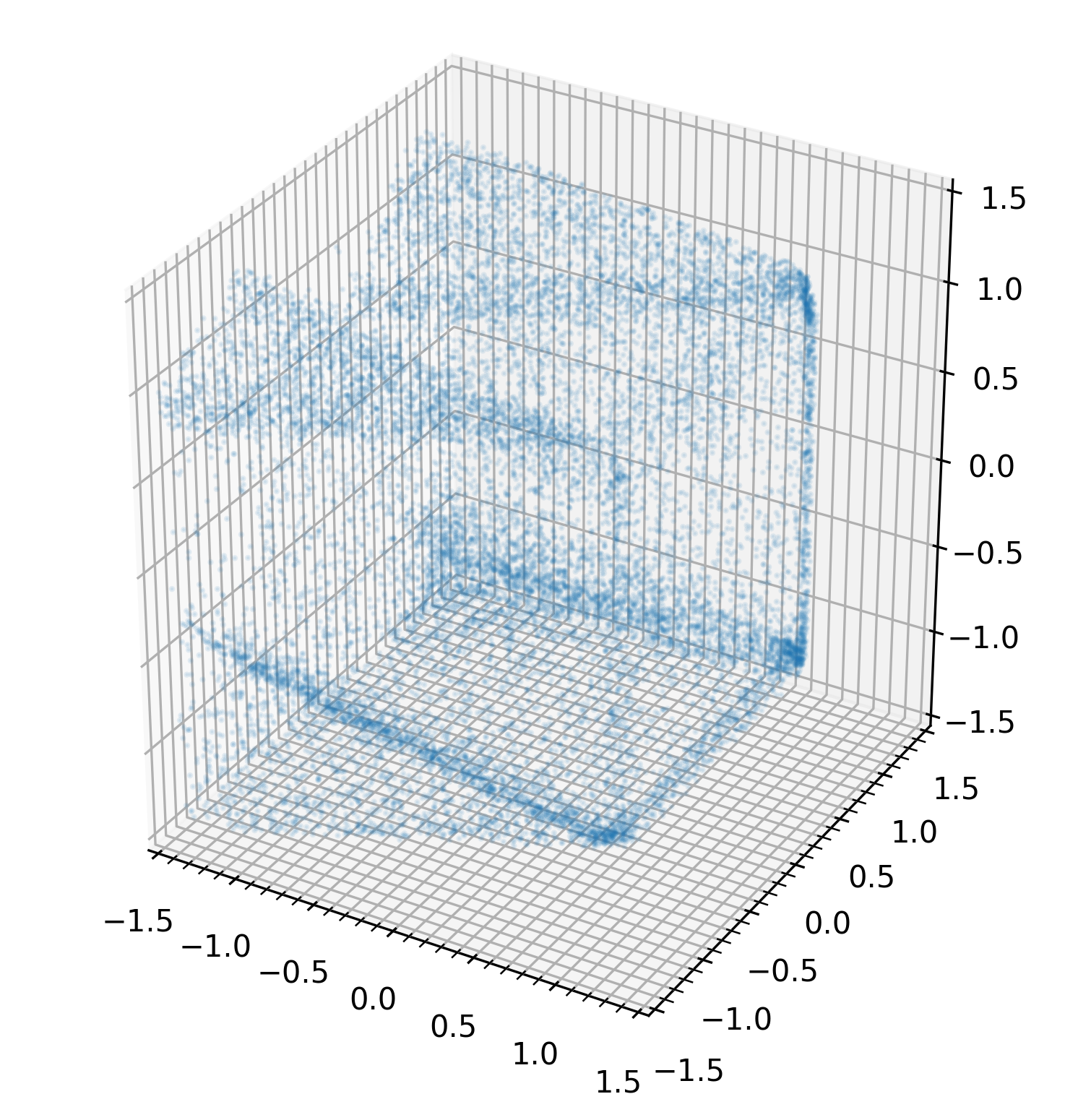}%
    \includegraphics[width=.4\linewidth]{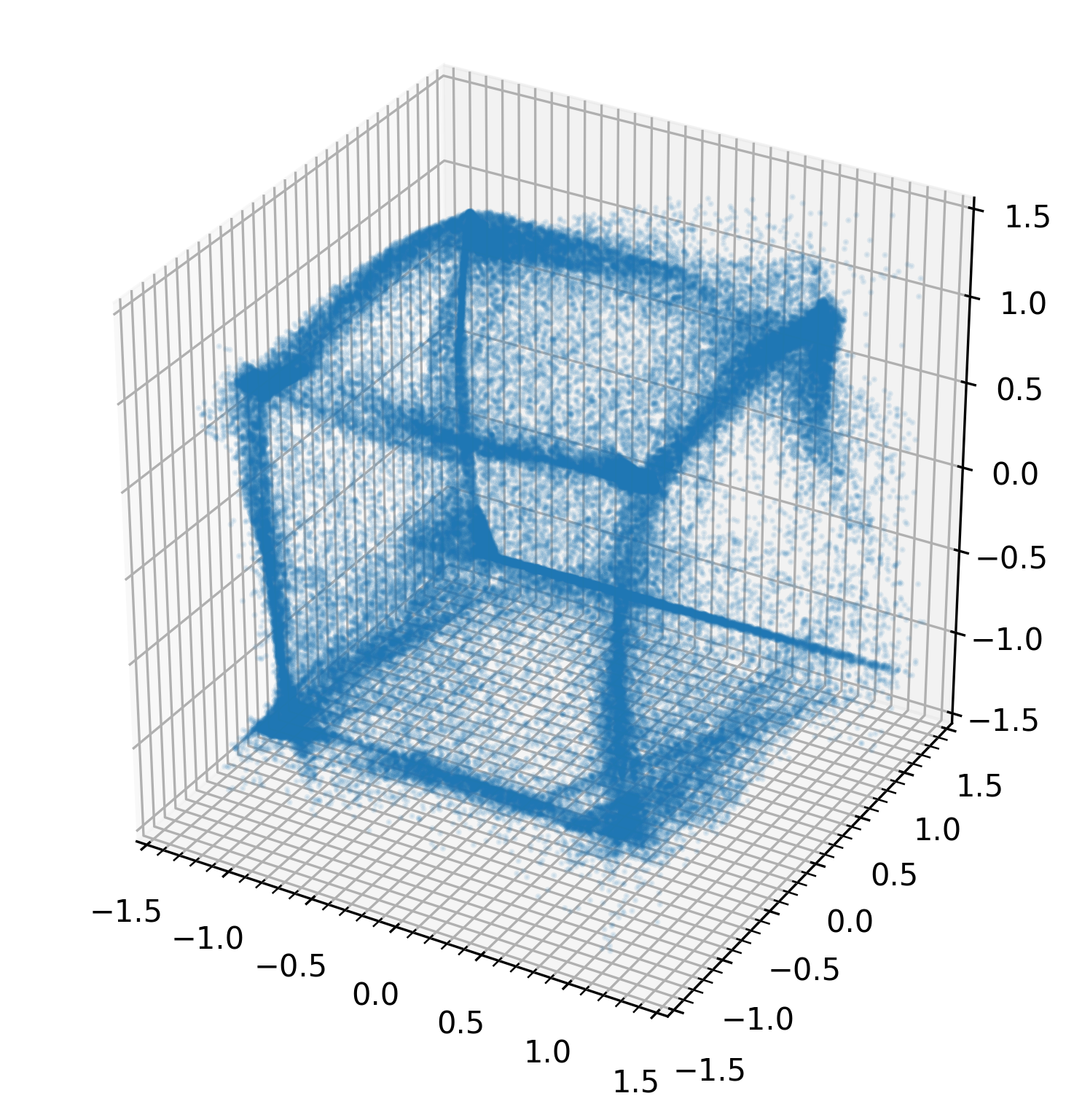}
    \includegraphics[width=.4\linewidth]{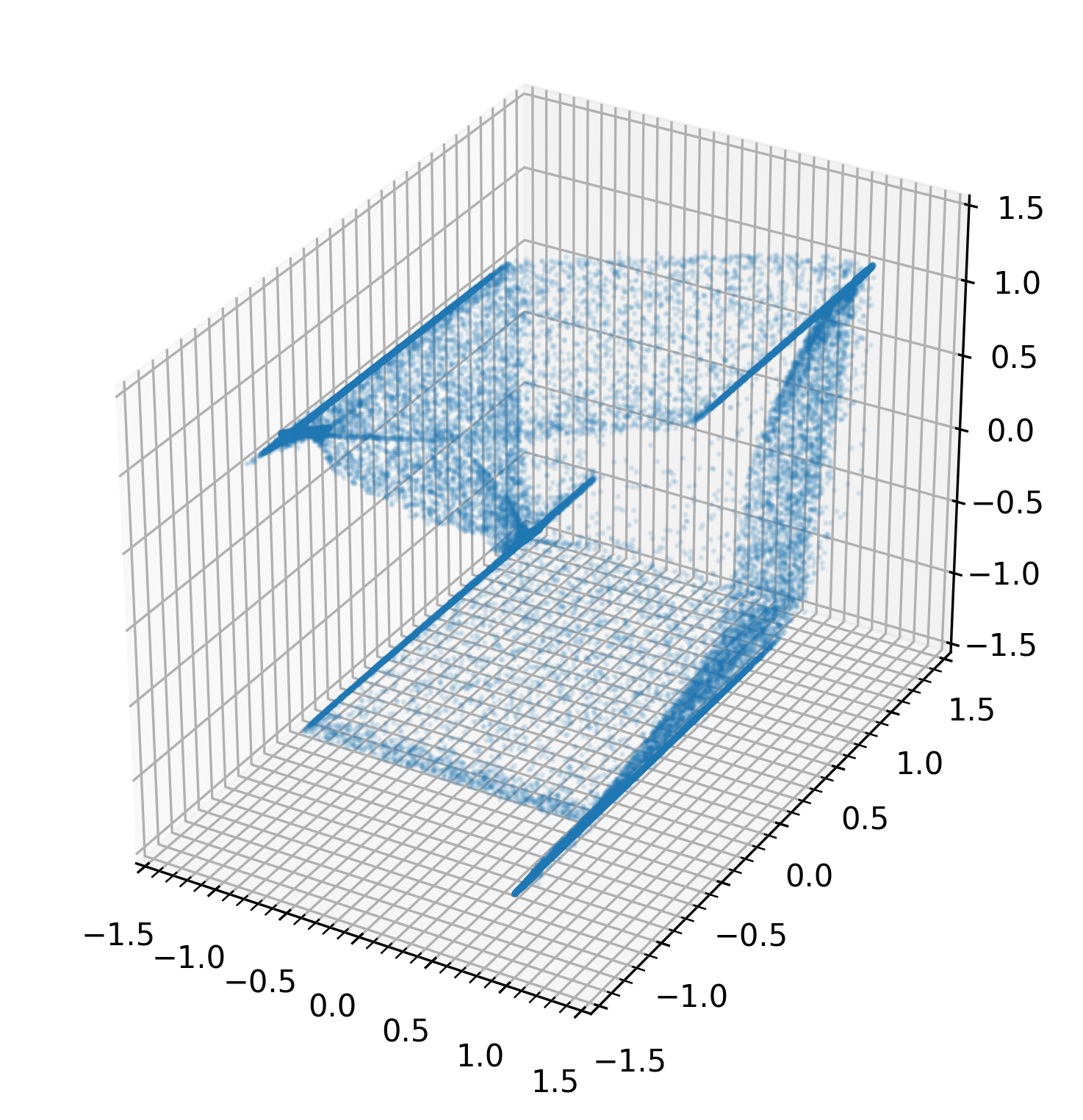}%
    \includegraphics[width=.4\linewidth]{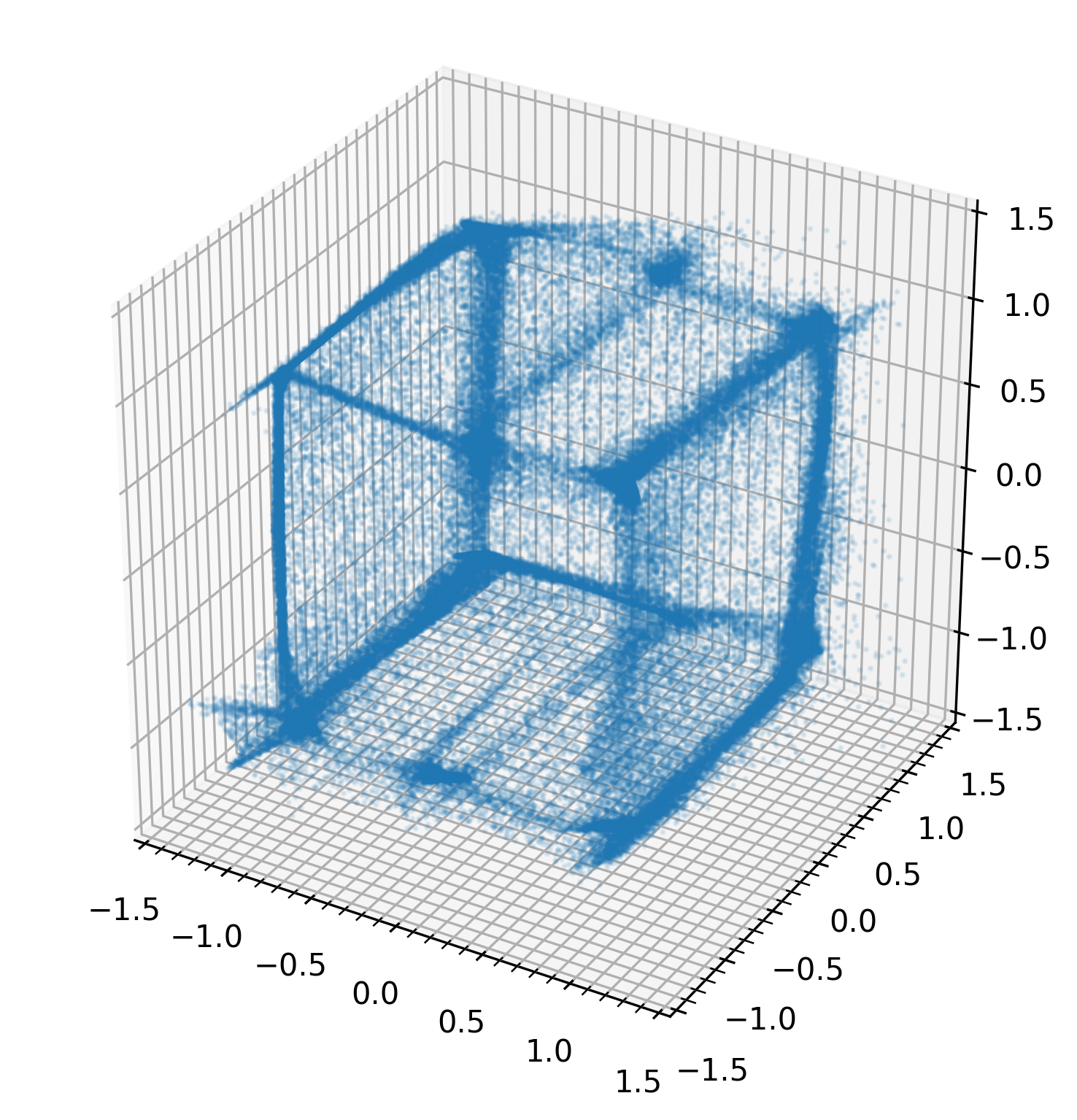}
    \includegraphics[width=.4\linewidth]{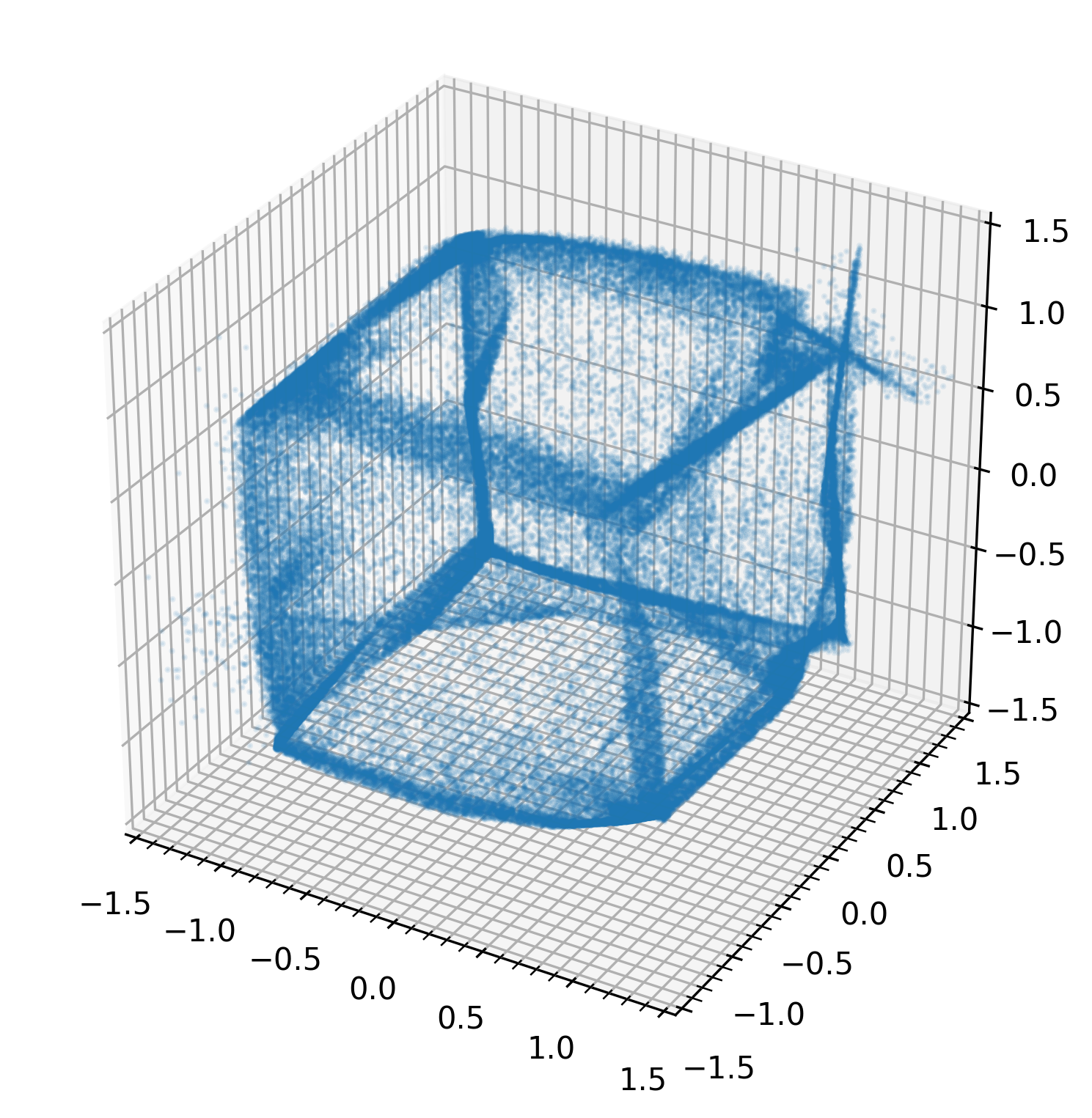}%
    \includegraphics[width=.4\linewidth]{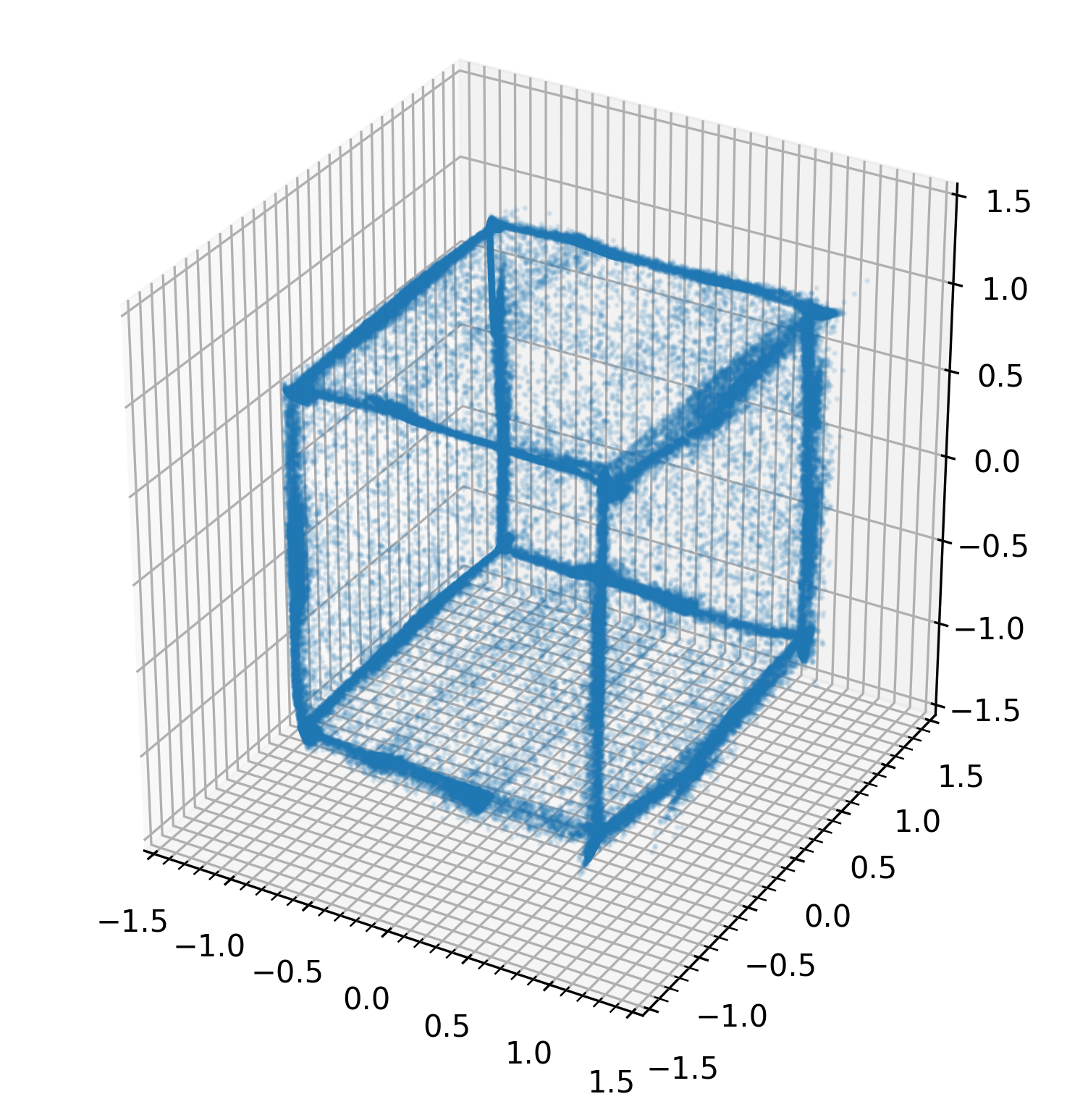}
    \includegraphics[width=.4\linewidth]{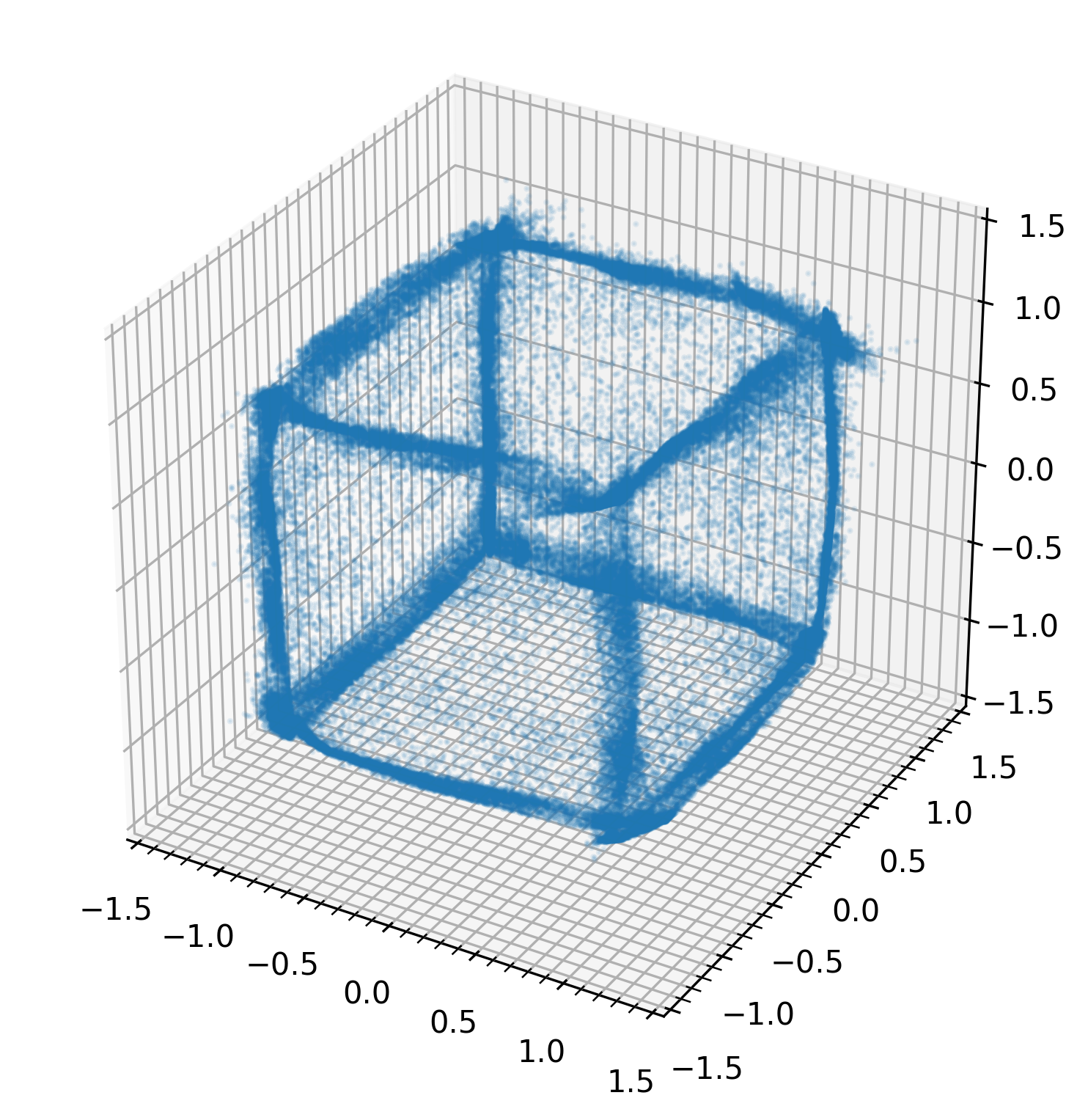}%
    \includegraphics[width=.4\linewidth]{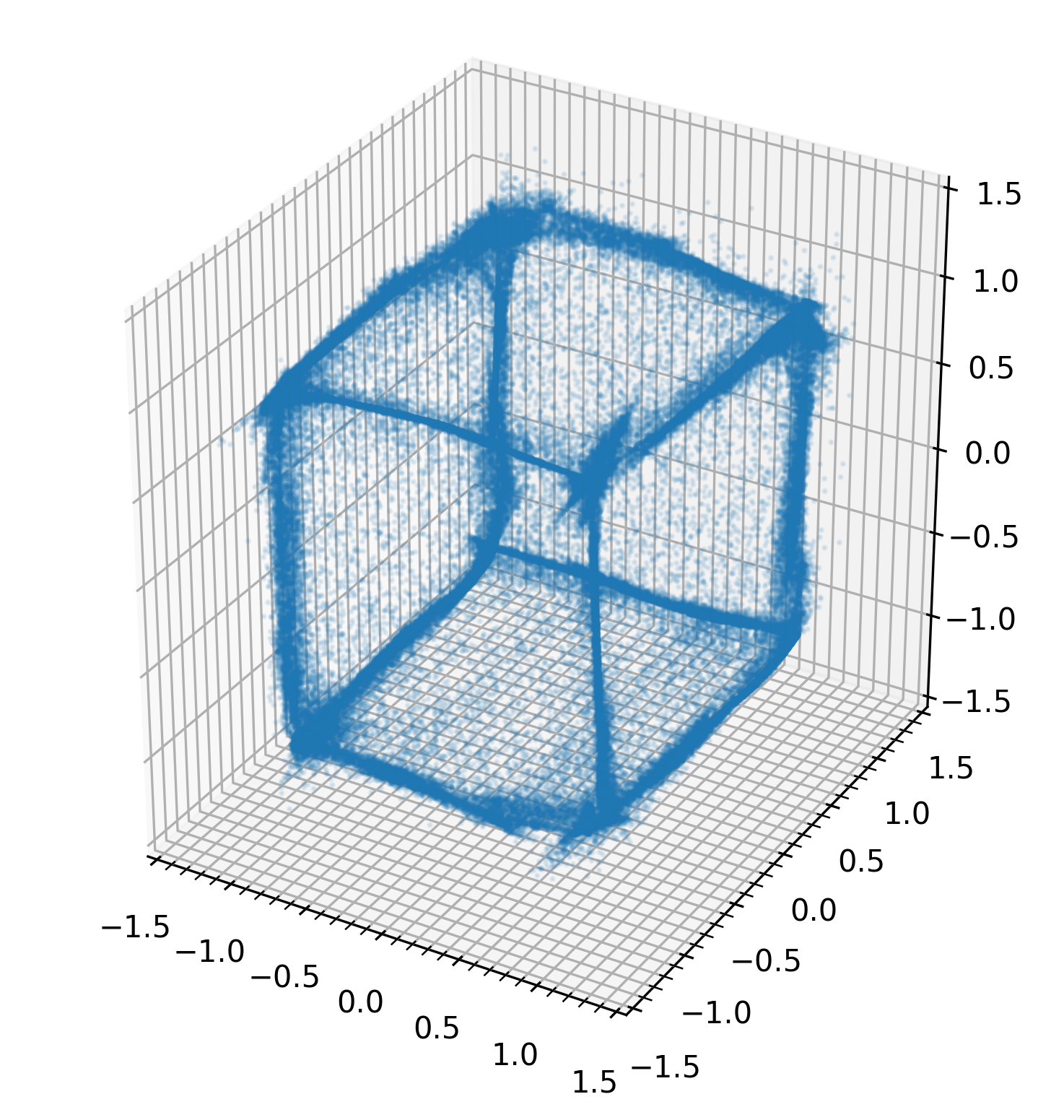}
    \caption{GAN (cube dataset).
    Left to right, top to bottom: Ground truth, SG, OP, EG, CO, SGA, GNI, LA, LOLA, EDA, BRF, BRE. SLA excluded due to divergence.}
    \label{fig:gan_images_cube}
\end{figure}

\begin{figure}
    \centering
    \includegraphics[width=.5\linewidth]{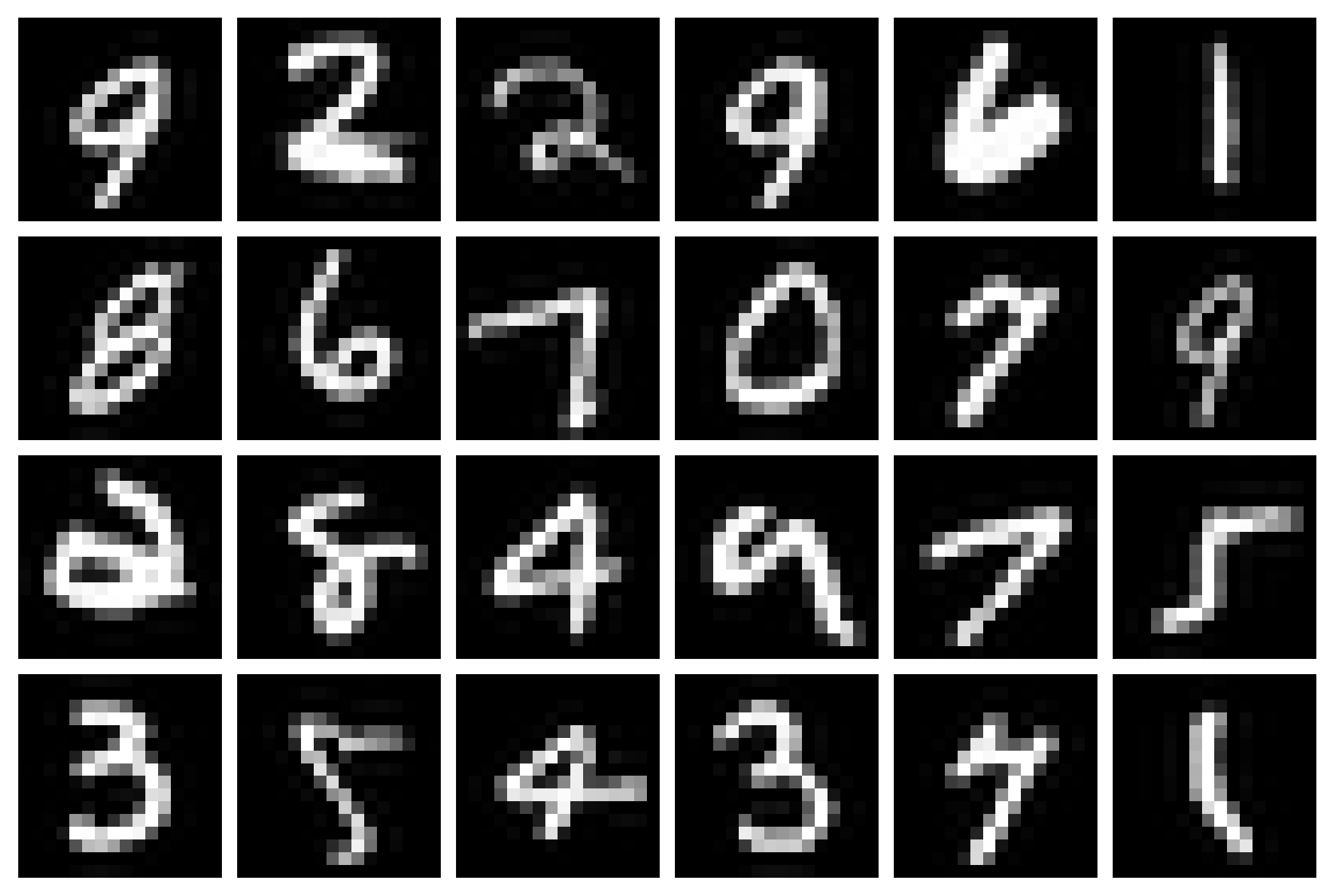}%
    \includegraphics[width=.5\linewidth]{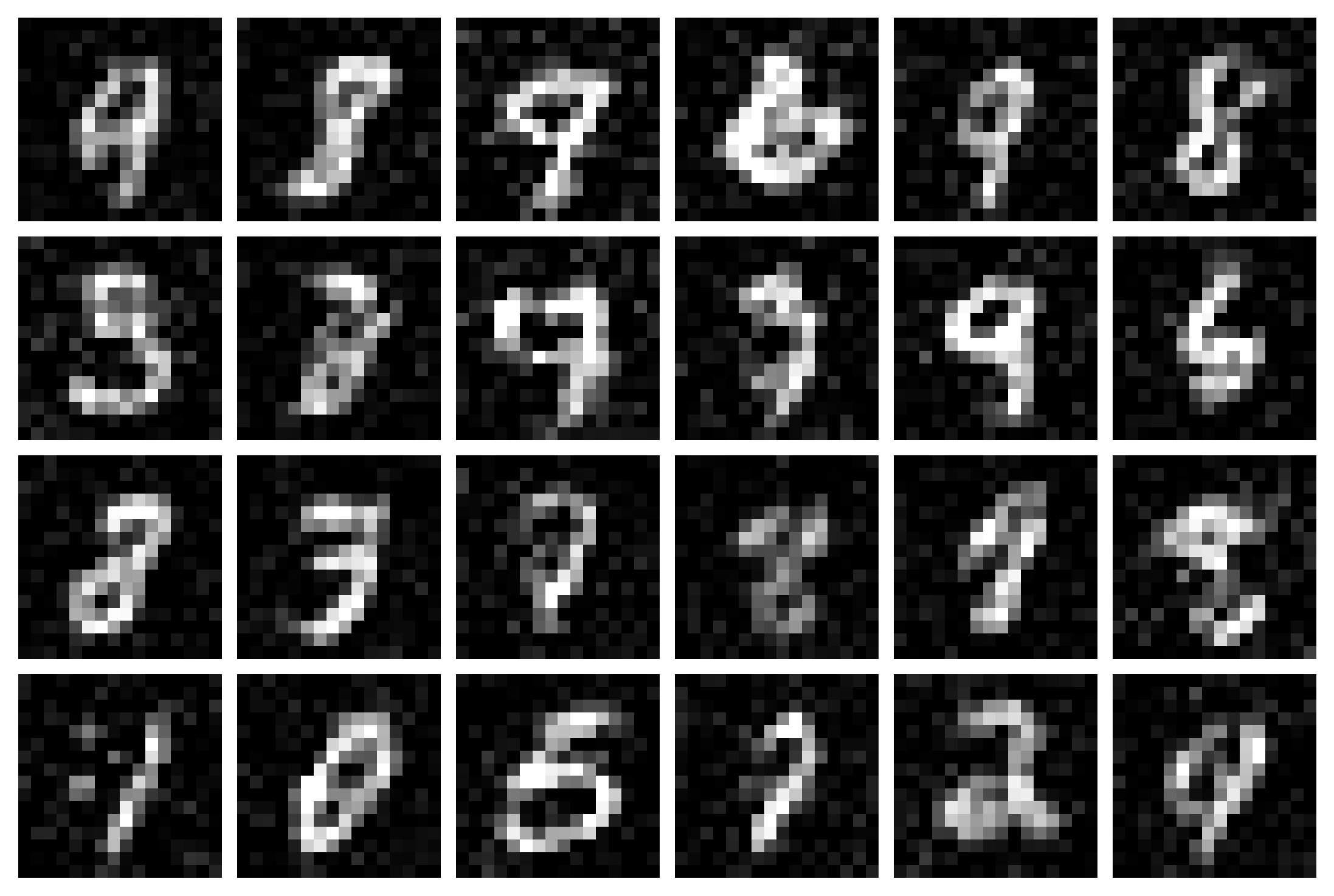}
    \includegraphics[width=.5\linewidth]{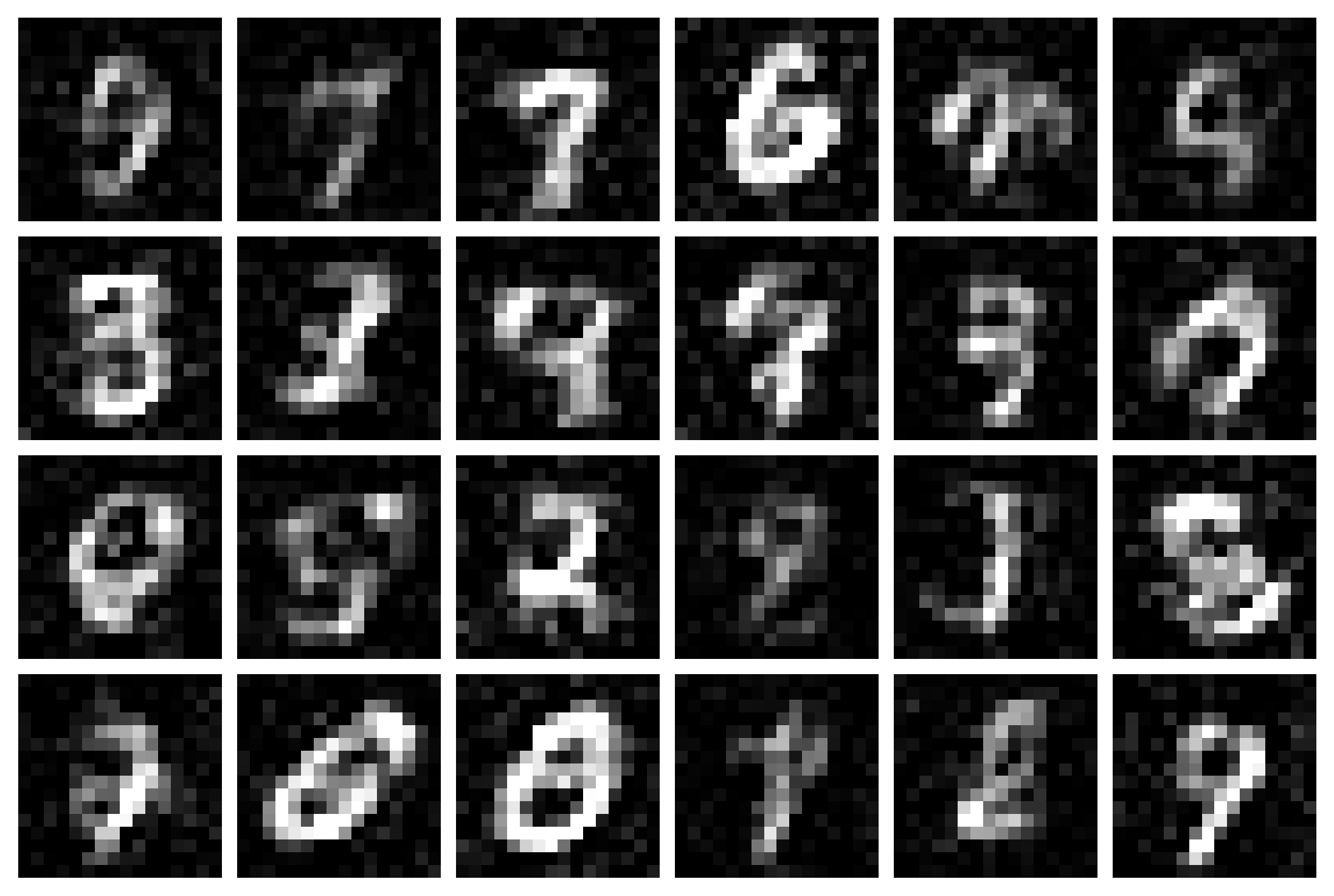}%
    \includegraphics[width=.5\linewidth]{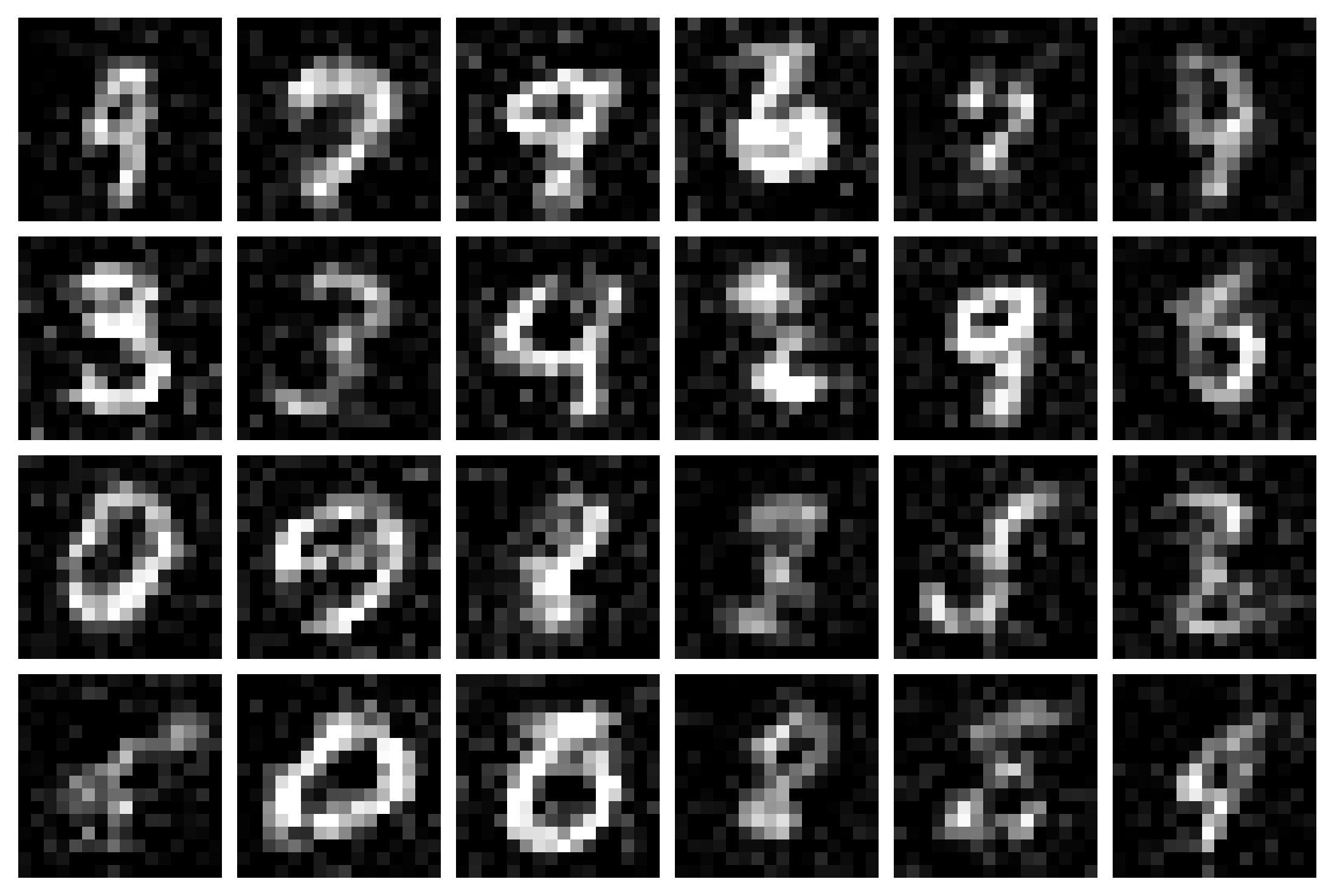}
    \includegraphics[width=.5\linewidth]{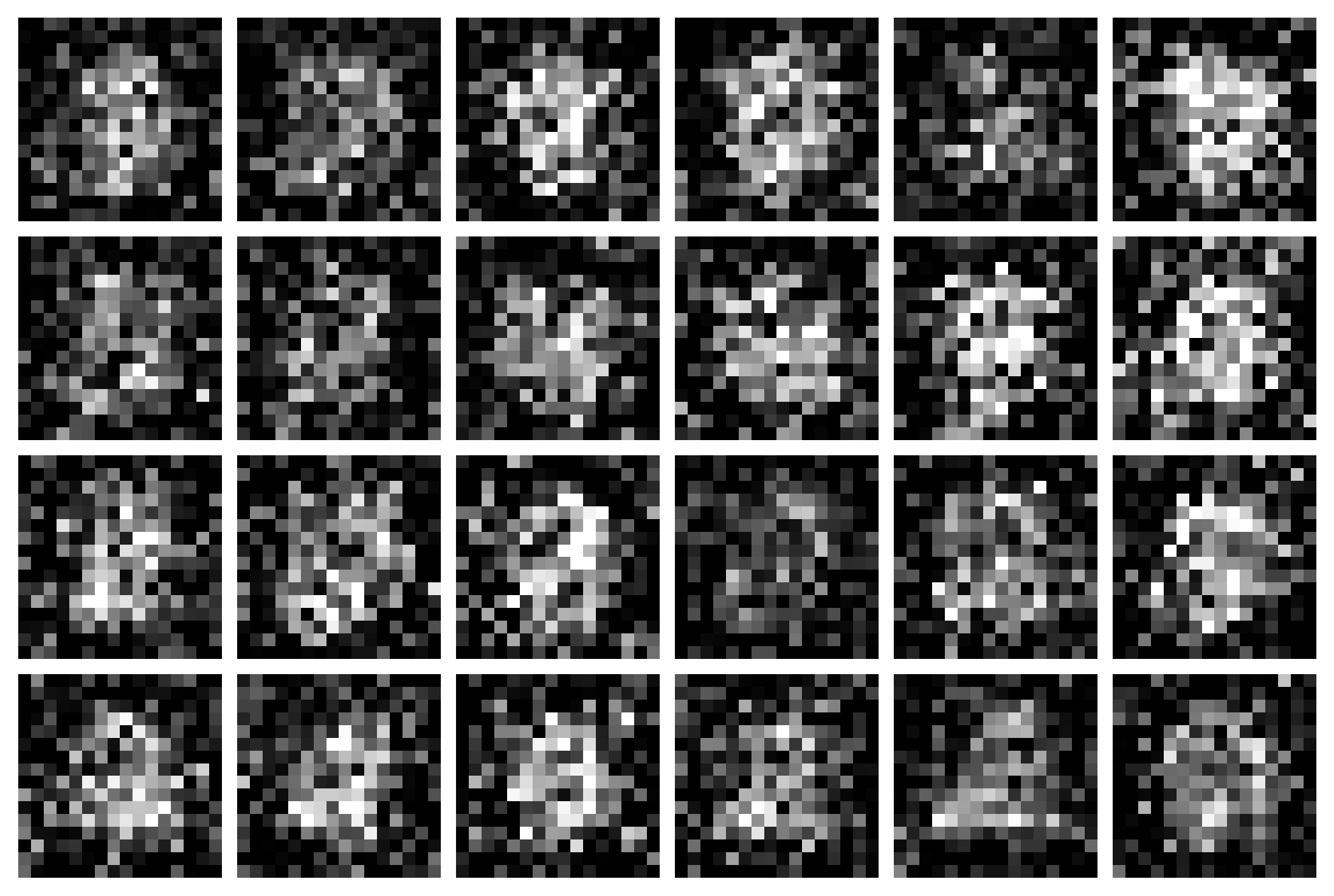}%
    \includegraphics[width=.5\linewidth]{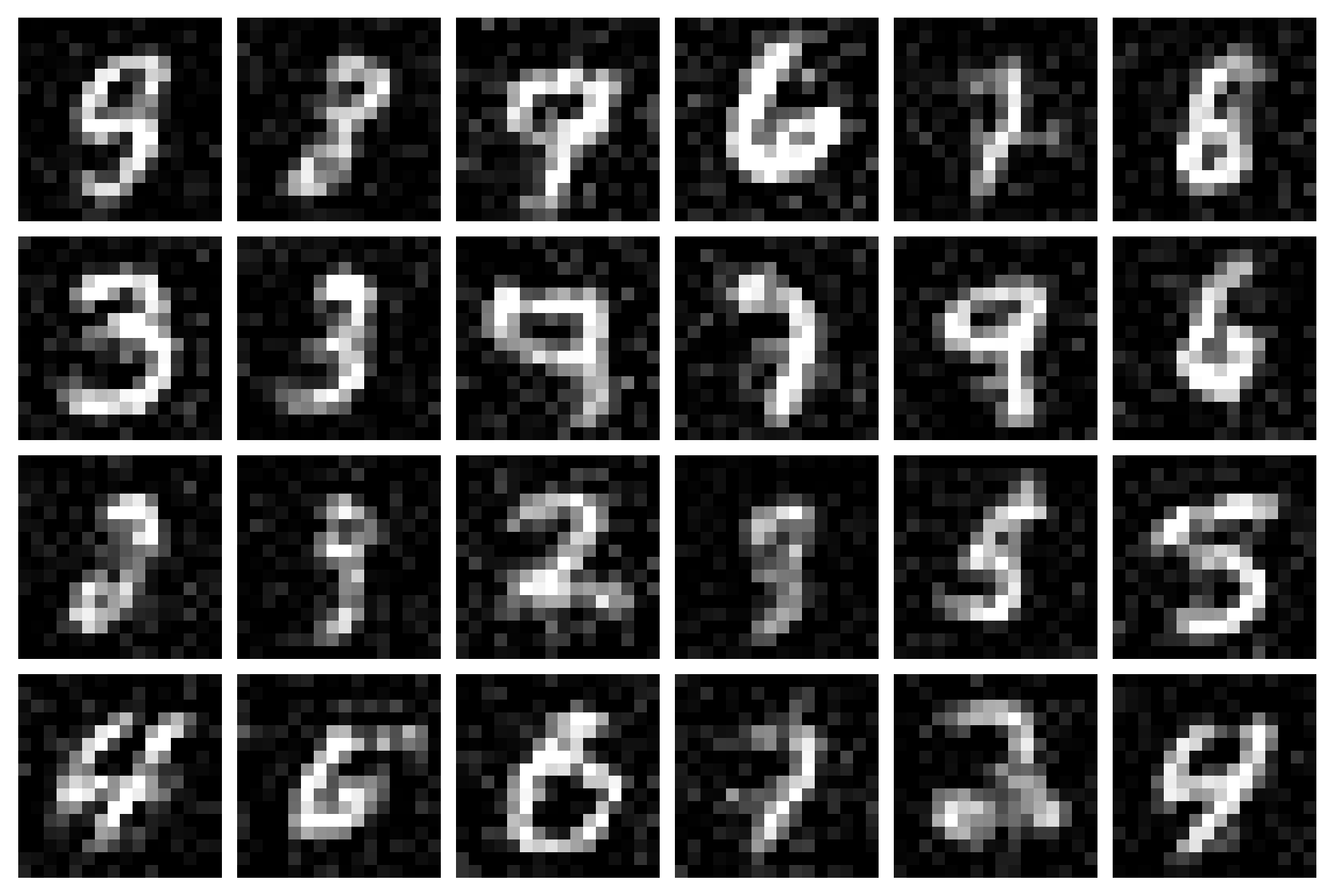}
    \includegraphics[width=.5\linewidth]{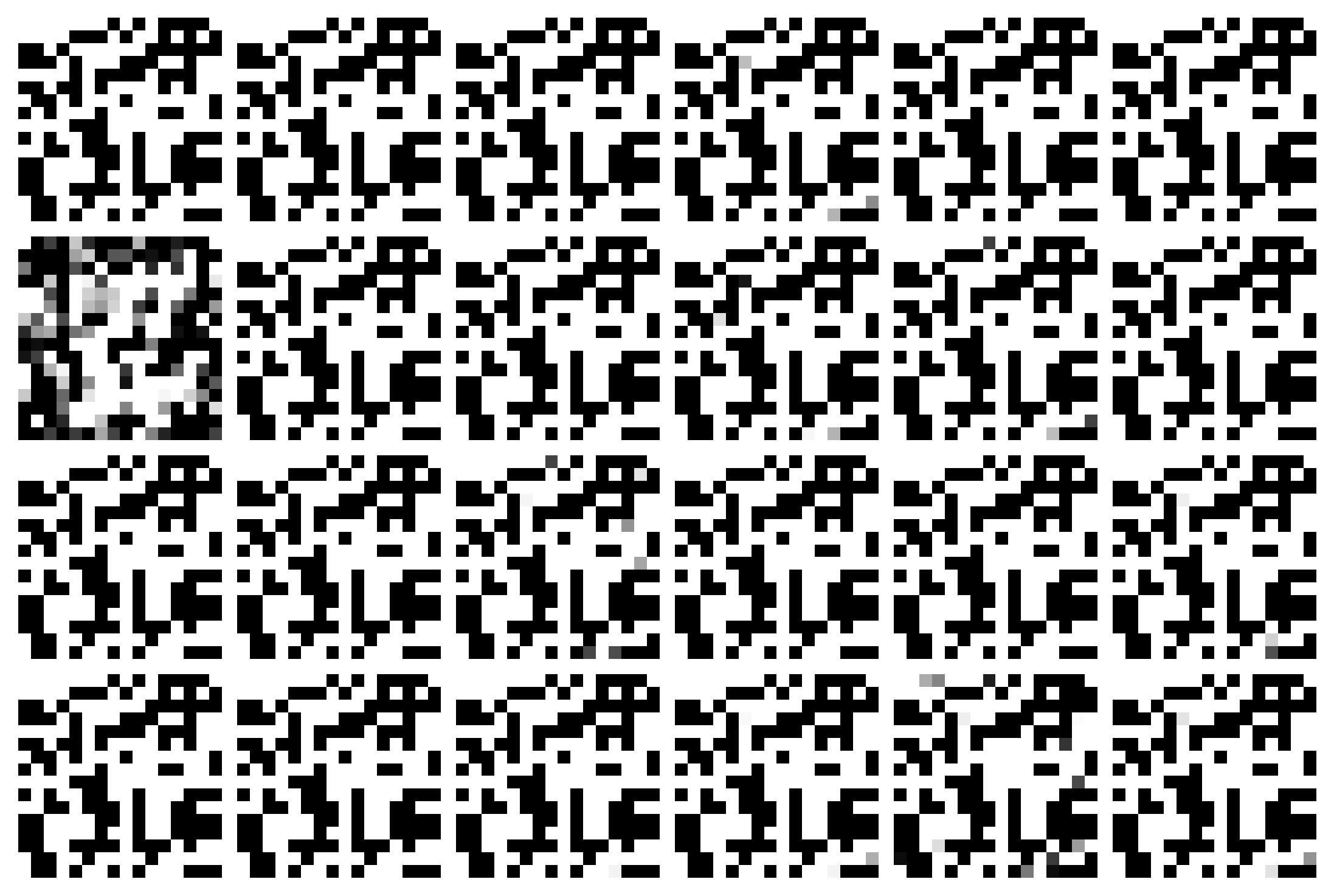}%
    \includegraphics[width=.5\linewidth]{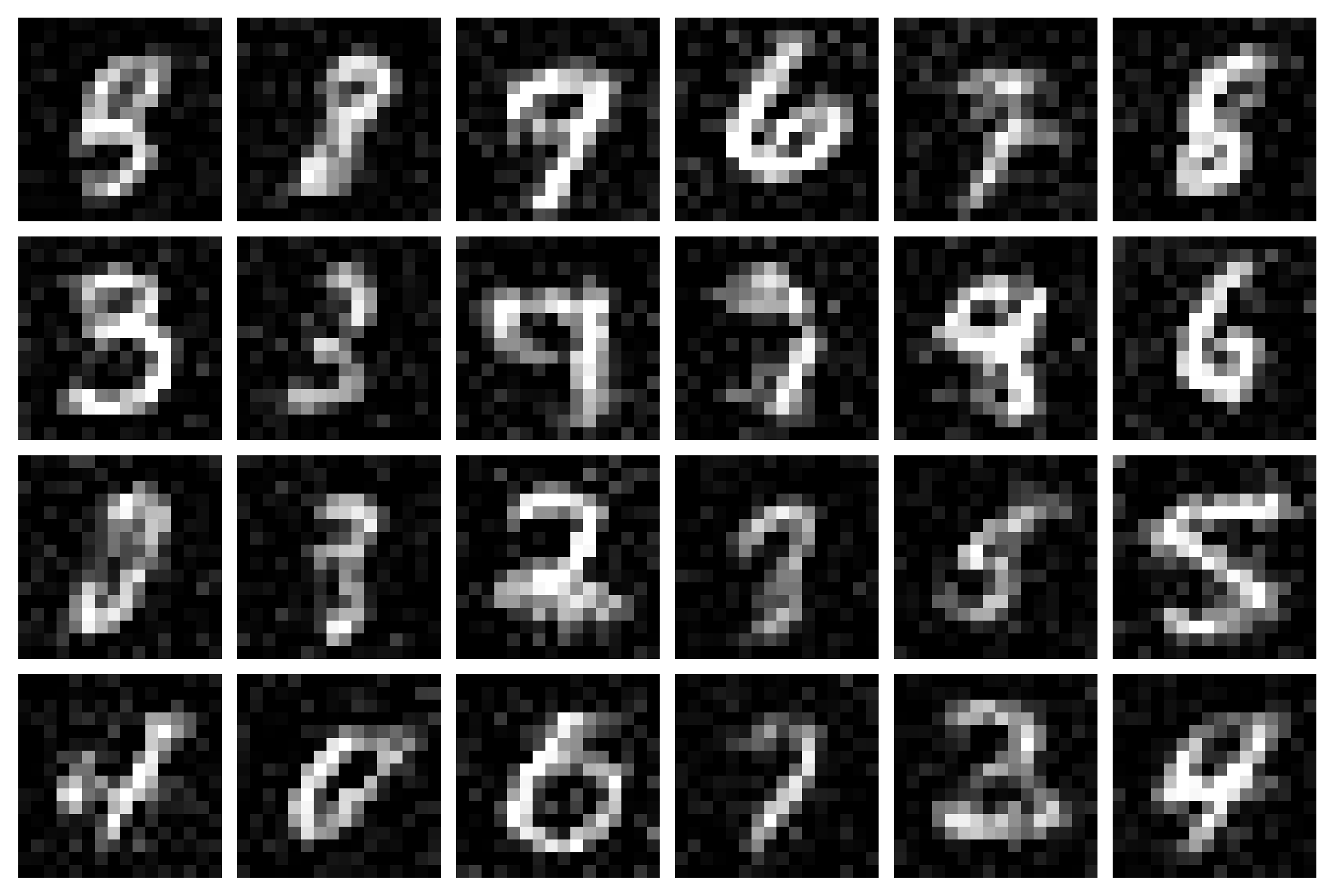}
    \includegraphics[width=.5\linewidth]{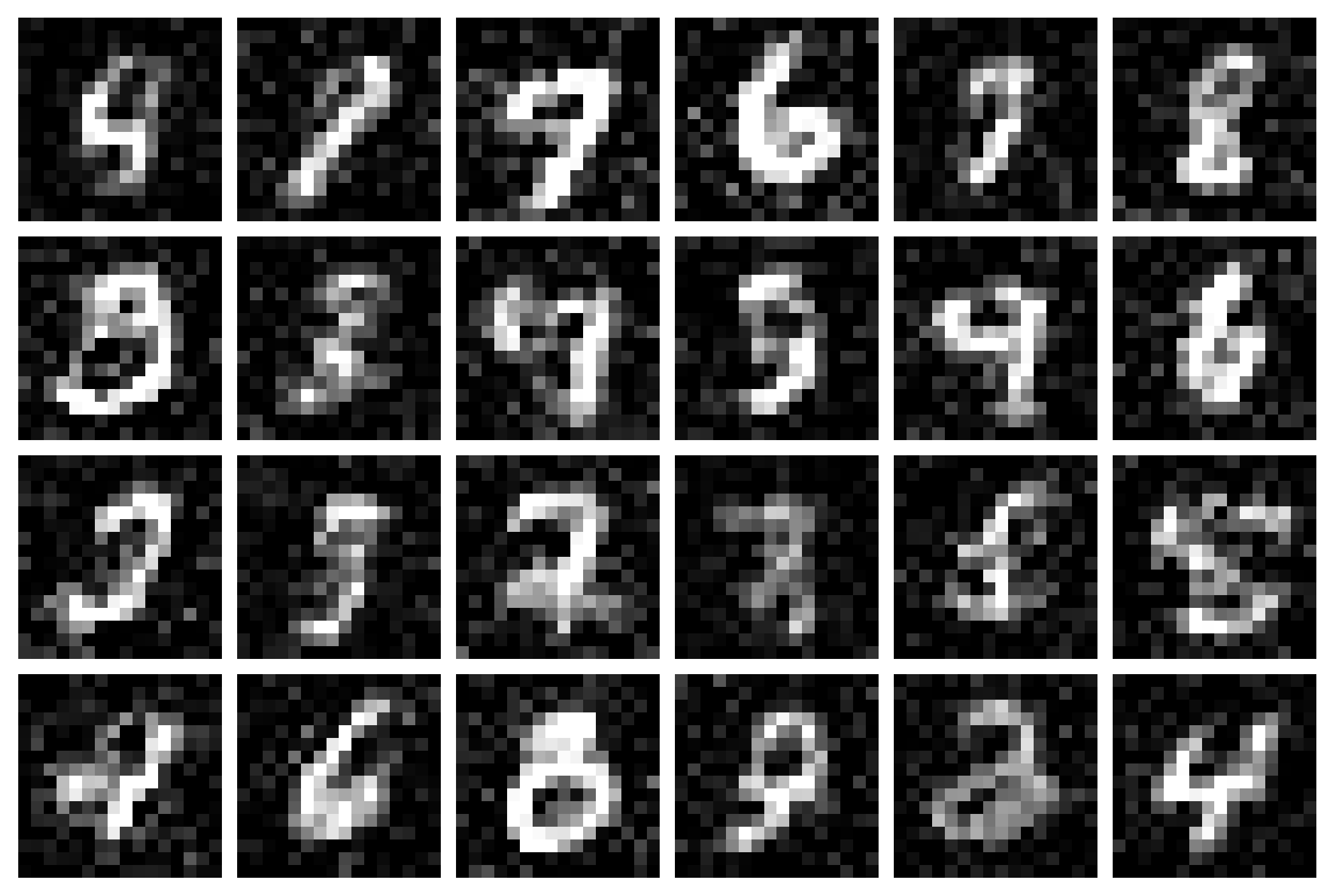}%
    \includegraphics[width=.5\linewidth]{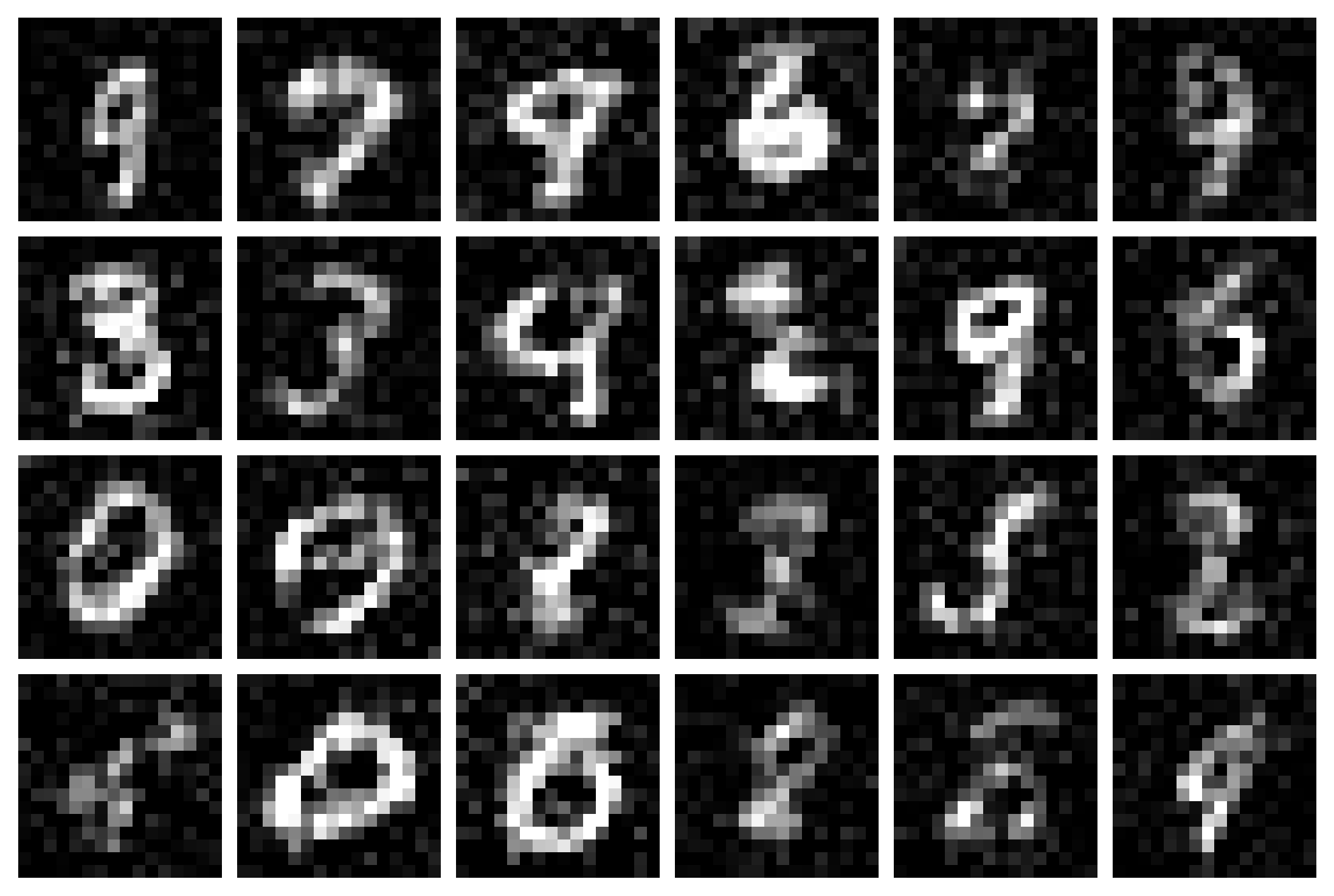}
    \includegraphics[width=.5\linewidth]{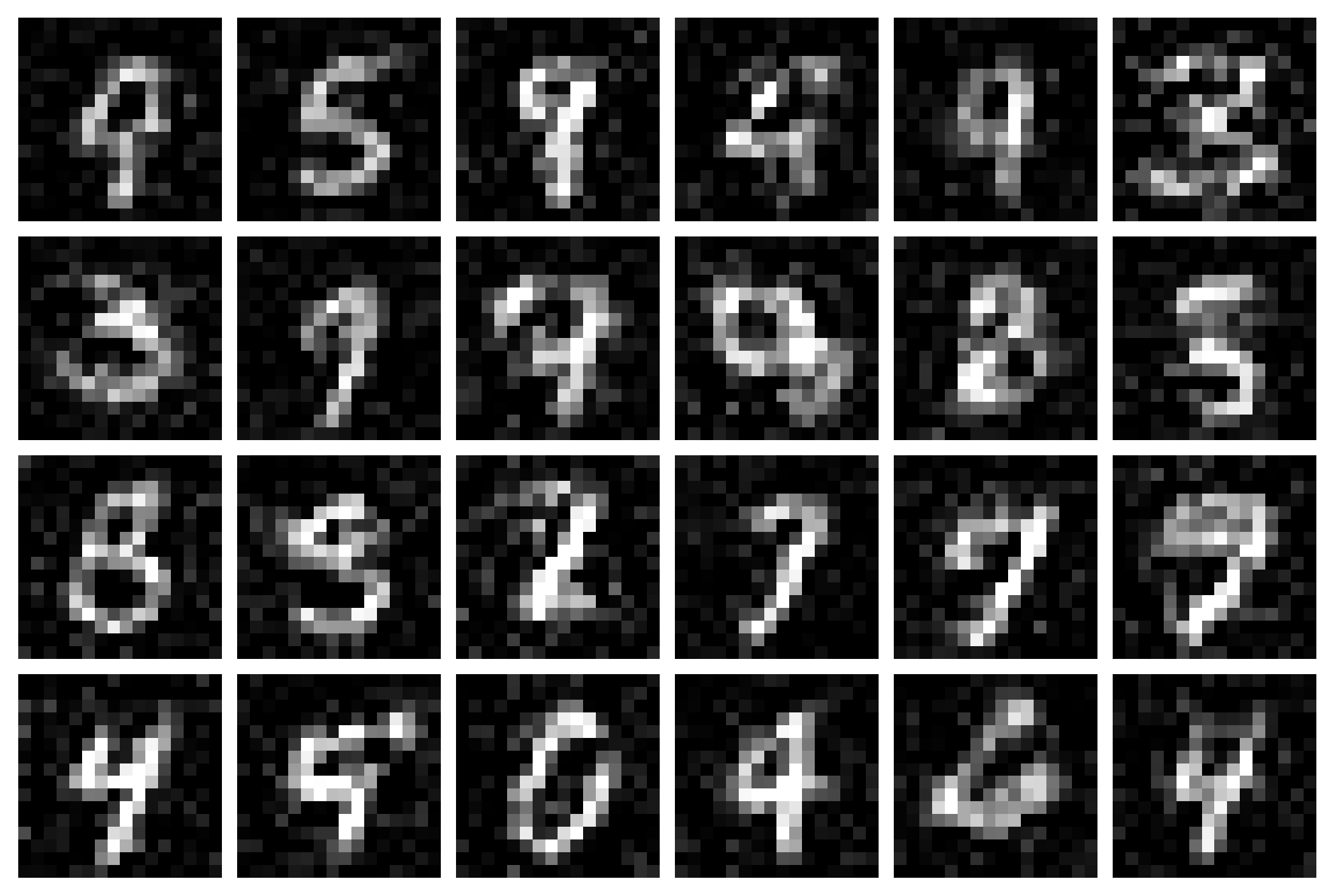}%
    \includegraphics[width=.5\linewidth]{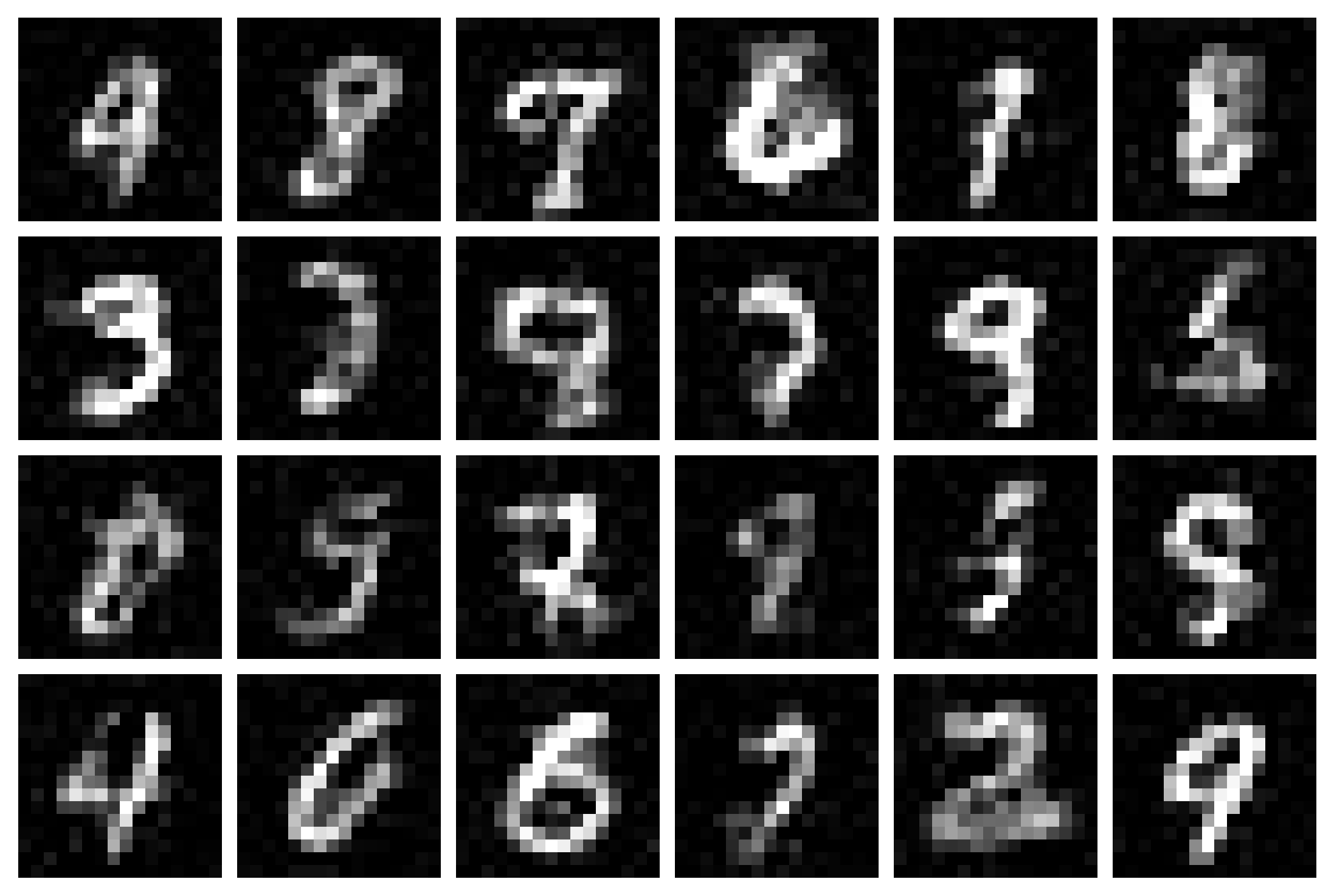}
    \caption{GAN (MNIST dataset). Left to right, top to bottom: Ground truth, SG, OP, EG, CO, SGA, GNI, LA, LOLA, EDA, BRF, BRE. SLA excluded due to divergence.}
    \label{fig:gan_images_mnist}
\end{figure}

\section{Theoretical analysis}
\label{sec:theoretical}

In this section, we present a theoretical analysis of our approach.
A general convergence proof is beyond the scope of this paper, though a potentially interesting question for future research. 
It is not uncommon in this field for methods to be introduced before theoretical guarantees are obtained.
Indeed, the latter is often difficult enough to stand alone as a research contribution. 
There have been many purely theoretical papers in the field of game solving on addressing such open theoretical questions with guarantees or lower bounds.
Furthermore, most of the great breakthroughs in AI game-playing lack theoretical guarantees for the technique that is actually used in practice, especially when they employ neural networks or abstraction techniques.
Theoretical analyses in the literature often make assumptions---such as linearity, (quasi)convexity, \emph{etc.}---that are not always satisfied in practice.

\citet{Lockhart_2019} analyzed exploitability descent in two-player, zero-sum, extensive-form games with finite action spaces.
As stated by \citet{goktas2022exploitability}, minimizing exploitability is a logical approach to computing NE, especially in cases where exploitability is convex, such as in the pseudo-games that result from replacing each player's utility function in a monotone pseudo-game by a second-order Taylor approximation \citep{flam1994noncooperative}.

\subsection{Convex exploitability}

In this subsection, we prove that the exploitability function is convex for certain classes of games.

\begin{definition}
Call a game ``regular'' if it has convex strategy sets and its utility function is of the form
\begin{align}
    u_i(x) = f_i(x_i) + \sum_{j \neq i} g(x_i, x_j)
\end{align}
where \(g_{ij}\) is convex in its second argument.
\end{definition}

\begin{theorem}
\label{thm:regular}
A constant-sum regular game has convex exploitability.
\end{theorem}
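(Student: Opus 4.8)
The plan is to reduce exploitability to a sum of best-response value functions and then invoke the standard fact that a pointwise supremum of convex functions is convex. First I would write, for any strategy profile $x$,
\[
\Phi(x) = \sum_{i \in \mathcal{I}} R_i(x) = \sum_{i \in \mathcal{I}} \left( \sup_{y_i \in \mathcal{X}_i} u_i(y_i, x_{-i}) \right) - \sum_{i \in \mathcal{I}} u_i(x).
\]
Since the game is constant-sum, $\sum_{i \in \mathcal{I}} u_i(x)$ is a constant and hence irrelevant to convexity; it therefore suffices to prove that each map $V_i(x) := \sup_{y_i \in \mathcal{X}_i} u_i(y_i, x_{-i})$ is convex on $\mathcal{X} = \prod_i \mathcal{X}_i$.

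Next I would fix $i \in \mathcal{I}$ and an arbitrary $y_i \in \mathcal{X}_i$ and consider the map $x \mapsto u_i(y_i, x_{-i})$. By the regular form of the utilities, $u_i(y_i, x_{-i}) = f_i(y_i) + \sum_{j \neq i} g_{ij}(y_i, x_j)$. The term $f_i(y_i)$ does not depend on $x$, and by hypothesis each $g_{ij}(y_i, \cdot)$ is convex in its second argument; thus $x \mapsto u_i(y_i, x_{-i})$ is, up to an additive constant, a sum of convex functions of the individual coordinates $x_j$, and is therefore convex on the product of convex sets $\mathcal{X}$. Since $V_i$ is the pointwise supremum over $y_i \in \mathcal{X}_i$ of this family of convex functions, $V_i$ is convex. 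Summing over $i$ and subtracting the constant $\sum_i u_i$ gives convexity of $\Phi$.

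There is no deep obstacle here; the argument is essentially an unfolding of definitions together with the sup-of-convex-is-convex lemma. The two points that merit a little care are: (i) constant-sumness is precisely what removes the $-\sum_i u_i(x)$ term — in a general-sum game this term need not be concave, so the reduction breaks down, which is why the hypothesis is needed; and (ii) ensuring each $V_i$ is well-defined (and, if one wants a real-valued $\Phi$, finite), which holds for instance when each $\mathcal{X}_i$ is compact and $u_i$ is continuous in its deviation argument, as in the extreme-value-theorem discussion of \S\ref{sec:method}, and otherwise can be handled by working with extended-real-valued convex functions. I would state the theorem under whichever mild assumption makes each $V_i$ proper, and remark that convexity of $\Phi$ turns exploitability minimization into a convex problem whose stationary points are global minima, i.e., Nash equilibria.
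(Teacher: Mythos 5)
Your proof is correct and follows essentially the same route as the paper's: decompose $\Phi$ into $\sum_i \sup_{y_i} u_i(y_i, x_{-i})$ plus a constant (using constant-sumness), observe that for each fixed $y_i$ the map $x \mapsto u_i(y_i, x_{-i})$ is a constant plus a sum of functions convex in the $x_j$, and conclude via the sup-of-convex-is-convex and sum-of-convex-is-convex lemmas. Your added remarks on why constant-sumness is needed and on properness of the suprema are sensible but do not change the argument.
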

\begin{proof}
A sum of convex functions is convex.
A supremum of convex functions is convex.
Therefore,
\begin{align}
    \Phi(x)
    &= \sum_{i \in \mathcal{I}} \mleft( \sup_{y_i \in \mathcal{S}_i} u_i(y_i, x_{-i}) - u_i(x) \mright) \\
    &= \sum_{i \in \mathcal{I}} \sup_{y_i \in \mathcal{S}_i} u_i(y_i, x_{-i}) - \sum_{i \in \mathcal{I}} u_i(x) \\
    &= \sum_{i \in \mathcal{I}} \sup_{y_i \in \mathcal{S}_i} u_i(y_i, x_{-i}) + \text{const} \\
    &= \sum_{i \in \mathcal{I}} \sup_{y_i \in \mathcal{S}_i} \mleft( f_i(y_i) + \sum_{j \neq i} g(y_i, x_j) \mright) + \text{const} \\
    &= \sum_{i \in \mathcal{I}} \sup_{y_i \in \mathcal{S}_i} \mleft( \text{const} + \sum_{j \neq i} \text{convex} \mright) + \text{const} \\
    &= \sum_{i \in \mathcal{I}} \sup_{y_i \in \mathcal{S}_i} \text{convex} + \text{const} \\
    &= \sum_{i \in \mathcal{I}} \text{convex} + \text{const} \\
    &= \text{convex}
\end{align}
\end{proof}

\begin{definition}
A polymatrix game is a game with a utility function of the form
\begin{align}
    u_i(x) = \sum_{j \neq i} x_i^\top A_{ij} x_j
\end{align}
\end{definition}

These are graphical games in which each node corresponds to a player and each edge corresponds to a two-player bimatrix game between its endpoints.
Each player chooses a single strategy for all of its bimatrix games and receives the sum of the resulting payoffs.
In a \emph{constant-sum} polymatrix game, the sum of utilities across all players is constant.
As noted by \citet{cai2011minmax}, ``Intuitively, these games can be used to model a broad class of competitive environments where there is a constant amount of wealth (resources) to be split among the players of the game, with no in-flow or out-flow of wealth that may change the total sum of players’ wealth in an outcome of the game.''
They give an example of a ``Wild West'' game in which a set of gold miners need to transport gold by splitting it into wagons that traverse different paths, each of which may be controlled by thieves who could seize it.

\citet{cai2011minmax} prove a generalization of von Neumann's minmax theorem to constant-sum polymatrix games.
Their theorem implies convexity of equilibria, polynomial-time tractability, and convergence of no-regret learning algorithms to Nash equilibria.
\citet{cai2016zero} show that, in such games,  Nash equilibria can be found efficiently with linear programming.
They also show that the set of coarse correlated equilibria (CCE) collapses to the set of Nash equilibria.

We prove the following result.

\begin{theorem}
A constant-sum polymatrix game has convex exploitability.
\end{theorem}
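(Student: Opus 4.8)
The plan is to reduce this to the previous theorem by showing that every constant-sum polymatrix game is, after a harmless reparameterization, a constant-sum ``regular'' game in the sense of the earlier definition. Recall a polymatrix game has $u_i(x) = \sum_{j \neq i} x_i^\top A_{ij} x_j$. First I would observe that each term $x_i^\top A_{ij} x_j$ is linear---hence convex---in $x_j$ for fixed $x_i$. So with $f_i \equiv 0$ and $g_{ij}(x_i, x_j) = x_i^\top A_{ij} x_j$, the utility has exactly the form $u_i(x) = f_i(x_i) + \sum_{j \neq i} g_{ij}(x_i, x_j)$ with $g_{ij}$ convex (indeed affine) in its second argument. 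The strategy sets here are simplices (or, in the paper's logit parameterization, all of Euclidean space), which are convex. Therefore the game is regular, and since it is constant-sum by hypothesis, Theorem~\ref{thm:regular} applies directly and gives convex exploitability.

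If instead one wants a self-contained argument mirroring the chain of equalities in the proof of Theorem~\ref{thm:regular}, I would write
\begin{align}
    \Phi(x)
    &= \sum_{i \in \mathcal{I}} \sup_{y_i} u_i(y_i, x_{-i}) - \sum_{i \in \mathcal{I}} u_i(x) \\
    &= \sum_{i \in \mathcal{I}} \sup_{y_i} \Bigl( \sum_{j \neq i} y_i^\top A_{ij} x_j \Bigr) + \text{const},
\end{align}
using that $\sum_i u_i(x)$ is constant. For each fixed $i$, the map $x \mapsto \sum_{j \neq i} y_i^\top A_{ij} x_j$ is linear in $x$ (it only touches the coordinates $x_j$, $j \neq i$, and does so linearly), so it is convex in $x$; a pointwise supremum over $y_i$ of convex functions is convex; and a finite sum of convex functions is convex. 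Hence $\Phi$ is convex.

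The only place that needs care---and the main (mild) obstacle---is the bookkeeping around the constant-sum assumption: one must confirm that $\sum_{i \in \mathcal{I}} u_i(x)$ is genuinely constant in $x$ (not merely that the game is called constant-sum for some other reason), so that subtracting it leaves a convex remainder rather than something that could spoil convexity. This is exactly the hypothesis, so it is immediate, but it is worth stating explicitly. A secondary point is ensuring the supremum $\sup_{y_i} u_i(y_i, x_{-i})$ is finite, i.e.\ the best-response value exists; over a compact simplex this follows from the extreme value theorem as discussed earlier in the paper, and over the logit parameterization one works with the induced simplex strategies, so finiteness again holds. With these observations the result follows immediately from Theorem~\ref{thm:regular} (or from the displayed computation above).
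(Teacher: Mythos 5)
Your proposal matches the paper's proof: both reduce the statement to Theorem~\ref{thm:regular} by observing that a polymatrix game is regular with $f_i \equiv 0$ and $g_{ij}(x_i,x_j) = x_i^\top A_{ij} x_j$ linear, hence convex, in its second argument. The extra self-contained computation and the remarks on finiteness of the supremum are correct but not needed beyond what the paper already establishes.
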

\begin{proof}
A polymatrix game is a regular game where \(f_i(x_i) = 0\) and \(g_{ij}(x_i, x_j) = x_i^\top A_{ij} x_j\).
The latter is linear, and therefore convex, in its second argument.
Therefore, if the game is constant-sum, by Theorem \ref{thm:regular}, the exploitability is convex.
\end{proof}

\begin{corollary}
A pairwise constant-sum polymatrix game has convex exploitability.
\end{corollary}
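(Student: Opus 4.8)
The plan is to reduce this to the preceding theorem by showing that every pairwise constant-sum polymatrix game is, in fact, a constant-sum polymatrix game in the sense used there. Recall that ``pairwise constant-sum'' means that for each edge \(\{i,j\}\) the bimatrix game between players \(i\) and \(j\) is constant-sum: there is a scalar \(c_{ij} = c_{ji}\) with \(A_{ij} + A_{ji}^\top = c_{ij}\, \mathbf{1}\mathbf{1}^\top\), so that \(x_i^\top A_{ij} x_j + x_j^\top A_{ji} x_i = c_{ij}\) whenever \(x_i\) and \(x_j\) lie on their probability simplices (using \(\mathbf{1}^\top x_i = \mathbf{1}^\top x_j = 1\)).

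First I would sum the utilities over all players and collect the contributions edge by edge:
\begin{align}
\sum_{i \in \mathcal{I}} u_i(x)
= \sum_{i \in \mathcal{I}} \sum_{j \neq i} x_i^\top A_{ij} x_j
= \sum_{\{i,j\}} \mleft( x_i^\top A_{ij} x_j + x_j^\top A_{ji} x_i \mright)
= \sum_{\{i,j\}} c_{ij},
\end{align}
which is a constant independent of \(x\). Hence the game is constant-sum, and it is of course still a polymatrix game, since its utility functions have not been altered. It then remains only to invoke the previous theorem: a constant-sum polymatrix game has convex exploitability. Applying it to the game at hand yields the claim.

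I do not expect a genuine obstacle here. The only point that requires a moment's care is the reduction step, which relies on the strategy sets being probability simplices so that \(\mathbf{1}^\top x_i = 1\); without this, the pairwise constant-sum condition at the level of payoff matrices would not collapse to the scalar \(c_{ij}\) after contracting with mixed strategies. Everything else is immediate from the already-established results, so the proof will be short.
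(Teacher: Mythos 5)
Your proof is correct and follows exactly the route the paper intends: the corollary is stated without proof precisely because a pairwise constant-sum polymatrix game is a constant-sum polymatrix game (summing the per-edge constants \(c_{ij}\) over all edges, using \(\mathbf{1}^\top x_i = 1\) on the simplices), after which the preceding theorem applies directly. Your explicit verification of the reduction, including the caveat about strategy sets being probability simplices, is a faithful and complete filling-in of that one-line argument.
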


\begin{corollary}
A two-player constant-sum matrix game has convex exploitability.
\end{corollary}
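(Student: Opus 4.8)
The plan is to obtain this corollary as an immediate specialization of the preceding results, so that essentially no new argument is needed. First I would observe that a two-player matrix game is literally a polymatrix game on the graph with two nodes and one edge: setting \(u_1(x) = x_1^\top A_{12} x_2\) and \(u_2(x) = x_2^\top A_{21} x_1\) recovers the bimatrix payoff structure, with the single edge \(\{1,2\}\) carrying the bimatrix game \((A_{12}, A_{21})\). Hence the hypotheses of the polymatrix definition are met directly.

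Next I would check that ``constant-sum'' in the matrix-game sense coincides with the polymatrix notion here. With only two players, the condition \(u_1 + u_2 = \text{const}\) is exactly the pairwise constant-sum condition, there being only one pair; so the game is a (pairwise) constant-sum polymatrix game. Applying the theorem that a constant-sum polymatrix game has convex exploitability (equivalently, the corollary for the pairwise case) then finishes the argument.

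Alternatively, a fully self-contained derivation is available by invoking Theorem~\ref{thm:regular} directly: a two-player constant-sum matrix game is regular with \(f_i \equiv 0\) and \(g_{ij}(x_i, x_j) = x_i^\top A_{ij} x_j\), which is linear, hence convex, in its second argument, so convexity of the exploitability follows at once. I do not anticipate any real obstacle here; the only point meriting a sentence of care is verifying that the bilinear form is convex in the \emph{second} argument as Theorem~\ref{thm:regular} requires, which holds because \(x_i^\top A_{ij} x_j\) is affine in \(x_j\) for fixed \(x_i\).
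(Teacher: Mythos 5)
Your proposal is correct and matches the paper's intent: the corollary is stated as an immediate specialization of the constant-sum polymatrix theorem to the two-node case, exactly as you argue (and your alternative route through Theorem~\ref{thm:regular} with \(f_i \equiv 0\) and \(g_{ij}(x_i,x_j) = x_i^\top A_{ij} x_j\) affine in \(x_j\) is the same argument the paper uses to prove the polymatrix theorem itself). No gaps.
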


\begin{theorem}
A two-player constant-sum concave-convex game has convex exploitability.
\end{theorem}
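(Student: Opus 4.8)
The plan is to collapse the exploitability of a two-player constant-sum game into an expression that involves only player~1's utility function, and then recognize it as a sum of pointwise suprema of convex functions, exactly as in the proof of Theorem~\ref{thm:regular}.

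First I would pin down the hypothesis: a two-player constant-sum concave-convex game is one with convex strategy sets \(\mathcal{X}_1, \mathcal{X}_2\), with \(u_1 + u_2 \equiv c\) for some constant \(c\), and with \(u_1\) concave in its first argument and convex in its second argument (equivalently, each player's utility is concave in that player's own strategy). Next I would rewrite the two regrets at a profile \(x = (x_1, x_2)\):
\begin{align}
R_1(x) &= \sup_{y_1 \in \mathcal{X}_1} u_1(y_1, x_2) - u_1(x_1, x_2), \\
R_2(x) &= \sup_{y_2 \in \mathcal{X}_2} u_2(x_1, y_2) - u_2(x_1, x_2).
\end{align}
Substituting \(u_2 = c - u_1\) turns the second regret into \(R_2(x) = u_1(x_1, x_2) - \inf_{y_2 \in \mathcal{X}_2} u_1(x_1, y_2)\), so that adding the regrets cancels the \(\pm u_1(x_1, x_2)\) terms and leaves
\begin{align}
\Phi(x) = \sup_{y_1 \in \mathcal{X}_1} u_1(y_1, x_2) \;+\; \sup_{y_2 \in \mathcal{X}_2} \bigl(-u_1(x_1, y_2)\bigr).
\end{align}

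To finish, I would make two convexity observations. The first summand depends only on \(x_2\) and is the pointwise supremum over \(y_1\) of the maps \(x_2 \mapsto u_1(y_1, x_2)\), each convex by convexity of \(u_1\) in its second argument; hence it is convex in \(x_2\), and therefore convex as a function of \(x = (x_1, x_2)\). The second summand depends only on \(x_1\) and is the pointwise supremum over \(y_2\) of the maps \(x_1 \mapsto -u_1(x_1, y_2)\), each convex because \(u_1\) is concave in its first argument; hence it is convex in \(x_1\), and therefore convex in \(x\). Since a sum of convex functions is convex, \(\Phi\) is convex.

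The only real obstacle is conceptual, not computational: one has to state precisely what "concave-convex" means for a \emph{game} (rather than for a single saddle function) and check that the constant-sum structure is exactly what permits collapsing \(\Phi\) into \(\sup u_1 - \inf u_1\), so that the concave part and the convex part of \(u_1\) each land on a separate, non-interacting block of variables. Once that reduction is in place, the remaining steps — pointwise supremum of convex is convex, sum of convex is convex — are identical to those already used for Theorem~\ref{thm:regular}. I would also remark that, unlike the regular and polymatrix cases, this result needs no additive separability of \(u_1\), only the concave-convex structure, and that it specializes to the two-player constant-sum matrix case (recovering the earlier corollary), since a bilinear payoff is linear, hence both concave and convex, in each argument separately.
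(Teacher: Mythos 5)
Your proof is correct and follows essentially the same route as the paper's: both collapse the exploitability of the constant-sum game into the duality gap \(\sup_{y_1} u_1(y_1, x_2) - \inf_{y_2} u_1(x_1, y_2)\) and then invoke the facts that a pointwise supremum of convex functions is convex and that the negation of an infimum of concave functions is convex. Your version is slightly more explicit about the cancellation of the \(\pm u_1(x)\) terms and about which variable block each summand depends on, but the argument is the same.
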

\begin{proof}
In a two-player constant-sum game, the exploitability reduces to the so-called \textit{duality gap} \citep{Grnarova_2021}.
\begin{align}
    \Phi(x) = \sup_{x_1' \in \mathcal{S}_1} u_1(x_1', x_2) - \inf_{x_2' \in \mathcal{S}_2} u_1(x_1, x_2') + C
\end{align}
Here, \(u_1(x_1', x_2)\) is convex in \(x\).
Thus \(\sup_{x_1' \in \mathcal{S}_1} u_1(x_1', x_2)\) is convex in \(x\).
Also, \(u_1(x_1, x_2')\) is concave in \(x\).
Thus \(\inf_{x_2' \in \mathcal{S}_2} u_1(x_1, x_2')\) is concave in \(x\).
Thus \(\Phi(x) = \text{convex} - \text{concave} + \text{constant}\) in \(x\), which is convex in \(x\).
\end{proof}

\subsection{Subgradient descent}

We seek to minimize exploitability by performing subgradient descent.
This raises the question of when this process attains a global minimum.
\citet{kiwiel2004convergence} analyze the convergence of \emph{approximate subgradient methods} for convex optimization, and prove the following theorems.
Let \(\mathcal{S} \subseteq \mathbb{R}^n\) be a nonempty closed convex set,
\(f : \mathcal{S} \to \mathbb{R}\) be a closed proper convex function,
and \(\mathcal{S}_* = \operatorname{argmin} f\).
Let \(x_{t+1} = P_\mathcal{S}(x_t - \nu_t g_t)\) where \(P_\mathcal{S}\) is the projector onto \(\mathcal{S}\) (\(P_\mathcal{S}(x) \in \argmin_{y \in \mathcal{S}} \|x - y\|\)), \(\nu_t \geq 0\) is a stepsize, \(\varepsilon_t \geq 0\) is an error tolerance, and \(g_t \in \partial_{\varepsilon_t} f(x_t)\) is an \(\varepsilon_t\)-\emph{approximate subgradient} of \(f\) at \(x_t\), that is, \(f(x) \geq f(x_t) + \langle g_t, x - x_t \rangle - \varepsilon_t\) for all \(x\).

\begin{theorem}
\citep[Theorem 3.4]{kiwiel2004convergence}
Suppose \(\mathcal{S}_* \neq \varnothing\), \(\sum_{t \in \mathbb{N}} \nu_t = \infty\), and \(\sum_{t \in \mathbb{N}} \nu_t (\tfrac{1}{2} \|g_t\|^2 \nu_t + \varepsilon_t) < \infty\).
Then \(\{x_t\}_{t \in \mathbb{N}}\) converges to some \(x_\infty \in \mathcal{S}_*\).
\end{theorem}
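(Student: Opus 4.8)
The plan is to run the classical Fejér-type argument for subgradient methods, adapted to tolerate the \(\varepsilon_t\)-approximate subgradients. First I would fix an arbitrary \(x_* \in \mathcal{S}_*\) and expand \(\|x_{t+1} - x_*\|^2\) using nonexpansiveness of the Euclidean projector \(P_{\mathcal{S}}\) together with \(P_{\mathcal{S}}(x_*) = x_*\), giving
\begin{align}
\|x_{t+1} - x_*\|^2 \leq \|x_t - x_*\|^2 - 2 \nu_t \langle g_t, x_t - x_* \rangle + \nu_t^2 \|g_t\|^2 .
\end{align}
The defining inequality of an \(\varepsilon_t\)-approximate subgradient, applied at \(x = x_*\), yields \(\langle g_t, x_t - x_* \rangle \geq f(x_t) - f_* - \varepsilon_t\), where \(f_* = \min_{\mathcal{S}} f\). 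Substituting gives the key recursion
\begin{align}
\|x_{t+1} - x_*\|^2 \leq \|x_t - x_*\|^2 - 2 \nu_t \mleft( f(x_t) - f_* \mright) + 2 \nu_t \mleft( \tfrac{1}{2} \nu_t \|g_t\|^2 + \varepsilon_t \mright),
\end{align}
in which the subtracted term is nonnegative and the final term is summable by hypothesis.

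Second, I would invoke the standard quasi-Fejér lemma: if \(a_{t+1} \leq a_t - b_t + c_t\) with \(a_t, b_t \geq 0\) and \(\sum_t c_t < \infty\), then \(\{a_t\}\) converges and \(\sum_t b_t < \infty\). Applying this with \(a_t = \|x_t - x_*\|^2\) shows that \(\|x_t - x_*\|\) converges for every \(x_* \in \mathcal{S}_*\) (so \(\{x_t\}\) is bounded) and that \(\sum_t \nu_t \mleft( f(x_t) - f_* \mright) < \infty\). Combined with \(\sum_t \nu_t = \infty\), this forces \(\liminf_t f(x_t) = f_*\), so some subsequence satisfies \(f(x_{t_k}) \to f_*\); by boundedness, passing to a further subsequence we may assume \(x_{t_k} \to x_\infty\) for some \(x_\infty \in \mathcal{S}\) (using closedness of \(\mathcal{S}\)). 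Lower semicontinuity of the closed proper convex function \(f\) then gives \(f(x_\infty) \leq \liminf_k f(x_{t_k}) = f_*\), hence \(x_\infty \in \mathcal{S}_*\).

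Finally, I would close the argument by re-using the convergence of \(\|x_t - x_*\|\) with the particular choice \(x_* = x_\infty\): the limit of \(\|x_t - x_\infty\|\) exists, and along the subsequence it tends to \(0\), so the whole sequence satisfies \(x_t \to x_\infty \in \mathcal{S}_*\), as claimed. The delicate points are that no separate boundedness assumption on \(\|g_t\|\) is needed --- the hypothesis \(\sum_t \nu_t ( \tfrac{1}{2} \|g_t\|^2 \nu_t + \varepsilon_t ) < \infty\) already absorbs the \(\nu_t^2 \|g_t\|^2\) term --- and that one must work with lower semicontinuity rather than continuity of \(f\), which is exactly what ``closed proper convex'' provides. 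I expect the quasi-Fejér lemma and this semicontinuity step to be the conceptual crux; the remainder is bookkeeping.
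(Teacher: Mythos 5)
Your proof is correct and is essentially the standard argument for this result: the paper itself gives no proof (it imports the statement from Kiwiel and {\L}opuszа\'nski's cited work), and the quasi-Fej\'er recursion you derive --- projection nonexpansiveness, the \(\varepsilon_t\)-subgradient inequality at an arbitrary \(x_* \in \mathcal{S}_*\), summability of the error term \(2\nu_t(\tfrac12\nu_t\|g_t\|^2 + \varepsilon_t)\), the \(\liminf\) argument via \(\sum_t \nu_t = \infty\), lower semicontinuity of the closed convex \(f\), and finally re-using Fej\'er monotonicity at \(x_* = x_\infty\) --- is exactly the route taken in the source. No gaps; the observation that the hypothesis already absorbs \(\nu_t^2\|g_t\|^2\) without a separate bound on \(\|g_t\|\) is the right reading of why this theorem needs no growth condition on the subgradients.
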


\begin{theorem}
\citep[Theorem 3.6]{kiwiel2004convergence}
Suppose \(\mathcal{S}_* \neq \varnothing\), \(\sum_{t \in \mathbb{N}} \nu_t = \infty\), \(\sum_{t \in \mathbb{N}} \nu_t^2 < \infty\), \(\sum_{t \in \mathbb{N}} \nu_t \varepsilon_t < \infty\), and the subgradients do not grow too fast: \(\exists c < \infty . \forall t \in \mathbb{N} . \|g_t\|^2 \leq c (1 + \|x_t\|^2)\) (\emph{e.g.}, they are bounded).
Then \(\{x_t\}_{t \in \mathbb{N}}\) converges to some \(x_\infty \in \mathcal{S}_*\).
\end{theorem}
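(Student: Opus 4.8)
The plan is to run the classical quasi-Fej\'er (almost-supermartingale) analysis of subgradient descent, with two twists relative to the textbook case: the subgradients are only approximate, and they are governed by a growth condition rather than assumed bounded outright. First I would fix an arbitrary minimizer \(x_* \in \mathcal{S}_*\) and track the squared distance \(d_t := \|x_t - x_*\|^2\). Since \(x_* \in \mathcal{S}\) and the projector \(P_\mathcal{S}\) is nonexpansive, \(\|x_{t+1} - x_*\|^2 \le \|x_t - \nu_t g_t - x_*\|^2\), and expanding the right-hand side gives
\begin{align}
d_{t+1} \le d_t - 2\nu_t \langle g_t, x_t - x_* \rangle + \nu_t^2 \|g_t\|^2 .
\end{align}
The defining inequality of an \(\varepsilon_t\)-approximate subgradient, evaluated at \(x = x_*\), yields \(\langle g_t, x_t - x_* \rangle \ge f(x_t) - f(x_*) - \varepsilon_t \ge -\varepsilon_t\), where the last step uses \(f(x_t) \ge f(x_*)\). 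Substituting produces the fundamental recursion
\begin{align}
d_{t+1} \le d_t - 2\nu_t \mleft( f(x_t) - f(x_*) \mright) + 2\nu_t \varepsilon_t + \nu_t^2 \|g_t\|^2 .
\end{align}

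The main obstacle is that \(\nu_t^2 \|g_t\|^2\) is not yet known to be summable: the growth condition bounds \(\|g_t\|^2\) in terms of \(\|x_t\|^2\), which is exactly the quantity we are trying to control, so the estimate is circular on its face. I would break the circle with a bootstrap. Using \(\|x_t\|^2 \le 2 d_t + 2\|x_*\|^2\) together with \(\|g_t\|^2 \le c(1 + \|x_t\|^2)\), and discarding the nonnegative term \(2\nu_t(f(x_t) - f(x_*))\), the recursion becomes \(d_{t+1} \le (1 + 2c\nu_t^2)\, d_t + b_t\) with \(b_t := 2\nu_t \varepsilon_t + c\nu_t^2(1 + 2\|x_*\|^2)\). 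By hypothesis \(\sum_t \nu_t^2 < \infty\) and \(\sum_t \nu_t \varepsilon_t < \infty\), so both \(\sum_t 2c\nu_t^2\) and \(\sum_t b_t\) are finite. A standard deterministic almost-supermartingale lemma (if \(d_{t+1} \le (1+a_t)d_t + b_t\) with \(a_t, b_t \ge 0\) summable, then \(\lim_t d_t\) exists and is finite) then guarantees that \(d_t\) converges; in particular \(\{x_t\}\) is bounded. Feeding boundedness back into the growth condition yields a uniform bound \(\|g_t\|^2 \le M\), after which \(\sum_t \nu_t^2 \|g_t\|^2 \le M \sum_t \nu_t^2 < \infty\).

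Finally I would extract the limit point and upgrade to full convergence. Returning to the fundamental recursion with the uniform bound and summing over \(t\), the convergence of \(d_t\) together with \(\sum_t \nu_t \varepsilon_t < \infty\) and \(\sum_t \nu_t^2 < \infty\) forces \(\sum_t \nu_t \mleft( f(x_t) - f(x_*) \mright) < \infty\). Since \(\sum_t \nu_t = \infty\), this gives \(\liminf_t f(x_t) = f(x_*)\), so along some subsequence \(f(x_{t_k}) \to f(x_*)\). Boundedness of \(\{x_t\}\) lets me pass to a convergent sub-subsequence \(x_{t_{k_j}} \to x_\infty\); closedness of \(\mathcal{S}\) and lower semicontinuity of the closed proper convex \(f\) give \(x_\infty \in \mathcal{S}\) and \(f(x_\infty) \le \liminf_j f(x_{t_{k_j}}) = f(x_*)\), hence \(x_\infty \in \mathcal{S}_*\). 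The closing move is that the Fej\'er analysis applies to \emph{every} minimizer, so I re-run it with \(x_* = x_\infty\): then \(\|x_t - x_\infty\|^2\) converges, and since it vanishes along the sub-subsequence its limit must be \(0\). Therefore the whole sequence converges to \(x_\infty \in \mathcal{S}_*\), as claimed.
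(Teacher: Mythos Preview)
The paper does not give its own proof of this statement: it simply quotes the result as \citep[Theorem 3.6]{kiwiel2004convergence} and uses it as a black box in the ensuing discussion of approximate exploitability descent. There is therefore no in-paper argument to compare against.

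That said, your proposal is a correct and standard proof, and it is essentially the argument Kiwiel gives in the cited reference: the one-step recursion from nonexpansiveness of \(P_\mathcal{S}\) and the \(\varepsilon_t\)-subgradient inequality; the bootstrap that combines the growth bound \(\|g_t\|^2 \le c(1+\|x_t\|^2)\) with \(\|x_t\|^2 \le 2d_t + 2\|x_*\|^2\) to obtain a Robbins--Siegmund-type inequality \(d_{t+1} \le (1+a_t)d_t + b_t\) with summable \(a_t,b_t\), yielding boundedness of \(\{x_t\}\) and hence of \(\{g_t\}\); the telescoping sum that forces \(\sum_t \nu_t(f(x_t)-f_*) < \infty\) and, via \(\sum_t \nu_t = \infty\), gives \(\liminf_t f(x_t) = f_*\); and finally the quasi-Fej\'er step of applying the convergence of \(\|x_t - x_*\|^2\) with \(x_* = x_\infty\) to upgrade subsequential convergence to full convergence. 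Nothing is missing.
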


Convergence results are also known for subgradient methods on \emph{quasi}convex functions \citep{hu2015inexact}.

In our paper, we are trying to approximately minimize the exploitability function \(\Phi : \mathcal{S} \to \mathbb{R}\), \(\Phi(x) = \sup_{y \in \mathcal{S}} \phi(x, y)\), where \(\phi\) is the Nikaido--Isoda function and \(\mathcal{S}\) is the set of possible strategy profiles.
Specifically, we use subgradients of \(\tilde{\Phi}_t(x) = \phi(x, \tilde{y}_t)\), where \(\tilde{y}_t\) is the response profile output by the learned best-response ensembles (BRE) or best-response function (BRF) at \(t\).
Thus \(\tilde{\Phi}_t(x) \geq \tilde{\Phi}_t(x_t) + \langle g_t, x - x_t \rangle\).
Unconditionally, \(\Phi \geq \tilde{\Phi}_t\) (since the former maximizes over all possible \(y\)).
Thus \(\Phi(x) \geq \tilde{\Phi}_t(x_t) + \langle g_t, x - x_t \rangle\).
Now, suppose we can guarantee that \(\tilde{\Phi}(x_t) \geq \Phi(x_t) - \varepsilon_t\) for an error tolerance \(\varepsilon_t \geq 0\); that is, the responses output by the BRE/BRF at \(t\) do not perform \emph{too} badly (in the limit) compared to the true best responses.
Then \(\Phi(x) \geq \Phi(x_t) - \varepsilon_t + \langle g_t, x - x_t \rangle\).

Therefore, when the assumptions of the above theorems hold, \(\{x_t\}_{t \in \mathbb{N}}\) converges to a global minimizer of the exploitability function, which is an NE (if an NE exists at all).

\section{Additional related work}

\citet{McMahan_2003} introduced the double oracle algorithm for normal-form games and proved its convergence.~\citet{Adam_2021} extended it to two-player zero-sum continuous games.~\citet{Kroupa_2021} extended it to \(n\)-player continuous games.
Their algorithm maintains finite strategy sets for each player and iteratively extends them with best responses to an equilibrium of the induced finite sub-game.
This ``converges fast when the dimension of strategy spaces is small, and the generated subgames are not large.''
For example, in the two-player zero-sum case, ``The best responses were computed by selecting the best point of a uniform discretization for the one-dimensional problems and by using a mixed-integer linear programming reformulation for the Colonel Blotto games.''

Our best-response ensembles method has some resemblance to double oracle algorithms, including ones for continuous games~\citep{Adam_2021,Kroupa_2021} as well as PSRO~\citep{Lanctot17:Unified}, XDO~\citep{McAleer_2021}, Anytime PSRO~\citep{McAleer_2022}, and Self-Play PSRO~\citep{McAleer_2022b}.
Double oracle algorithms maintain a set of strategies that is expanded on each iteration with approximate best responses to the meta-strategies of the other players.
These strategies are \emph{static}, that is, they do not change after they are added.
In contrast, our algorithm \emph{dynamically} improves the elements of a best-response ensemble during training.
Thus the ensemble does not need to be grown with each iteration, but improves autonomously over time.

\citet{Ganzfried_2021} introduced an algorithm for approximating equilibria in continuous games called ``redundant fictitious play'' and applied it to a continuous Colonel Blotto game.
\citet{Kamra_2019} presented DeepFP, an approximate extension of fictitious play~\citep{Brown51:Iterative,berger_2007} to continuous action spaces.
They demonstrate stable convergence to equilibrium on several classic games and a large forest security domain.
DeepFP represents players' approximate best responses via generative neural networks, which are highly expressive implicit density approximators.
The authors state that, because implicit density models cannot be trained directly, they employ a game-model network that is a differentiable approximation of the players' utilities given their actions, and train these networks end-to-end in a model-based learning regime.
This allows working in the absence of gradients for players.

\citet{Li_2021} extended the double oracle approach to \(n\)-player general-sum continuous Bayesian games.
They represent agents as neural networks and optimize them using \emph{natural evolution strategies (NES)}~\citep{Wierstra_2008, Wierstra_2014}.
For pure equilibrium computation, they formulate the problem as a bi-level optimization and employ NES to implement both inner-loop best-response optimization and outer-loop regret minimization.~\citet{Bichler_2021} presented a learning method that represents strategies as neural networks and applies simultaneous gradient dynamics to provably learn local equilibria.~\citet{Fichtl_2022} compute distributional strategies on a discretized version of the game via online convex optimization, specifically \emph{simultaneous online dual averaging (SODA)}, and show that the equilibrium of the discretized game approximates an equilibrium in the continuous game.

In a \emph{generative adversarial network (GAN)}~\citep{goodfellow2020generative}, a generator learns to generate fake data while a discriminator learns to distinguish it from real data.
\cite{Metz_2016} introduced a method to stabilize GANs by defining the generator objective with respect to an unrolled optimization of the discriminator.
They show how this technique solves the common problem of mode collapse, stabilizes training of GANs with complex recurrent generators, and increases diversity and coverage of the data distribution by the generator.
\citet{Grnarova_2019} proposed using an approximation of the game-theoretic \emph{duality gap} as a performance measure for GANs.~\citet{Grnarova_2021} proposed using this measure as the objective, proving some convergence guarantees.

\citet{Lockhart_2019} presented \emph{exploitability descent}, which computes approximate equilibria in two-player zero-sum extensive-form games by direct strategy optimization against worst-case opponents.
They prove that the exploitability of a player's strategy converges asymptotically to zero.
Hence, when both players employ this optimization, the strategy profile converges to an equilibrium.
Unlike extensive-form fictitious play~\citep{Heinrich_2015} and counterfactual regret minimization~\citep{Zinkevich07:Regret}, their convergence pertains to the strategies being optimized rather than the time-average strategies.~\citet{Timbers_2022} introduced approximate exploitability, which uses an approximate best response computed through search and reinforcement learning.
This is a generalization of \emph{local best response}, a domain-specific evaluation metric used in poker~\citep{Lisy17:Equilibrium}.

\citet{Fiez_2022} consider minimax optimization \(\min_x \max_y f(x, y)\) in the context of two-player zero-sum games, where the min-player (controlling \(x\)) tries to minimize \(f\) assuming the max-player (controlling \(y\)) then tries to maximize it.
In their framework, the min-player plays against \emph{smooth algorithms} deployed by the max-player (instead of full maximization, which is generally NP-hard).
Their algorithm is guaranteed to make monotonic progress, avoiding limit cycles or diverging behavior, and finds an appropriate ``stationary point'' in a polynomial number of iterations.

This work has important differences to ours.
First, our work tackles multi-player general-sum games, a more general class of games than two-player zero-sum games.
Second, their work does not use learned best-response functions, but instead runs a multi-step optimization procedure for the opponent on every iteration, with the opponent parameters re-initialized from scratch.
This can be expensive for complex games, which may require many iterations to learn a good opponent strategy.
It also does not reuse information from previous iterations to recommend a good response.
Our learned best-response functions can retain information from previous iterations and do not require a potentially expensive optimization procedure on each iteration.

\citet{Gemp_2022} proposed an approach called \emph{average deviation incentive descent with adaptive sampling} that iteratively improves an approximation to an NE through joint play by tracing a homotopy that defines a continuum of equilibria for the game regularized with decaying levels of entropy.
To encourage iterates to remain near this path, they minimize average deviation incentive via stochastic gradient descent.

\citet{Ganzfried10:Computing} presented a procedure for solving large imperfect information games by solving an infinite approximation of the original game and mapping the equilibrium to a strategy profile in the original game.
Counterintuitively, it is often the case that the infinite approximation can be solved much more easily than the finite game.
The algorithm exploits some qualitative model of equilibrium structure as an additional input in order to find an equilibrium in continuous games.

\citet{Mazumdar_2020} analyze the limiting behavior of competitive gradient-based learning algorithms using dynamical systems theory.
They characterize a non-negligible subset of the local NE that will be avoided if each agent employs a gradient-based learning algorithm.

\citet{Mertikopoulos_2019} examined the convergence of no-regret learning in games with continuous action sets, focusing on learning via ``dual averaging'', a widely used class of no-regret learning schemes where players take small steps along their individual utility gradients and then ``mirror'' the output back to their action sets.
They introduce the notion of variational stability, and show that stable equilibria are locally attracting with high probability whereas globally stable equilibria are globally attracting with probability 1.

\citet{Fiez_2019} investigated the convergence of learning dynamics in Stackelberg games with continuous action spaces.
They characterize conditions under which attracting critical points of simultaneous gradient descent are Stackelberg equilibria in zero-sum games.
They develop a gradient-based update for the leader while the follower employs a best response strategy for which each stable critical point is guaranteed to be a Stackelberg equilibrium in zero-sum games.
As a result, the learning rule provably converges to a Stackelberg equilibria given an initialization in the region of attraction of a stable critical point.
They then consider a follower employing a gradient-play update rule instead of a best response strategy and propose a two-timescale algorithm with similar asymptotic convergence guarantees.

While most previous work on minimax optimization focused on classical notions of equilibria from simultaneous games, where the min-player and the max-player act simultaneously, \citet{Jin_2020} proposed a mathematical definition of local optimality in sequential game settings, which include GANs and adversarial training.
Due to the nonconvex-nonconcave nature of the problems, minimax is in general not equal to maximin, so the order in which players act is crucial.

\citet{Wang_2020} proposed an algorithm for two-player zero-sum sequential games called \emph{Follow-the-Ridge (FR)} that provably converges to and only converges to local minimax, addressing the rotational behaviour of ordinary gradient dynamics.

\citet{Tsaknakis_2021} proposed an algorithm for finding the FNEs of a two-player zero-sum game, in which the local cost functions can be non-convex, and the players only have access to local stochastic gradients.
The proposed approach is based on reformulating the problem of interest as minimizing the \emph{Regularized Nikaido-Isoda (RNI)} function.
Unlike ours, this work tackles two-player zero-sum games only.
Furthermore, it requires nontrivial subroutines.
For example, in the description of the algorithm, the authors state ``We
assume that these subproblems are solved to a given accuracy using known methods, such as the projected gradient descent method.''

\citet{Willi_2022} showed that the original formulation of the LOLA method (and follow-up work) is inconsistent in that it models other agents as naive learners rather than LOLA agents.
In previous work, this inconsistency was suggested as a cause of LOLA's failure to preserve stable fixed points (SFPs).
They formalize consistency and show that \emph{higher-order LOLA (HOLA)} solves LOLA's inconsistency problem if it converges.
They also proposed a new method called \emph{consistent LOLA (COLA)}, which learns update functions that are consistent under mutual opponent shaping.
It requires no more than second-order derivatives and learns consistent update functions even when HOLA fails to converge.

\citet{Perolat_2022} introduced \emph{DeepNash}, an autonomous agent capable of learning to play the imperfect information game Stratego from scratch, up to a human expert level.
DeepNash uses a game-theoretic, model-free deep reinforcement learning method, without search, that learns to master Stratego via self-play.
The \emph{Regularised Nash Dynamics (R-NaD)} algorithm, a key component of DeepNash, converges to an approximate NE, instead of ``cycling'' around it, by directly modifying the underlying multiagent learning dynamics.
\citet{Qin_2022} proposed a no-regret style reinforcement learning algorithm \emph{PORL} for continuous action tasks, proving that it has a last-iterate convergence guarantee.

\citet{bao2022finding} proposed \emph{double Follow-the-Ridge (double-FTR)} an algorithm with local convergence guarantee to differential NE in general-sum two-player differential games.
Whereas they focus on two-player games, we are interested in methods that tackle general \(n\)-player games.

\citet{goktas2022gradient} studied min-max games with dependent strategy sets, where the strategy of the first player constrains the behavior of the second.
They introduced two variants of gradient descent ascent (GDA) that assume access to a solution oracle for the optimal Karush Kuhn Tucker (KKT) multipliers of the games' constraints, and proved a convergence guarantee.

\subsection{Existence and uniqueness of Nash equilibria}

Every finite game has a mixed strategy NE.
This is seminal result in game theory proven by~\citet{Nash50:Non}.
Beyond this theorem, the following theorems apply to games with infinite strategy spaces \(\mathcal{X}_i\):
If for all \(i\), \(\mathcal{X}_i\) is nonempty and compact, and \(u(x)_i\) is continuous in \(x\), a mixed strategy NE exists~\citep{Glicksberg52:Further}.
If for all \(i\), \(\mathcal{X}_i\) is nonempty, compact, and convex, and \(u(x)_i\) is continuous in \(x\) and quasi-concave in \(x_i\), a pure-strategy NE exists~\citep[p. 34]{Fudenberg91:Gamea}.
Other results include the existence of a mixed-strategy NE for games with discontinuous utilities under some mild semicontinuity conditions on the utility functions~\citep{Dasgupta86:Existence}, and the uniqueness of a pure-strategy NE for continuous games under diagonal strict concavity assumptions~\citep{Rosen_1965}.

\subsection{League training}

\citet{Vinyals19:Grandmaster} tackle StarCraft II, a real-time strategy game that has become a popular benchmark for artificial intelligence.
To address the game-theoretic challenges, they introduce \emph{league training}, an algorithm for multiagent reinforcement learning.
Self-play algorithms learn rapidly but may chase cycles indefinitely without making progress.
\emph{Fictitious self-play (FSP)}~\citep{Heinrich_2015} avoids cycles by computing a best response against a uniform mixture of all previous policies; the mixture converges to an NE in 2-player zero-sum games.
They extend this approach to compute a best response against a non-uniform mixture of opponents.
This league of potential opponents includes a diverse range of agents, as well as their policies from both current and previous iterations.
At each iteration, each agent plays games against opponents sampled from a mixture policy specific to that agent.
The parameters of the agent are updated from the outcomes of those games by an actor–critic reinforcement learning procedure.

The league consists of three distinct types of agent, differing primarily in their mechanism for selecting the opponent mixture.
First, the main agents utilize a \emph{prioritized fictitious self-play (PFSP)} mechanism that adapts the mixture probabilities proportionally to the win rate of each opponent against the agent; this provides our agent with more opportunities to overcome the most problematic opponents.
With fixed probability, a main agent is selected as an opponent; this recovers the rapid learning of self-play.
Second, main exploiter agents play only against the current iteration of main agents.
Their purpose is to identify potential exploits in the main agents; the main agents are thereby encouraged to address their weaknesses.
Third, league exploiter agents use a similar PFSP mechanism to the main agents, but are not targeted by main exploiter agents.
Their purpose is to find systemic weaknesses of the entire league.
Both main exploiters and league exploiters are periodically reinitialized to encourage more diversity and may rapidly discover specialist strategies that are not necessarily robust against exploitation.

\subsection{Ensemble GANs}

\citet{Moghadam_2021} survey various GAN variants that use multiple generators and/or discriminators.
The variants with one generator and multiple discriminators are
GMAN~\citep{Durugkar_2017},
D2GAN~\citep{Nguyen_2017},
FakeGAN~\citep{Aghakhani_2018},
MD-GAN~\citep{Hardy_2019},
DDL-GAN~\citep{Jin_2020},
and Microbatch-GAN~\citep{Mordido_2020}.
The variants with multiple generators and one discriminators are
MPM-GAN~\citep{Ghosh_2016},
MAD-GAN~\citep{Ghosh_2018},
M-GAN~\citep{Hoang_2018},
Stackelberg-GAN~\citep{Zhang_2018},
and MADGAN~\citep{Ke_2022}.
The variants with multiple generators and multiple discriminators are
MIX+GAN~\citep{Arora_2017},
FedGAN~\citep{Rasouli_2020},
and another version of MADGAN~\citep{Ke_2022}.

For GMAN, the generator G trains using feedback aggregated over multiple discriminators under a function F.
If F = max, G trains against the best discriminator.
If F = mean, G trains against a uniform ensemble.
The authors note that training against a far superior discriminator can impede the generator's learning.
This is because the generator is unlikely to generate any samples considered ``realistic'' by the discriminator's standards, and so the generator will receive uniformly negative feedback.
For this reason, they explore alternative functions that soften the max operator, including one parameterized by \(\beta\) in such a way that \(\beta = 0\) yields the mean and \(\beta \to \infty\) yields the max: \(f_\beta(a) = \sum_i a_i w_i\) where \(w = \operatorname{softmax}(\beta a)\).
At the beginning of training, one can set \(\beta\) closer to zero to use the mean, increasing the odds of providing constructive feedback to the generator.
In addition, the discriminators have the added benefit of functioning as an ensemble, reducing the variance of the feedback presented to the generator, which is especially important when the discriminators are far from optimal and are still learning a reasonable decision boundary.
As training progresses and the discriminators improve, one can increase \(\beta\) to become more critical of the generator for more refined training.

The authors also explore an approach that enables the generator to automatically temper the performance of the discriminator when necessary, but still encourages the generator to challenge itself against more accurate adversaries.
This is done by augmenting the generator objective in a way that incentivizes it to increase \(\beta\) to reduce its objective, at the expense of competing against the best available adversary.

\section{Future research}
\label{sec:future}

Here, we discuss some possible extensions of our methods and directions for future research.

\paragraph{Analyzing strategies via samples}

If a player's strategy takes the form of a complex deep generative model, it might be difficult for the best-response function to ``make sense of'' its parameters as input.
However, what really matters from the perspective of best response computation is the distribution the model \emph{generates} on the action space, that is, its \emph{extrinsic} behavior.
Thus, we could instead feed a \emph{batch of action samples} from the model to the best-response function, letting the latter ``analyze'' the former in some fashion.
If the batch size is large enough, the best-response function could identify the features of the distribution that are relevant to recommending a good best response.
The batch of samples can be concatenated together and fed as input into a neural network.
This is the approach taken by PacGAN~\citep{Lin_2020} in the context of GANs, under the name of ``packing''.
A similar approach is also described in \citet[\S 3.2]{salimans2016improved} under the name of ``minibatch discrimination''.
It is also possible to use a more sophisticated neural architecture that exploits the permutation symmetry of the batch of samples~\citep{xu2018how}.

\paragraph{Learning to query strategies}

If players' strategies take additional inputs, such as observations or information sets, that the best-response function is not privy to as a player during a round of play, the best-response function can still analyze the strategies purely through their input-output behavior.
It would have to \emph{learn to query} the players' strategy with hypothetical observations or information sets, given its own.
The best-response function could also exploit the special structure of an extensive-form game by performing some form of \emph{local game tree analysis} of the strategy profile in the neighborhood of an information set.
\citet{Timbers_2022} apply a similar concept in the extensive-form setting to compute approximate best responses.

\paragraph{Dynamically-sized ensembles}

One possible extension of our method would be to dynamically adjust the size of the ensembles as training progresses.
One could start with ensembles of size 1 and, when the exploitability appears to stabilize for some period of time, add a new strategy to each ensemble, \textit{etc}.
One can either retain the previously-trained ensemble (a warm start) or reset it and start from scratch (a cold start).
Though the latter may seem inferior, it may be well-suited to games for which \(k\)-support best responses are drastically different from \(k+1\)-support best responses (where \(k\)-support means being supported on at most \(k\) pure strategies or actions).
Our experiments on the generalized rock paper scissors games showed that increasing the ensemble size helped up to a certain point, depending on the number of pure actions.

\paragraph{Black-box games}

In this paper, we assumed that the utility functions of the players are differentiable with respect to the strategy profile parameters, and that we have access to those gradients.
If these assumptions do not hold, we can use black-box smoothed gradient estimators~\citep{Duchi12:Randomized,Duchi_2015,Nesterov_2017,Shamir_2017,Berahas_2022}, such as natural evolution strategies~\citep{Wierstra_2008,Sun_2009,Wierstra_2014,Salimans_2017}.

\section{Code}

We use Python 3.12.2 with the following libraries:
\begin{itemize}
    \item \texttt{jax} 0.4.28 \citep{jax2018github}: \url{https://github.com/google/jax}
    \item \texttt{flax} 0.8.3 \citep{flax2020github}: \url{https://github.com/google/flax}
    \item \texttt{optax} 0.2.2 \citep{deepmind2020jax}: \url{https://github.com/google-deepmind/optax}
    \item \texttt{matplotlib} 3.8.4 \citep{Hunter:2007}: \url{https://github.com/matplotlib/matplotlib}
\end{itemize}
An implementation of our methods is shown below.

\lstinputlisting[
language=python,
breaklines=true,
basicstyle=\ttfamily\scriptsize,
postbreak=\mbox{\textcolor{red}{\(\hookrightarrow\)}\space},
breakindent=0em,
breakatwhitespace=true,
]{code.py}

\end{document}